\newtheorem{thm}{Theorem}
\newtheorem{lem}{Lemma}
\newtheorem{const}{Construction}
\newtheorem{exam}{Example}
\newtheorem{rem}{Remark}
\def\0{{\mathbf 0}}
\newcommand{\F}{\mathbb{F}}
\newcommand{\Z}{\mathbb{Z}}
\begin{document}

\sloppy

\title{New families of quantum stabilizer codes from Hermitian self-orthogonal algebraic geometry codes}

\author
{
{
Lin Sok \thanks{This research work is supported by Anhui Provincial Natural Science Foundation with grant number 1908085MA04.

Lin Sok is with School of Mathematical Sciences, Anhui University,  230601 Anhui,  P. R. China
$\&$ Department of Mathematics, Royal University of Phnom Penh, 12156 Phnom Penh
(email: soklin\_heng@yahoo.com). }
}
}

\date{}



\maketitle


\begin{abstract}
There has been a lot of effort to construct good quantum codes from the classical error correcting codes. Constructing new quantum codes, using Hermitian self-orthogonal codes, seems to be a difficult problem in general. In this paper, Hermitian self-orthogonal codes are studied from algebraic function fields. Sufficient conditions for the Hermitian self-orthogonality of an algebraic geometry code are presented. New Hermitian self-orthogonal codes are constructed from projective lines, elliptic curves, hyper-elliptic curves, Hermitian curves, and Artin-Schreier curves. In addition, over the projective lines, we construct new families of MDS quantum codes with parameters $[[N,N-2K,K+1]]_q$ under the following conditions: 
i) $N=t(q-1)+1$ or $t(q-1)+2$ with $t|(q+1)$ and $K=\lfloor\frac{t(q-1)+1}{2t}\rfloor+1$; 
ii) $(n-1)|(q^2-1)$, $N=n$ or $N=n+1$, $K_0=\lfloor\frac{n+q-1}{q+1}\rfloor$, and $K\ge K_0+1$; 
iii) $N=tq+1$, $\forall~1\le t\le q$ and $K=\lfloor\frac{tq+q-1}{q+1}\rfloor+1$; 
iv) $n|(q^2-1)$, $n_2=\frac{n}{\gcd (n,q+1)}$, $\forall~ 1\le t\le \frac{q-1}{n_2}-1$, $N=(t+1)n+2$ and $K=\lfloor \frac{(t+1)n+1+q-1}{q+1}\rfloor+1$.

\end{abstract}
{\bf Keywords:} MDS code, self-orthogonal code, algebraic curve, algebraic geometry code, quantum code\\

\section{Introduction}
Quantum information and quantum computation are one of the hot research topics. Quantum codes have been of great interest to many researchers after the earliest work \cite{Cal Rai} followed by \cite{Ash Kni,Ket Kla}. In \cite{Ash Kni,Ket Kla}, they introduced some method to derive non-binary quantum codes from the classical error correcting codes. More specifically, their constructions use some properties of Euclidean, Hermitian, and symplectic self-orthogonality. There are two well-known types of construction of quantum codes from the classical ones: the CSS construction which employs Euclidean self-orthogonality, and the other one which uses Hermitian self-orthogonality.

Algebraic geometry codes were discovered in 1980 by Goppa. Goppa showed in his paper \cite{Goppa} how to construct linear codes from algebraic curves over a finite field. The parameters of an algebraic geometry code can be computed via the degree of the divisors associated to the code. It is well known that algebraic geometry (AG) codes have asymptotically good parameters, and it was the first time that linear codes improved the so-called Gilbert-Vasharmov bound. Algebraic geometry codes are one of the good candidates for constructing self-orthogonal codes with respect to both Euclidean and Hermitian inner products. Euclidean self-orthogonal AG codes were studied by Stichtenoth \cite{Stich88} and Driencourt {\em et al.} \cite{DriSti}, where they determined the Euclidean dual and also characterized such codes. To the best of our knowledge, there is no formula to calculate the Hermitian dual of an AG code as in the Euclidean case, and thus constructing Hermitian self-orthogonal AG codes is a difficult problem in general.

Quantum MDS codes form an optimal family of quantum codes. It is very well known that any $q$-ary quantum stabilizer code, derived from an MDS $q$-ary linear code, is again an MDS code. It should be noted that the length of a $q$-ary quantum stabilizer code can not exceed $q^2+1$ if the classical MDS conjecture holds, and thus $q$-ary quantum stabilizer codes can exist in very restricted conditions on their lengths and their dimensions. Reed-Solomon codes are a class of MDS linear codes, and most of the known MDS quantum stabilizer codes were constructed, using these codes. Constructing $q$-ary (MDS) quantum stabilizer codes from Hermitian self-orthogonal codes has been a challenge problem in the literature since the methods, developed for Euclidean case, may not work anymore. The fact is that in the Hermitian inner product, one has to consider the $q$-th power in the summand. To construct new MDS Hermitian self-orthogonal codes over $\F_{q^2}$ from the generalized Reed-Solomon codes, the authors \cite{JinXing14} considered a system of homogenous equations over $\F_{q^2}$ with some restricted solutions in $\F_q$ and found some suitable evaluation points.

The problem of constructing quantum MDS codes with length $n\le q+1$ was completely solved in \cite{GraBet,Gra Bet2}. Other known families of $q$-ary MDS quantum stabilizer codes with parameters $[[n,n-2d+2,d]]$ can be summarized in Table \ref{table:known}.\\
\begin{table}[tth]
\caption{Some known results}
\label{table:known}
\begin{enumerate}
 \item $n=q^2+1$ and $d=q+1$ (see \cite{Li Xin Wan2})\\
\item $n=q^2+1$ and $ d\le q+1$ for even $q$ and odd $d$  (see \cite{Gu11})\\
\item $n=q^2+1$ and $d\le q+1$ for $q\equiv 1\mod{4}$ and even $d$ (see \cite{KZ12})\\
\item $n= q^2$ and $d\leq q$ (see \cite{Gra Bet2} and \cite{Li Xin Wan2,LinLingLuoXin})\\
\item $n=(q^2+1)/2$ and $q/2+1< d\le q$ for odd $q$ (see \cite{KZ12})\\
\item $n=q^2-s$ and $q/2+1<d\le q-s$ for $0\le s<q/2-1$,\\
$n=(q^2+1)/2-s$ and $q/2+1<d\le q-s$ for $0\le s<q/2-1$ (by 2), 3) and the puncturing rule \cite{GraRot}, see also \cite{FLX06});\\
\item $n=q^2+1$ and $d\le q-1$ or $d=q+1$,\\
$n=r(q-1)+1$ and $d\le (q+r+1)/2$ for $q\equiv r-1\mod{2r}$,\\
$n=(q^2+2)/3$ for $3|(q+1)$, $d\le (2q+2)/3$ (see \cite{JinXing14})\\
\item $n=tq$, $1\le t\le q$ and $2\le d\le \lfloor\frac{tq+q-1}{q+1} \rfloor +1$,\\
$n=t(q+1)+2$, $1\le t\le q-1$ and $2\le d\le t +2$, $(p,t,d)\not=(2,q-1,q)$ (See \cite{FangFu})
\end{enumerate}
\end{table}
In this work, we provide sufficient conditions for an AG code to be Hermitian self-orthogonal. The critical problem in our sufficient conditions is to choose some sets of evaluation points with some desired properties. Such sets can be obtained from some special subsets of $\F_{q^2}$ as well as from some multiplicative subgroups of $\F_{q^2}^*$ and their cosets. We explore, via such sets, the construction of MDS Hermitian self-orthogonal codes from projective lines, and then the non-MDS codes from other algebraic curves. Over the projective lines, we present a method for embedding an MDS Hermitian self-orthogonal AG code of dimension $k$ into an MDS Hermitian self-orthogonal code of dimension $ k+1$. For some special length $n$ such that $(n-1)|(q^2-1)$, we provide a recursive construction of an MDS Hermitian self-orthogonal code with dimension $k'+1$, from an MDS Hermitian self-orthogonal code of dimension $k'$, under the very simple conditions of divisibility $(n-1){|}k'(q+1)$ or $(n-1){\not|}k'(q+1)$ with $k'$ satisfying $2k'+q\le n$. We obtain several new families of both MDS and non-MDS $q$-ary quantum codes with good parameters, for instance, some new $q$-ary MDS quantum codes over 
$\F_q, q=13,17,19,23,25,27$, with parameters 
$[[25,11,8]]_{13}$,
$[[33,15,10]]_{17}$,
$[[38,18,11]]_{19}$,
$[[46,22,13]]_{23}$,
$[[49,23,14]]_{25}$,
$[[54,26,15]]_{27}$
as well as some new codes with parameters 
$[[24,18,3]]_4$,
$[[80,64,8]]_8$,
$[[95,89,3]]_5$, 
$[[95,87,3]]_5$, 
$[[95,83,4]]_5$, 
${ [[91,81,4]]_7}$, 
${ [[176,168,3]]_8}$,
${ [[369,361,3]]_9}$, where all of these parameters improve those in the database \cite{Data-BieEde}. Among the families of MDS quantum codes obtained, one family contains $q$-ary quantum codes with large minimum distances $d=\lfloor \frac{n}{2t}\rfloor+2$ with $t|(q+1)$, where $n=t(q-1)+1$ or $n=t(q-1)+2$ is the code length. 
Our quantum codes, constructed from elliptic curves, improve the code lengths from $q+\lfloor \sqrt{2q}\rfloor -5 $ \cite{JinXing14} to $q^2+ {2q}$. From hyper-elliptic curves, we obtain a family of codes with intermediate lengths which are new compared to \cite{Jin}. 
Our codes from Artin-Schreier curves are relatively new since they have not been considered elsewhere in the literature.

We summarize the contribution of our work in Theorem \ref{thm:Q-all}.
\begin{thm} \label{thm:Q-all} The exsitence of quantum codes is given as follows.
\begin{enumerate}
\item Let $q=p^m$, $N=t(q-1)+1$ with $t|(q+1)$, and $k=\lfloor \frac{N}{2t}\rfloor$. Then
\begin{enumerate}
\item there exists an MDS quantum code with parameters $[[N+1,N-2k-1,k+2]]_q$ if $(N-1)|k(q+1)$;
\item there exists an MDS quantum code with parameters $[[N,N-2k-1,k+2]]_q$ if $(N-1){\not|}k(q+1)$.
\end{enumerate}

\item Let $q=p^m$ and $1\le k\le \lfloor \frac{N+q-1}{q+1} \rfloor$. Assume that one of the following condition holds:
\begin{enumerate}[i)]
\item $(N-1)|(q^2-1)$, 
\item $N=(t+1)n+1$, $n|(q^2-1)$, $n_2=\frac{n}{\gcd (n,q+1)}$, $1\le t\le \frac{q-1}{n_2}-1$,
\item $N=tq$, $1\le t\le q$.
\end{enumerate} 
Then
\begin{enumerate}
\item there exists an MDS quantum code with parameters $[[N,N-2k,k+1]]_q$;
\item there exists an MDS quantum code with parameters $[[N+1,N-2k-1,k+2]]_q$ if $k(q+1)=N-1$;
\item there exists an MDS quantum code with parameters $[[N,N-2k,k+2]]_q$ if $N-1+q\not=k(q+1)\not=N-1$.
\end{enumerate}
Moreover, for  $k_0=\lfloor\frac{n+q-1}{q+1}\rfloor$ and $(n-1)|(q^2-1)$, there exists an MDS quantum code with parameters $[[n',n'-2k,k+1]]_q$, where $n'=n$ or $n'=n+1$ and $k\ge k_0+2$, if $(n-1){\not|}k_0(q+1)$, $(n-1){\not|}(k_0+1)(q+1)$ and $2k_0+q\le n$.

\item Let $q=2^s$ and $2\le k\le \lfloor \frac{2n+1+q}{q+1} \rfloor$. Then
there exists a quantum code with parameters $[2n,2n-2k+2,k-1]_{q}$  if {\bf Assumption 1} holds.
\item Let $q=2^m, m\ge 2$ and $1+q/2\le k\le \lfloor \frac{2N+2q-1}{q+1} \rfloor$. Then
\begin{enumerate}
\item for $N=q^2$, there exists a quantum code with parameters $[[2N,2N-2k+q,\ge k-q+1]]_{q}$;
\item for $(N-1)|(q^2-1)$, there exists a quantum code with parameters $[[2N,2N-2k+q,\ge k-q+1]]_{q}$.
\end{enumerate}

\item 
 Let $q=p^m\ge 4$ and $1+q(q-1)/2\le k\le \lfloor \frac{Nq+q^2-1}{q+1} \rfloor$. Then, there exists a quantum code with parameters $[[Nq,Nq-2k+q(q-1), \ge k+1-q(q-1) ]]_{q}$ if one of the following conditions holds:
\begin{enumerate}[i)]
\item $(N-1)|(q^2-1)$, 
\item $N=(t+1)n+1$, $n|(q^2-1)$, $n_2=\frac{n}{\gcd (n,q+1)}$, $1\le t\le \frac{q-1}{n_2}-1$,
\item $N=tq$, $1\le t\le q$.
\end{enumerate} 

\item  Let $q=p^m\ge 4$ and $1+\frac{(q-1)^2}{4}\le k\le \lfloor \frac{s+(q-1)^2/2+q-1}{q+1} \rfloor$. Then, there exists a quantum code with parameters $[[s,s-2k+\frac{(q-1)^2}{2}, \ge k+1-\frac{(q-1)^2}{2} ]]_{q}$.
\item Let $q=p^m\ge 4$ and $t$ be a positive integer such that $\gcd(t,(q+1))|\frac{q+1}{2}$. Put $s=\gcd(t(q-1),(q+1)(q-1))$ and $1+(t-1)(q-1)/2\le k\le \lfloor \frac{(s+1)q+t(q-1)}{q+1} \rfloor$. Then, there exists a quantum code with parameters $[[(s+1)q,(s+1)q-2k+(t-1)(q-1), \ge k+1-(t-1)(q-1) ]]_{q}$ if one of the following condition holds
\begin{enumerate}[i)]
\item $q$ being odd and $t$ being any positive integer;
\item $q$ being even and $t$ being an odd positive integer.
\end{enumerate} 

\end{enumerate}
\end{thm}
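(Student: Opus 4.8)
The plan is to deduce Theorem~\ref{thm:Q-all} from two ingredients: a \emph{bridge lemma} that turns a classical Hermitian self-orthogonal code into a quantum code, and the explicit classical constructions carried out curve by curve in the body of the paper. The bridge is the standard Hermitian (CSS-type) construction of \cite{Ash Kni,Ket Kla}: if $C\subseteq\F_{q^2}^n$ is an $[n,\kappa]_{q^2}$ linear code with $C\subseteq C^{\perp_H}$, then there is a quantum code with parameters $[[n,\,n-2\kappa,\,d]]_q$, where $d$ is the minimum weight of $C^{\perp_H}\setminus C$; for a pure code $d=d(C^{\perp_H})$. Thus the entire theorem reduces to producing, for each listed parameter set, a classical Hermitian self-orthogonal code of the prescribed length and dimension together with a (lower) bound on $d(C^{\perp_H})$. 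In the MDS items this bound is exact, $d(C^{\perp_H})=\kappa+1$, because the Hermitian dual of an MDS code is again MDS; elsewhere it is the Goppa designed-distance bound, which is why those items carry the inequality ``$\ge$''.

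For items (1) and (2) I would work on the projective line, i.e.\ with generalized Reed--Solomon codes, where the genus is $0$ and the codes are therefore MDS; the bridge lemma then yields $[[N,N-2k,k+1]]_q$ and its variants directly. The several length/dimension pairs ($N$ versus $N+1$, and the increment $k\mapsto k+1$) are produced by first establishing Hermitian self-orthogonality at a base dimension through the genus-$0$ specialization of the self-orthogonality criterion, and then invoking the embedding lemma (which raises the dimension from $k$ to $k+1$ while preserving self-orthogonality) together with the recursive construction announced in the introduction. The hypotheses of that recursion are exactly the stated divisibility alternatives $(N-1)\mid k(q+1)$ versus $(N-1)\nmid k(q+1)$, and, for the ``Moreover'' statement, the conditions $(N-1)\nmid k_0(q+1)$, $(N-1)\nmid(k_0+1)(q+1)$ and $2k_0+q\le n$.

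For items (3)--(8) I would instead evaluate functions on curves of positive genus $g$ — elliptic, hyper-elliptic, the Hermitian curve, and Artin--Schreier curves — where the codes are no longer MDS and the genus appears as a defect $2g$ in both quantum parameters. Indeed, Riemann--Roch forces the classical dimension to sit $g$ below the nominal parameter $k$ (so the quantum dimension becomes $n-2k+2g$), while the Goppa bound on $d(C^{\perp_H})$ loses $2g$ against the MDS value, giving the estimates ``$\ge k+1-2g$''. Matching $2g$ to each family — $q(q-1)$ for the Hermitian curve, $(q-1)^2/2$ and $(t-1)(q-1)$ for the hyper-elliptic and Artin--Schreier curves — reproduces the stated parameters; the even-characteristic hypotheses and the arithmetic condition $\gcd(t,q+1)\mid\frac{q+1}{2}$ are precisely what make the defining equations and evaluation sets compatible with the Hermitian (Frobenius) twist, and the items of doubled length $2n$, $2N$ use in addition a length-doubling of the base construction.

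The main obstacle throughout is the classical Hermitian self-orthogonality, not the quantum translation. Concretely one must exhibit evaluation-point sets — drawn from distinguished subsets of $\F_{q^2}$, from multiplicative subgroups of $\F_{q^2}^*$, and from their cosets, as indicated in the introduction — for which the defining vanishing sums $\sum_i u_i\,\alpha_i^{\,s}=0$ hold over the required range of exponents $s$; equivalently, the divisor inequality relating $G$, its conjugate $\sigma(G)$ and $D=P_1+\cdots+P_n$ must be satisfied. On the projective line this is a character-sum computation governed by the divisibility conditions of items (1)--(2); on the positive-genus curves it additionally requires controlling the Riemann--Roch space $L(G)$ and the local behaviour of a suitable differential so that the self-orthogonality relation holds and the designed distance is as claimed. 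Once these point sets and divisors are in place, every item of the theorem follows by substitution into the bridge lemma.
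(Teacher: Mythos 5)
Your overall architecture matches the paper's exactly: the paper proves this theorem by combining a quantum bridge lemma (its Lemma~\ref{lem:H-construction}, quoted from \cite{Ash Kni,Ket Kla}) with classical Hermitian self-orthogonal AG codes built curve by curve (Constructions~\ref{thm:MDS1}--\ref{thm:other-curves-even}), and your genus bookkeeping is the paper's as well --- classical dimension $k-g$ by Riemann--Roch, dual distance $\ge k+1-2g$ by the Goppa bound, hence quantum parameters $[[n,\,n-2k+2g,\,\ge k+1-2g]]_q$, with $2g=2,\,q,\,q(q-1),\,(q-1)^2/2,\,(t-1)(q-1)$ for the respective families. Your treatment of the MDS items, of the embedding lemma, and of where Assumption~1 and the condition $\gcd(t,q+1)\mid\frac{q+1}{2}$ enter is likewise consistent with the paper.

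The genuine gap is in item (1), the family with $k=\lfloor N/(2t)\rfloor$. Your stated mechanism --- base dimension from the genus-$0$ self-orthogonality criterion, then the embedding lemma, ``together with the recursive construction announced in the introduction'' --- does not reach this dimension and, as written, is circular. The genus-$0$ criterion (Theorem~\ref{thm:key}) caps the base dimension at $\lfloor\frac{N+q-1}{q+1}\rfloor$, which is roughly $t$, whereas $\lfloor N/(2t)\rfloor$ is roughly $(q-1)/2$; the embedding lemma raises the dimension by one, and the paper explicitly states it does not know when that step can be iterated; and appealing to the recursion ``announced in the introduction'' assumes exactly what is to be proved, since Theorem~\ref{thm:Q-all} \emph{is} the summary of those announcements. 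The missing idea is the paper's conversion of Hermitian inner products into Euclidean ones over the evaluation set $U=\{\alpha:\alpha^n=\alpha\}$: setting $w_l=v_l^{(q+1)/2}$ one first gets a \emph{Euclidean} self-orthogonal code of dimension up to about $n/2$, and then, because $\alpha_l^{\,n}=\alpha_l$ and $t\mid(q+1)$, one writes $(q+1)(j-1)=A(n-1)+B(j)$ with $0\le B(j)<n-1$ so that
\begin{equation*}
\langle g_i,g_j\rangle_H=\sum_{l=1}^{n}v_l^{q+1}\alpha_l^{(i-1)+q(j-1)}
\end{equation*}
collapses to one of the vanishing Euclidean power sums whenever $i,j\le\lfloor n/(2t)\rfloor$ (Construction~\ref{thm:MDS1}, cases 2) and 3)). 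This is what produces Hermitian self-orthogonality at dimension $\lfloor N/(2t)\rfloor$ in a single stroke; only afterwards is one embedding step applied, with the dichotomy $(N-1)\mid k(q+1)$ versus $(N-1)\nmid k(q+1)$ deciding whether the length becomes $N+1$ or stays $N$. The same mechanism (the ``general case'' of Construction~\ref{thm:MDS1}) underlies the ``Moreover'' clause of item (2), so that part of your plan inherits the same gap. Two smaller slips: the theorem has seven items, not eight, and the curve responsible for the $2g=(q-1)^2/2$ family is $y^q+y=x^{(q+1)/2}$ (a quotient-of-Hermitian maximal curve, Construction~\ref{thm:hermitian-3}), not the hyper-elliptic curve, which is item (4) with $2g=q$.
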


The paper is organized as follows: Section \ref{section:pre} gives preliminaries and background on algebraic geometry codes. Section \ref{section:constructions} provides sufficient conditions for a special AG code to be Hermitian self-orthogonal and gives construction methods for such codes. Hermitian self-orthogonal codes are constructed from projective lines, elliptic curves, hyper-elliptic curves, Hermitian curves, and Artin-Schreier curves. We give an application to construct quantum stabilizer codes in Section \ref{section:application}.

\section{Preliminaries}\label{section:pre}

The finite field with $q$ elements is denoted by $\F_q$. We write parameters $[n,k,d]_q$ for a linear code  over ${{\mathbb F}_q}$ of length $n$, dimension $k$, and minimum distance $d$. For an $[n,k,d]_q$ linear code, the Singleton bound gives the constraint among the code parameters as follows: $$d\le n-k+1.$$
A code meeting the above bound is called {\em Maximum Distance Separable} ({MDS}). A code is called {\em almost} MDS if its minimum distance is one unit less than the MDS case. 
A code is called {\it optimal} if it has the highest possible minimum distance for its length and dimension.

The {\em Euclidean} (resp. {\em Hermitian}) inner product of ${\bf{x}}=(x_1,
\dots, x_n)$ and ${\bf{y}}=(y_1, \dots, y_n)$ in ${\mathbb F}_q^n$ (resp. in $\F_{q^2}^n$) is defined by
$$<{\bf{x}},{\bf{y}}>_E=\sum_{i=1}^n x_i y_i~(\text{resp.} \\
<{\bf{x}},{\bf{y}}>_H=\sum_{i=1}^n x_i y_i^q).\\
$$ The Euclidean (resp. Hermitian) {\em dual} of $C$,
denoted by $C^{\perp}$ (resp. $C^{\perp_H}$), is the set of vectors orthogonal to every
codeword of $C$ under the Euclidean (resp. Hermitian) inner product. 
A linear code $C$ is called Euclidean (resp. Hermitian) self-orthogonal if $C\subseteq C^\perp$ (resp. $C\subseteq C^{\perp_H}$). For ${\bf x}=(x_1,\hdots,x_n)\in \F_q^n$, we write ${\bf x}^q:=(x_1^q,\hdots,x_n^q)$, and denote $C^q:=\{{\bf c}^q|{\bf c}\in C\}.$
Then it is easy to see that 
\begin{equation}\label{eq:E-H-dual}
C^{\perp_H}=(C^q)^\perp=\left(C^\perp\right)^q.
\end{equation}
For a linear code $C\subseteq \F_q^n$ and ${\bf a}=(a_1,\hdots,a_n)\in (\F_q^*)^n$, we define

\begin{equation}\label{eq:equivalent-code}
{\bf a}\cdot C:=\{{\bf a} \cdot {\bf c}| {\bf c}\in C\},
\end{equation}
where ${\bf a\cdot c}=(a_1c_1,\hdots,a_nc_n)$.
It can be easily checked that ${\bf a}\cdot C$ is a linear code if and only if $C$ is a linear code. Moreover, the codes $C$ and ${\bf a}\cdot C$ have the same dimension, minimum Hamming distance, and weight distribution.

We refer to Stichtenoth \cite{Stich} for undefined terms related to algebraic function fields. 

Let ${\cal X}$ be a smooth projective curve of genus $g$ over $\F_q.$
The field of rational functions of ${\cal X}$ is denoted by $\F_q({\cal X}).$ Function fields of
algebraic curves over a finite field can be characterized as
finite separable extensions of $\F_q(x)$. We identify points on the curve ${\cal X}$ with places of the
function field $\F_q({\cal X}).$ A point on ${\cal X}$ is called rational if all of
its coordinates belong to $\F_q.$ Rational points can be identified
with places of degree one. We denote the set of $\F_q$-rational
points of ${\cal X}$ by  ${\cal X}(\F_q)$.

A divisor $G$ on the curve ${\cal X}$ is a formal sum $\sum\limits_{P\in {\cal X}}n_PP$ with only finitely many non-zeros $n_P\in \Z$.
The support of $G$ is defined as $supp(G):=\{P|n_P\not=0\}$. The degree of $G$ is defined by $\deg(G):=\sum\limits_{P\in {\cal X}}n_P\deg(P)$. 
For two divisors $G=\sum\limits_{P\in {\cal X}}n_PP$ and $H=\sum\limits_{P\in {\cal X}}m_PP$,  we say that $G\le H$ if $n_P\le m_P$ for all places $P\in {\cal X}$. 

For a nonzero rational function $f$ on the curve $\cal X$, we define the {\em principal} divisor of $f$ as
$$(f):=\sum\limits_{P\in {\cal X}}v_P(f)P,$$ 
where $v_P$ denotes the discrete valuation map \cite[Definition 1.1.9]{Stich}.

If $Z(f)$ (resp. $N(f)$) denotes the set of zeros (resp. poles) of $f$, we define the zero divisor and pole divisor of $f$, respectively by

$$
\begin{array}{c}
(f)_0:=\sum\limits_{P\in Z(f)}v_{P}(f)P,\\
(f)_\infty:=\sum\limits_{P\in N(f)}-v_{P}(f)P.\\
\end{array}
$$
Then, $(f)=(f)_0-(f)_\infty $, and it is well known that the principal divisor $(f)$ has degree $0.$

For a divisor $G$ on the curve $\cal X$, we define
$${\cal L}(G):=\{f\in \F_q({\cal X})\backslash \{0\}|(f)+G\ge 0\}\cup \{0\},$$ 
and 
$${\Omega}(G):=\{\omega\in \Omega\backslash \{0\}|(\omega)-G\ge 0\}\cup \{0\},$$
where $\Omega:=\{fdx|f\in \F_q({\cal X})\}$, the set of differential forms on $\cal X$. It is well known that, for a differential form $\omega$ on $\cal X$, there exists a unique a rational function $f$ on $\cal X$ such that $$\omega=fdt,$$
where $t$ is a local uniformizing parameter. In this case, we define the divisor associated to $\omega$ by $$(\omega)=\sum\limits_{P\in {\cal X}}v_P(\omega)P, $$
where $v_P(\omega):=v_P(f).$

For $n$ pairwise distinct rational points 
$P_1, \ldots, P_n$ on ${\cal X}$ and for two disjoint divisors $D=P_1+\cdots+P_n$ and $G$, 
the algebraic geometry code $C_{\cal L}(D,G)$ and the differential algebraic geometry code $C_\Omega(D,G)$ are defined as the images of the linear maps 
\begin{align*}
ev_{\cal L} :~ &{\cal L}(G)~\longrightarrow~\F_q^n, ~f \mapsto (f(P_1), \ldots, f(P_n) ),\\
ev _{\Omega} :~ &\Omega(G-D)~\longrightarrow~\F_q^n, 
                  \omega \mapsto (\text{Res}_{P_1}(\omega), \ldots, \text{Res}_{P_n}(\omega)),
\end{align*}
respectively, where $\text{Res}_{P}(\omega)$ denotes the residue of $\omega$ at point $P.$

The parameters of an algebraic geometry code $C_{\cal L}(D,G)$ are given as follows.
\begin{thm}\textnormal{\cite[Corollary 2.2.3]{Stich}}\label{thm:distance} Assume that $2g -2 < deg(G) < n.$ Then, the code $C_{\cal L}(D,G)$  has dimension and minimum distance satisfying
\begin{equation}
k=\deg (G)-g+1\text{ and } d\ge n-\deg (G).
\label{eq:distance}
\end{equation}

\end{thm}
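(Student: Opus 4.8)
The plan is to exploit the two hypotheses separately: the inequality $\deg(G) < n$ will force the evaluation map to be injective and simultaneously control the minimum distance, while the inequality $2g - 2 < \deg(G)$ will let the Riemann--Roch theorem pin down the dimension exactly.

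First I would study the linear evaluation map $ev_{\cal L} : {\cal L}(G) \to \F_q^n$, whose image is by definition $C_{\cal L}(D,G)$. Thus $k = \dim {\cal L}(G) - \dim \ker(ev_{\cal L})$. A function $f \in {\cal L}(G)$ lies in the kernel precisely when $f(P_i) = 0$ for every $i$, that is, when $f \in {\cal L}(G - D)$. Since $\deg(G - D) = \deg(G) - n < 0$ by hypothesis, the space ${\cal L}(G - D)$ is trivial, so $ev_{\cal L}$ is injective and $k = \dim {\cal L}(G)$.

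Next I would invoke the Riemann--Roch theorem. Writing $W$ for a canonical divisor, so that $\deg(W) = 2g - 2$, it gives
\[
\dim {\cal L}(G) = \deg(G) - g + 1 + \dim {\cal L}(W - G).
\]
Because $\deg(G) > 2g - 2 = \deg(W)$ forces $\deg(W - G) < 0$, the correction term $\dim {\cal L}(W - G)$ vanishes, and combining with the previous step yields the claimed dimension $k = \deg(G) - g + 1$.

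For the minimum distance I would take any nonzero codeword $ev_{\cal L}(f)$, with $f \neq 0$, of Hamming weight $w$; then $f$ vanishes at exactly $n - w$ of the evaluation points, say $P_{i_1}, \dots, P_{i_{n-w}}$, so $f \in {\cal L}\bigl(G - (P_{i_1} + \cdots + P_{i_{n-w}})\bigr)$. Since $f$ is nonzero this space is nonzero, which forces $\deg(G) - (n-w) \ge 0$, i.e. $w \ge n - \deg(G)$; as this holds for every nonzero codeword, $d \ge n - \deg(G)$. The hard part will be this distance argument rather than the dimension count: one must argue that a nonzero function whose poles are bounded by $G$ cannot vanish at more than $\deg(G)$ of the evaluation points, which is exactly where the degree-$0$ property of principal divisors, hence the nonnegativity of $\deg$ on effective divisors, enters. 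The dimension, by contrast, is an immediate Riemann--Roch computation once the two degree inequalities are in hand.
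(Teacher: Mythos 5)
Your proof is correct and is essentially the canonical argument: the paper itself gives no proof of this theorem, merely citing \cite[Corollary 2.2.3]{Stich}, and your three steps (kernel of $ev_{\cal L}$ equals ${\cal L}(G-D)=\{0\}$ since $\deg(G)<n$; Riemann--Roch with the vanishing of $\dim{\cal L}(W-G)$ since $\deg(G)>2g-2$; and the weight bound via $f\in{\cal L}\bigl(G-(P_{i_1}+\cdots+P_{i_{n-w}})\bigr)\neq\{0\}$) are exactly the argument found in that reference. No gaps; the only cosmetic remark is that both halves are equally routine, so singling out the distance bound as ``the hard part'' overstates it.
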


 For ${\bf a}=(\alpha_1,\hdots,\alpha_n),{\bf v}=(v_1,\hdots,v_n)\in \F_q^n$ such that $\alpha_1,\hdots,\alpha_n$ are all distinct, and $v_1,\hdots,v_n$ are all nonzeros, it is well known that the generalized Reed-Solomon code, defined by 

$$
\begin{array}{ll}
GRS_k({\bf a},{\bf v}):=\{(v_1f(\alpha_1),\hdots, v_nf(\alpha_n))| &f(x)\in \F_q(x),\\ &\deg{f}\le k-1\},\\
\end{array}
$$

 is an MDS code. Furthermore, it is shown 
in \cite[Proposition 2.3.3]{Stich}, that any algebraic geometry code $C_{\cal L}(D,G)$ with $\deg (G)=k-1$ is equal to the generalized Reed-Solomon code $GRS_k({\bf a},{\bf v})$ defined above. Moreover, their parameters are related as follows. For all $1\le i \le n,$\\

$
\begin{cases}
\alpha_i=x(P_i),\\
v_i=u(P_i)
\text{ for some $u(x)\in \F_q(x)$ satisfying }\\
  (u)=(k-1)P_{\infty}-G .\\
\end{cases}
$\\

For $0\le j\le k-1$, the vectors
$$(ux^j(P_1),\hdots,ux^j(P_{n}))=(v_1\alpha_1^j,\hdots,v_{n}\alpha_{n}^j)$$ constitute a basis of $C_{\cal L}(D,G)$, and thus a generator matrix of $C_{\cal L}(D,G)$ can be expressed as

\begin{equation*}
\left(
\begin{array}{cccc}
v_1&v_2&\hdots&v_{n}\\
v_1\alpha_1&v_2\alpha_2&\cdots&v_{n}\alpha_{n}\\
\vdots&\vdots&\cdots&\vdots\\
v_1\alpha_1^{k-2}&v_2\alpha_2^{k-2}&\cdots&v_{n}\alpha_{n}^{k-2}\\
v_1\alpha_1^{k-1}&v_2\alpha_2^{k-1}&\cdots&v_{n}\alpha_{n}^{k-1}\\
\end{array}
\right).
\end{equation*}

The dual of the algebraic geometry code $C_{\cal L}(D,G)$ can be described \cite[Theorem 2.2.8]{Stich} as follows.

\begin{lem}\label{lem:01}With the above notation, the Euclidean dual of $C_{\cal L}(D,G)$ is $C_{\Omega}(D,G)$.
\end{lem}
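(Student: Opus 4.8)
The plan is to establish the equality $C_{\cal L}(D,G)^\perp = C_\Omega(D,G)$ by proving one inclusion together with a matching dimension count, the central analytic tool being the residue theorem, which asserts that $\sum_{P\in{\cal X}}\text{Res}_P(\eta)=0$ for every differential form $\eta$ on the smooth projective curve ${\cal X}$.

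First I would show $C_\Omega(D,G)\subseteq C_{\cal L}(D,G)^\perp$. Fix $f\in{\cal L}(G)$, so that $(f)+G\ge 0$, and $\omega\in\Omega(G-D)$, so that $(\omega)\ge G-D$. The corresponding codewords are $(f(P_1),\dots,f(P_n))$ and $(\text{Res}_{P_1}(\omega),\dots,\text{Res}_{P_n}(\omega))$, so their Euclidean inner product equals $\sum_{i=1}^n f(P_i)\,\text{Res}_{P_i}(\omega)$. I would then analyze the differential $f\omega$. Since $D$ and $G$ are disjoint, $v_{P_i}(G)=0$, so $f$ is regular at each $P_i$ while $\omega$ has at worst a simple pole there; hence $\text{Res}_{P_i}(f\omega)=f(P_i)\,\text{Res}_{P_i}(\omega)$. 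Away from $\text{supp}(D)$ one has $(f)\ge -G$ and $(\omega)\ge G$, so $(f\omega)=(f)+(\omega)\ge 0$ and $f\omega$ is holomorphic with vanishing residue at every such point. The residue theorem then forces $\sum_{i=1}^n f(P_i)\,\text{Res}_{P_i}(\omega)=0$, which is precisely the desired orthogonality.

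To upgrade this inclusion to an equality I would match dimensions, invoking the hypothesis $2g-2<\deg(G)<n$ underlying Theorem~\ref{thm:distance}. That theorem gives $\dim C_{\cal L}(D,G)=\deg(G)-g+1$, so it suffices to show $\dim C_\Omega(D,G)=n-\deg(G)+g-1$. Writing $W$ for a canonical divisor and using $\dim\Omega(A)=\ell(W-A)$ together with Riemann-Roch, I would compute $\dim\Omega(G-D)=\ell(W-G+D)=n+g-1-\deg(G)$, where the correction term $\ell(G-D)$ vanishes because $\deg(G-D)=\deg(G)-n<0$. The kernel of $ev_{\Omega}$ consists of those $\omega$ whose residues all vanish; since every $\omega\in\Omega(G-D)$ has at most a simple pole at each $P_i$, a vanishing residue forces regularity there, so the kernel is exactly $\Omega(G)$, which is trivial because $\deg(W-G)<0$ when $\deg(G)>2g-2$. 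Thus $ev_{\Omega}$ is injective and $\dim C_\Omega(D,G)=n+g-1-\deg(G)=n-\dim C_{\cal L}(D,G)$, closing the argument.

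The main obstacle I anticipate is the careful local bookkeeping of the divisor $(f\omega)$: one must simultaneously verify that $f\omega$ has only simple poles, located exactly on $\text{supp}(D)$, with residues $f(P_i)\,\text{Res}_{P_i}(\omega)$, and that it is holomorphic everywhere else, so that the residue theorem applies cleanly. Once this pole analysis is secured, the dimension count is routine, resting only on the Riemann-Roch input and the triviality of $\Omega(G)$ in the admissible degree range.
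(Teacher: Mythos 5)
Your overall strategy --- proving the inclusion $C_\Omega(D,G)\subseteq C_{\cal L}(D,G)^\perp$ via the residue theorem and then upgrading to equality by a dimension count --- is exactly the standard argument: the paper gives no proof of its own for Lemma \ref{lem:01} but cites \cite[Theorem 2.2.8]{Stich}, whose proof runs along these same lines. Your inclusion step is complete and correct as written (the pole bookkeeping for $f\omega$ is exactly right), and note that it requires no hypothesis on $\deg(G)$ whatsoever.

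The genuine gap is in the second half: the lemma, both as stated in the paper and in \cite{Stich}, carries no degree restriction, yet your dimension count is only valid when $2g-2<\deg(G)<n$. You invoke this hypothesis three times: to get $\dim C_{\cal L}(D,G)=\deg(G)-g+1$ from Theorem \ref{thm:distance}, to kill the correction term $\ell(G-D)$ in the Riemann--Roch computation, and to conclude $\Omega(G)=0$ so that $ev_\Omega$ is injective. None of these is available when $\deg(G)\le 2g-2$, and that range genuinely occurs in this paper: in Theorem \ref{thm:key} and its higher-genus applications (hyper-elliptic, Hermitian, Artin--Schreier constructions) one only has $\deg(G)=k-1\ge g$, which for $g\ge 2$ need not exceed $2g-2$, so your version of the lemma would not cover all the places where it is used. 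The unconditional proof replaces your vanishing arguments by bookkeeping with indices of specialty: writing $i(A):=\dim\Omega(A)=\ell(W-A)$ for a canonical divisor $W$, one has $\dim C_{\cal L}(D,G)=\ell(G)-\ell(G-D)$ (the kernel of $ev_{\cal L}$ is ${\cal L}(G-D)$) and $\dim C_\Omega(D,G)=i(G-D)-i(G)$ (your own observation that a residue-free $\omega\in\Omega(G-D)$ lies in $\Omega(G)$ identifies the kernel of $ev_\Omega$, but one subtracts its dimension rather than proving it is zero); two applications of Riemann--Roch then give $\dim C_{\cal L}(D,G)+\dim C_\Omega(D,G)=\bigl(\deg G+1-g\bigr)-\bigl(\deg G-n+1-g\bigr)=n$ with no assumption on $\deg(G)$, and combined with your inclusion this yields the lemma in full generality.
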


Moreover, the differential code $C_{\Omega}(D,G)$ is determined in \cite[Proposition 2.2.10]{Stich} as follows.
\begin{lem}\label{lem:1}With the above notation, 
$C_{\Omega}(D,G)={\bf e}\cdot C_{\cal L}(D,D-G+(\omega))$ for some differential function $\omega$ satisfying $v_{P_i}(\omega)=-1$ for $1\le i \le n$ and ${\bf e}=(\text{Res}_{P_i}(\omega),\hdots, \text{Res}_{P_n}(\omega))$. 
\end{lem}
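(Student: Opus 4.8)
\medskip
\noindent\emph{Proof proposal.} The plan is to produce the asserted equality by exhibiting an explicit linear isomorphism $\Omega(G-D)\xrightarrow{\sim}{\cal L}(D-G+(\omega))$, namely division by a fixed differential $\omega$, and then checking that this isomorphism carries the residue map $ev_\Omega$ onto a coordinatewise rescaling of the evaluation map $ev_{\cal L}$. The whole argument rests on first choosing $\omega$ so that it has a \emph{simple} pole at every one of the points $P_1,\dots,P_n$; granting such an $\omega$, the divisor bookkeeping and the local residue computation are routine.

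First I would secure the differential. I seek a Weil differential $\omega$ with $v_{P_i}(\omega)=-1$ for all $i$, equivalently with nonzero residue at each $P_i$. Working inside $\Omega(-D)$ (differentials with at most simple poles on $D=P_1+\cdots+P_n$ and regular elsewhere), Riemann--Roch gives $\dim\Omega(-D)=\ell(W+D)=g+n-1$, where $W$ is canonical. The residue map $\omega\mapsto(\text{Res}_{P_1}(\omega),\dots,\text{Res}_{P_n}(\omega))$ on this space has kernel exactly the holomorphic differentials $\Omega(0)$ of dimension $g$ (a genuine simple pole has nonzero residue), so its image has dimension $n-1$; by the Residue Theorem the image lies in the hyperplane $\{\sum_i x_i=0\}$, and by dimensions it \emph{equals} that hyperplane. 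To be sure of a residue vector with all coordinates nonzero --- which over a small field the hyperplane $\{\sum_i x_i=0\}$ may not contain --- I would allow one extra pole: for an auxiliary place $Q\notin\{P_1,\dots,P_n\}$ the same count on $\Omega(-D-Q)$ makes the residue map onto $\{\sum_i x_i + x_Q=0\}$, so I may prescribe $\text{Res}_{P_i}(\omega)=1$ for every $i$ while the Residue Theorem is balanced by the residue at $Q$. I expect this existence step to be the main obstacle, precisely because over the finite field $\F_q$ one cannot invoke a ``generic differential'' avoiding finitely many bad hyperplanes; the surjectivity-onto-a-hyperplane count (with the auxiliary pole when necessary) is what makes it rigorous.

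With $\omega$ fixed, put ${\bf e}=(\text{Res}_{P_1}(\omega),\dots,\text{Res}_{P_n}(\omega))\in(\F_q^*)^n$. Because $D$ and $G$ are disjoint, $v_{P_i}(G)=0$, and the choice $v_{P_i}(\omega)=-1$ gives $v_{P_i}(D-G+(\omega))=1-0-1=0$; hence $H:=D-G+(\omega)$ is disjoint from $D$ and $C_{\cal L}(D,H)$ is a bona fide evaluation code. Define $\varphi:\Omega(G-D)\to\F_q({\cal X})$, $\varphi(\eta)=\eta/\omega$. For $\eta\in\Omega(G-D)$ we have $(\eta)\ge G-D$, so $(\varphi(\eta))=(\eta)-(\omega)\ge (G-D)-(\omega)=-H$, that is $\varphi(\eta)\in{\cal L}(H)$; conversely $f\mapsto f\omega$ sends ${\cal L}(H)$ into $\Omega(G-D)$, so $\varphi$ is a linear isomorphism onto ${\cal L}(H)$.

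Finally I would match the two maps at each point. Fix $\eta\in\Omega(G-D)$ and set $f=\varphi(\eta)$. At $P_i$ the relations $v_{P_i}(f)+v_{P_i}(\omega)=v_{P_i}(\eta)\ge -1$ and $v_{P_i}(\omega)=-1$ force $v_{P_i}(f)\ge 0$, so $f$ is regular at $P_i$; since $\omega$ has a simple pole there, the local expansion yields $\text{Res}_{P_i}(\eta)=\text{Res}_{P_i}(f\omega)=f(P_i)\,\text{Res}_{P_i}(\omega)$. Collecting these over $i$ gives
\[
ev_\Omega(\eta)=\big(f(P_1)\text{Res}_{P_1}(\omega),\dots,f(P_n)\text{Res}_{P_n}(\omega)\big)={\bf e}\cdot ev_{\cal L}(f),
\]
and as $\varphi$ runs bijectively over ${\cal L}(H)$ this identifies the two images, giving $C_\Omega(D,G)={\bf e}\cdot C_{\cal L}(D,D-G+(\omega))$, as required.
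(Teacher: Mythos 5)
Your proof is correct and complete. One thing worth knowing: the paper never proves this lemma at all --- it is quoted directly from \cite[Proposition 2.2.10]{Stich} --- so the real comparison is with the textbook argument. Your central mechanism coincides with the cited one: multiplication by the fixed differential, $f\mapsto f\omega$, is a linear isomorphism ${\cal L}(D-G+(\omega))\to \Omega(G-D)$, and the local identity $\text{Res}_{P_i}(f\omega)=f(P_i)\,\text{Res}_{P_i}(\omega)$ (legitimate because $f$ is forced to be regular at $P_i$ while $\omega$ has a simple pole there) identifies $ev_\Omega$ with ${\bf e}\cdot ev_{\cal L}$. Where you genuinely diverge is the existence of $\omega$: Stichtenoth manufactures a Weil differential with $v_{P_i}(\omega)=-1$ and residue $1$ at every $P_i$ via the weak approximation theorem, prescribing local expansions directly, whereas you obtain it from a Riemann--Roch count --- $\dim\Omega(-D)=\ell(W+D)=g+n-1$ with $W$ canonical, the kernel of the residue map being the $g$-dimensional space of holomorphic differentials, so the image is exactly the hyperplane $\sum_i x_i=0$ --- together with an auxiliary pole $Q$ to reach an all-nonzero residue vector over small fields. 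Your route buys more information (it says precisely which residue vectors are attainable without extra poles, which is why you could even see the small-field obstruction), while the approximation-theorem route buys brevity and residues normalized to $1$. The only repair your write-up needs is cosmetic: you cannot always take $Q$ rational, since $P_1,\hdots,P_n$ might exhaust ${\cal X}(\F_q)$; but running the same kernel/image count on $\Omega(-D-Q)$ with $\deg Q=m$ arbitrary (dimension $g+n+m-1$, kernel $\Omega(-Q)$ of dimension $g+m-1$) shows the residue map at $P_1,\hdots,P_n$ alone is already onto $\F_q^n$, so the conclusion stands with any place $Q$ outside $\mathrm{supp}(D)$.
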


The Euclidean self-orthogonality of an algebraic geometry code can be characterized as follows.

\begin{lem}\textnormal{\cite[Corollary 2.2.11]{Stich}}\label{lem:char1} With the above notation, assume that there exists a differential form $\omega $ satisfying
\begin{enumerate}
\item  $v_{P_i}(\omega)=-1, 1\le i \le n $ and 
\item $\text{Res}_{P_i}(\omega)=a_i^2$, $1\le i \le n,$  for some $a_i\in {\F^*_q}.$
\end{enumerate} 
If $2G\le D+(\omega),$ then there exists a divisor $G'$ such that $C_{\cal L}(D,G)$ is  equivalent to $C_{\cal L}(D,G')$, and  $C_{\cal L}(D,G')$ is Euclidean self-orthogonal.
\end{lem}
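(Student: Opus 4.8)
The plan is to combine the two descriptions of the dual code from Lemmas \ref{lem:01} and \ref{lem:1} with the two hypotheses on $\omega$, and then to correct by a diagonal scaling that is realized by a single interpolating function. Throughout write $C:=C_{\cal L}(D,G)$, set ${\bf a}=(a_1,\hdots,a_n)$ and ${\bf e}=(a_1^2,\hdots,a_n^2)$, so that by hypothesis (2) we have ${\bf e}=(\text{Res}_{P_1}(\omega),\hdots,\text{Res}_{P_n}(\omega))$.

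First I would assemble the dual. By Lemma \ref{lem:01}, $C^{\perp}=C_\Omega(D,G)$; and since the prescribed $\omega$ satisfies hypothesis (1), namely $v_{P_i}(\omega)=-1$ for all $i$, Lemma \ref{lem:1} applies to this very $\omega$ and gives $C^{\perp}=C_\Omega(D,G)={\bf e}\cdot C_{\cal L}(D,D-G+(\omega))$. Next I would exploit the assumption $2G\le D+(\omega)$, which is equivalent to $G\le D-G+(\omega)$; hence ${\cal L}(G)\subseteq {\cal L}(D-G+(\omega))$ and therefore $C=C_{\cal L}(D,G)\subseteq C_{\cal L}(D,D-G+(\omega))$. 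Scaling by ${\bf e}$ and inserting the dual identity yields ${\bf e}\cdot C\subseteq {\bf e}\cdot C_{\cal L}(D,D-G+(\omega))=C^{\perp}$.

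With this containment in hand, the self-orthogonality of the equivalent code ${\bf a}\cdot C$ is immediate: for any ${\bf c},{\bf c}'\in C$,
$$\langle {\bf a}\cdot{\bf c},\, {\bf a}\cdot{\bf c}'\rangle_E=\sum_{i=1}^n a_i^2 c_i c_i' = \langle {\bf e}\cdot{\bf c},\, {\bf c}'\rangle_E = 0,$$
since ${\bf e}\cdot{\bf c}\in {\bf e}\cdot C\subseteq C^{\perp}$ while ${\bf c}'\in C$. Thus ${\bf a}\cdot C$ is Euclidean self-orthogonal, and it is equivalent to $C$ because ${\bf a}\in(\F_q^*)^n$.

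The last task is to realize ${\bf a}\cdot C$ as an algebraic geometry code $C_{\cal L}(D,G')$. By the Weak Approximation Theorem \cite[Theorem 1.3.1]{Stich} there exists $h\in \F_q({\cal X})$ with $v_{P_i}(h-a_i)\ge 1$ for every $i$; as each $a_i\in\F_q^*$, this forces $v_{P_i}(h)=0$ and $h(P_i)=a_i$. Setting $G':=G-(h)$, the map $f\mapsto hf$ is a bijection ${\cal L}(G)\to {\cal L}(G')$ whose effect on evaluation vectors is exactly multiplication by ${\bf a}$, so $C_{\cal L}(D,G')={\bf a}\cdot C$; since $v_{P_i}(h)=0$, the support of $G'$ remains disjoint from $D$, and $\deg G'=\deg G$ because $(h)$ has degree $0$. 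The step I expect to be the only delicate point is this final realization: one must confirm that a single global function $h$ interpolating the values $a_i$ at all the $P_i$ exists and that the resulting $G'$ is admissible (disjoint from $D$, of the same degree). Everything preceding it is formal manipulation of the dual identities supplied by Lemmas \ref{lem:01} and \ref{lem:1} together with the square condition on the residues.
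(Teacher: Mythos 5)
Your proof is correct. Note that the paper itself gives no argument for this lemma: it is quoted verbatim as \cite[Corollary 2.2.11]{Stich}, so there is no internal proof to compare against. What you have written is essentially the standard argument behind that citation, reassembled from the paper's own ingredients: Lemma \ref{lem:01} gives $C^{\perp}=C_\Omega(D,G)$, Lemma \ref{lem:1} (applied to the given $\omega$) gives $C_\Omega(D,G)={\bf e}\cdot C_{\cal L}(D,D-G+(\omega))$ with ${\bf e}$ the residue vector, the hypothesis $2G\le D+(\omega)$ gives the containment ${\bf e}\cdot C\subseteq C^{\perp}$, the square-residue hypothesis converts this into Euclidean self-orthogonality of ${\bf a}\cdot C$, and weak approximation realizes ${\bf a}\cdot C$ as $C_{\cal L}(D,G')$ with $G'=G-(h)$. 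All four steps check out, including the disjointness of $\mathrm{supp}(G')$ from $D$ and $\deg G'=\deg G$. One fine point worth making explicit: Lemma \ref{lem:1} as stated in the paper says the identity holds ``for some'' differential with simple poles at the $P_i$, whereas you need it for the specific $\omega$ of the hypothesis; this is harmless because the identity in fact holds for \emph{every} such $\omega$ (write any $\eta\in\Omega(G-D)$ as $\eta=f\omega$ with $f\in{\cal L}(D-G+(\omega))$ and note $\text{Res}_{P_i}(f\omega)=f(P_i)\,\text{Res}_{P_i}(\omega)$ since the coefficient of $P_i$ in $D-G+(\omega)$ is zero), but a fully self-contained write-up should say so rather than lean on the lemma's literal wording.
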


\section{Construction of Hermitian self-orthogonal codes}\label{section:constructions}

In the sequel, for two divisors $D=P_1+\cdots+P_n$ and $G$ such that $\text{supp}(D)\cap \text{supp}(G)=\emptyset$, we denote
\begin{equation}
\begin{array}{ll}
{\cal L}_q(D,G):=\{f\in \F_{q^2}({\cal X})\backslash \{0\}|&f(P_i)\in \F_q\text{ for }1\le i\le n,\\& (f)+G\ge 0\}\cup \{0\}.
\end{array}
\end{equation}
For ${\bf v}=(v_1,\hdots,v_n)$ with $v_i\in \F_{q^2}^*,$ define the following algebraic geometry codes
\begin{equation}
C_{{\cal L}_{q}}(D,G):=\{(f(P_{1}),\hdots,f(P_{n})|f\in {{\cal L}_{q}}(D,G)\},
\end{equation} 
and
\begin{equation}
C_{{\cal L}_{q}}(D,G;{\bf v}):=\{(v_1f(P_{1}),\hdots,v_nf(P_{n}))|f\in {\cal L}_{q}(D,G)\}.
\end{equation}

We now give sufficient conditions for an algebraic geometry $C_{{\cal L}_{q}}(D,G;{\bf v})$ to be Hermitian self-orthogonal.
\begin{thm}\label{thm:key} Let $\cal X$ be a smooth projective curve with genus $g$. Let $D=P_{1}+\cdots+P_{n}$ and $G=(k-1)P_\infty$ be two divisors, $\omega$ be a Weil differential form such that $H=D-G+(\omega)$ and ${\bf v}=(v_1,\hdots,v_n)$ with $v_i\in \F_{q^2}^*.$ Then the code $C_{{\cal L}_{q}}(D,G;{\bf v})$ is Hermitian self-orthogonal over $\F_{q^2}$ if the following conditions hold
\begin{enumerate}[1)]
\item $v_{P_i}(\omega)=-1$ for $1\le i\le n$,
\item $G\le H$,
\item $\text{Res}_{P_i}(\omega)=v_i^{q+1}$ for $1\le i\le n$,
\item $g+1\le k\le \lfloor \frac{n+q+2g-1}{q+1} \rfloor$.
\end{enumerate}
\end{thm}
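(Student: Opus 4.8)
The plan is to reduce the Hermitian self-orthogonality to the residue theorem for the differential $\omega$. Let $\mathbf{c}_1=(v_1 f(P_1),\dots,v_n f(P_n))$ and $\mathbf{c}_2=(v_1 h(P_1),\dots,v_n h(P_n))$ be two arbitrary codewords, with $f,h\in\mathcal{L}_q(D,G)$. First I would write out the Hermitian inner product
\begin{equation*}
\langle \mathbf{c}_1,\mathbf{c}_2\rangle_H=\sum_{i=1}^n v_i f(P_i)\big(v_i h(P_i)\big)^q=\sum_{i=1}^n v_i^{q+1} f(P_i)\, h(P_i)^q,
\end{equation*}
and then exploit the defining property of $\mathcal{L}_q(D,G)$, namely $f(P_i),h(P_i)\in\F_q$, to replace $h(P_i)^q$ by $h(P_i)$. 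Invoking condition~3) to substitute $v_i^{q+1}=\text{Res}_{P_i}(\omega)$ turns the inner product into $\sum_{i=1}^n \text{Res}_{P_i}(\omega)\, f(P_i)h(P_i)$.

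The next step is to recognize this sum as a sum of residues of the single differential $\eta:=f h\,\omega$. Because $(f)+G\ge 0$ and $(h)+G\ge 0$ with $\text{supp}(G)$ disjoint from $\text{supp}(D)$, both $f$ and $h$ are regular at each $P_i$; combined with condition~1), which makes $\omega$ have a simple pole at $P_i$, the standard local identity gives $\text{Res}_{P_i}(\eta)=f(P_i)h(P_i)\,\text{Res}_{P_i}(\omega)$. Hence $\langle\mathbf{c}_1,\mathbf{c}_2\rangle_H=\sum_{i=1}^n\text{Res}_{P_i}(\eta)$.

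The crux is then to show that $P_1,\dots,P_n$ carry \emph{all} the residues of $\eta$, so that this partial sum equals the total sum $\sum_{P\in\mathcal{X}}\text{Res}_P(\eta)$, which vanishes by the residue theorem. For this I would bound the divisor of $\eta$: using $(f),(h)\ge -G$ and $(\omega)=H-D+G$,
\begin{equation*}
(\eta)=(f)+(h)+(\omega)\ge -2G+(H-D+G)=(H-G)-D\ge -D,
\end{equation*}
where the last inequality is exactly condition~2), $G\le H$. Thus $\eta$ has at worst simple poles at the $P_i$ and is regular everywhere else, so the residue theorem yields $\langle\mathbf{c}_1,\mathbf{c}_2\rangle_H=0$.

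I expect the main obstacle to be the divisor bound above, in particular verifying regularity of $\eta$ at $P_\infty$, where $f$ and $h$ may each have poles of order up to $k-1$; this is precisely the point that condition~2) is designed to control, since $G\le H$ guarantees $H-G\ge 0$ and hence $(\eta)\ge -D$ globally. The argument just sketched uses only conditions~1)--3); condition~4) is not needed for the vanishing of the inner product itself but rather pins down the admissible range of $k$, ensuring that $\deg G=k-1$ is compatible with $G\le H$ and that $\mathcal{L}_q(D,G)$ yields a code of the intended dimension. Lastly, I would observe that the pairwise vanishing of $\langle\cdot,\cdot\rangle_H$ on the $\F_q$-linear set $C_{\mathcal{L}_q}(D,G;\mathbf{v})$ propagates by sesquilinearity to its $\F_{q^2}$-span, which is what makes the code Hermitian self-orthogonal over $\F_{q^2}$ in the sense required for the quantum constructions.
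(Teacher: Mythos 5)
Your proof is correct, but it follows a genuinely different route from the paper's. The paper never computes inner products of pairs of codewords: it fixes a single codeword and proves its membership in $C_{{\cal L}_q}(D,G;{\bf v})^{\perp_H}$ through the identification $C^{\perp_H}=(C^q)^{\perp}$ of (\ref{eq:E-H-dual}) together with the differential description of the Euclidean dual, $C_\Omega(D,G)={\bf e}\cdot C_{\cal L}(D,H)$ with $H=D-G+(\omega)$ (Lemmas \ref{lem:01} and \ref{lem:1}); the substantive step there is that the $q$-th power vector $(v_i^qf^q(P_i))_i$ must again come from ${\cal L}(H)$, and since $f^q$ can have pole order $(k-1)q$ at $P_\infty$ while $\deg H=n+2g-k-1$, this is exactly where condition 4) is consumed (note $(k-1)q\le n+2g-k-1$ is equivalent to $k\le\lfloor \frac{n+q+2g-1}{q+1}\rfloor$). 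You instead verify pairwise orthogonality directly: the $\F_q$-valuedness of the $h(P_i)$ collapses the sesquilinear form to a bilinear one, condition 3) turns the sum into $\sum_{i}\text{Res}_{P_i}(fh\,\omega)$, and conditions 1)--2) force $fh\,\omega$ to have at worst simple poles, all supported on $D$, so the residue theorem finishes. This is precisely the Hermitian analogue of the Euclidean criterion of Lemma \ref{lem:char1}, with the underlying duality re-proved inline rather than cited. Your route is more elementary and self-contained, and it exposes a fact invisible in the paper's argument: conditions 1)--3) alone already yield the stated containment, so hypothesis 4) is logically redundant for the theorem as written; your closing remark that orthogonality passes to the $\F_{q^2}$-span by sesquilinearity is also correct and needed, since ${\cal L}_q(D,G)$ is only an $\F_q$-space. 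What the paper's heavier route buys is the true role of condition 4): it is not mere bookkeeping of $k$, but precisely the condition required to control functions with poles of order up to $(k-1)q$ --- that is, products $fg^q$ with $f,g\in{\cal L}(G)$ unrestricted --- and hence to make the full $\F_{q^2}$-linear code ${\bf v}\cdot C_{\cal L}(D,G)$ Hermitian self-orthogonal, which is what the MDS constructions of Section \ref{section:constructions} actually rely on and which can be strictly larger than the $\F_{q^2}$-span of the restricted code $C_{{\cal L}_q}(D,G;{\bf v})$. So your argument proves the theorem as stated with room to spare, but if you drop condition 4) entirely you lose the stronger, unstated fact the paper's proof is implicitly geared toward.
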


\begin{proof} Let ${\bf c}=(v_1f(P_{1}),\hdots,v_nf(P_{n}))\in C_{{\cal L}_{q}}(D,G;{\bf v})$ with $f\in {\cal L}_q(D,(k-1)P_\infty)$. Then we have the following equivalences:
 \begin{equation}
 \begin{array}{ll}
 {\bf c}=(v_1f(P_{1}),\hdots,v_nf(P_{n}))\in C_{{\cal L}_{q}}(D,G;{\bf v})^{\perp_H}\\
 \Longleftrightarrow 
 (v_1^qf^q(P_{1}),\hdots,v_n^qf^q(P_{n}))\in C_{{\cal L}_{q}}(D,G;{\bf v})^{\perp}\\
 \Longleftrightarrow 
 (v_1^qf^q(P_{1}),\hdots,v_n^qf^q(P_{n}))\in \left(\frac{\text{Res}_{P}(\omega)}{\bf v}\right)\cdot C_{{\cal L}_{q}}(D,H)\\
  \Longleftrightarrow 
 (v_1f(P_{1}),\hdots,v_nf(P_{n}))\in \left(\frac{\text{Res}_{P}(\omega)}{\bf v^{q+1}}\right)\cdot C_{{\cal L}_{q}}(D,H;{\bf v}),
 \end{array}
 \label{eq:hermitian-dual}
 \end{equation}
where $\frac{\text{Res}_{P}(\omega)}{\bf v^{q+1}}=(\frac{\text{Res}_{P_1}(\omega)}{v_1^{q+1}},\hdots, \frac{\text{Res}_{P_n}(\omega)}{v_n^{q+1}})$. The first equivalence holds due to (\ref{eq:E-H-dual}), the second one is due to Lemma \ref{lem:1}, and the last one is by definition (\ref{eq:equivalent-code}) and from the fact that $f\in {\cal L}_q(D,G)$.
 Then $f$ has all poles of degree at most $k-1$, and it thus has at most $k-1$ zeros, which means that the degree of $f$ is upper bounded by $k-1$. From the second equivalence, we have that $f^q$ has degree at most $n+2g-k-1$, and thus it is sufficient to take $(k-1)q\le n+2g-k-1,$ that is, $k\le \lfloor \frac{n+q+2g-1}{q+1} \rfloor$. By combining the above equivalences (\ref{eq:hermitian-dual}) with points 1) and 2), we get the result as claimed.
\end{proof}

The following lemma is useful for embedding a Hermitian self-orthogonal code constructed from projective lines.
\begin{lem}\label{lem:embedding} Assume that $G=(k-1)P_\infty$, $1\le k(q+1)\le{n+q-1}$ and $\text{Res}_{P_i}(\omega)=v_i^{q+1}$ for some $\beta \in  \F_{q^2}^*$ for $1\le i\le n$.  Then an MDS Hermitian self-orthogonal code $C_{\cal L}(D,G;{\bf v})$ with parameters $[n,k]_{q^2}$ can be embedded into an MDS Hermitian self-orthogonal $[n',k+1]_{q^2}$ code, where $n'$ is determined by:
$$
n'=
\begin{cases}
n+1\textnormal{ if }k(q+1)=n-1,\\
n\textnormal{ if }n-1+q\not= k(q+1)\not= n-1.
\end{cases}
$$
\end{lem}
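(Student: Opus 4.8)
The plan is to view the code through the explicit (generalized Reed--Solomon) generator matrix from Section~\ref{section:pre} and to realize the dimension increase as appending one row, checking Hermitian self-orthogonality at each step through Theorem~\ref{thm:key}. First I would fix the projective-line data: write $\alpha_i=x(P_i)$, take $G=(k-1)P_\infty$, and record that the rows of $C_{\cal L}(D,G;{\bf v})$ are $(v_1\alpha_1^{\,j},\dots,v_n\alpha_n^{\,j})$ for $0\le j\le k-1$. Setting $S_m:=\sum_{i=1}^n v_i^{q+1}\alpha_i^{\,m}$, the pairwise Hermitian inner products of these rows are exactly $S_{\,i+jq}$, so the hypothesis that $C_{\cal L}(D,G;{\bf v})$ is Hermitian self-orthogonal says precisely that $S_{\,m}=0$ for every exponent $m\in\{\,i+jq:0\le i,j\le k-1\,\}$, the largest of which is $(k-1)(q+1)$. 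This is the generator-matrix shadow of conditions 1)--3) of Theorem~\ref{thm:key} (with $\text{Res}_{P_i}(\omega)=v_i^{q+1}$), while the standing hypothesis $1\le k(q+1)\le n+q-1$ is condition 4) for this code.

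Next I would raise the dimension by appending the single row $(v_1\alpha_1^{\,k},\dots,v_n\alpha_n^{\,k})$, i.e. by replacing $G$ with $G'=kP_\infty$. The enlarged code is again Hermitian self-orthogonal iff, in addition, $S_{\,m}=0$ for the new exponents $m\in\{\,k+jq,\ i+kq:0\le i,j\le k\,\}$, the largest being $k(q+1)$. I would then split on the position of $k(q+1)$ relative to the two boundary values $n-1$ and $n+q-1$ flagged in the statement. When $n-1+q\neq k(q+1)\neq n-1$, I expect the new weighted power sums to vanish already on the given $n$ points, so that $D'=D$ and the same weights ${\bf v}$ work and the output is an $[n,k+1]_{q^2}$ code containing the original, whence $n'=n$. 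When $k(q+1)=n-1$ the top new condition fails, and I would repair it by adjoining one rational point $P_{n+1}$, setting $D'=D+P_{n+1}$ and choosing a weight $v_{n+1}$ with $\text{Res}_{P_{n+1}}(\omega')=v_{n+1}^{q+1}$ so that the enlarged sums $S_{\,m}$ vanish through $m=k(q+1)$; this yields an $[n+1,k+1]_{q^2}$ code into which the zero-extended original embeds, whence $n'=n+1$. The excluded value $k(q+1)=n+q-1$ is the opposite extreme, where the original code already sits at the top of the range in condition 4) and no enlargement of dimension can succeed.

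The remaining assertions are then routine. In either case the new code arises from $G'=kP_\infty$ with $0\le\deg G'=k<n'$, so Theorem~\ref{thm:distance} (genus $0$) gives dimension $k+1$ and minimum distance $\ge n'-k$, and the Singleton bound forces equality; hence both outputs are MDS. Self-orthogonality is delivered by conditions 1)--3) of Theorem~\ref{thm:key}, and since the appended row is the only change (with a trailing zero inserted when $n'=n+1$), every codeword of $C_{\cal L}(D,G;{\bf v})$ survives in the new code, giving the claimed embedding.

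I expect the main obstacle to be the threshold analysis together with the lengthening step. Concretely, the hard part is to prove that for the structured evaluation sets used on the projective line the new weighted power sums $S_{\,m}$ with $m$ up to $k(q+1)$ vanish automatically precisely when $k(q+1)\neq n-1$ (and $\neq n+q-1$), so that no enlargement is needed, and that $k(q+1)=n-1$ is exactly the value at which a single new coordinate both becomes necessary and suffices. In that lengthening case one must further exhibit $(\alpha_{n+1},v_{n+1})$ so that $\text{Res}_{P_{n+1}}(\omega')=v_{n+1}^{q+1}$ is a genuine norm from $\F_{q^2}^*$ onto $\F_q^*$ and the full block $S_0,\dots,S_{k(q+1)}$ vanishes without perturbing the original evaluations. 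Securing this arithmetic compatibility, rather than the surrounding Riemann--Roch and Singleton bookkeeping, is where the real work lies.
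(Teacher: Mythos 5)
Your computational framework --- the GRS generator matrix, the weighted power sums $S_m=\sum_{i=1}^n v_i^{q+1}\alpha_i^{m}$, and the case split on $k(q+1)$ versus $n-1$ and $n+q-1$ --- is the same as the paper's, and in the non-lengthening case your argument sits at essentially the same level of rigor as the paper's (both of you in the end assert, rather than fully prove, that the new sums vanish when $n-1+q\neq k(q+1)\neq n-1$; the paper phrases this through the non-square system (\ref{eq:square2}) admitting only the trivial solution $\alpha=0$). The genuine gap is in the case $k(q+1)=n-1$. Your plan --- adjoin a rational point $P_{n+1}$, set $D'=D+P_{n+1}$, and choose $v_{n+1}$ with $\text{Res}_{P_{n+1}}(\omega')=v_{n+1}^{q+1}$ so that all sums $S_0,\dots,S_{k(q+1)}$ vanish --- cannot be carried out. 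Adjoining an evaluation point with nonzero weight shifts \emph{every} sum: $S'_m=S_m+v_{n+1}^{q+1}\alpha_{n+1}^{m}$. Already at $m=0$ (the Hermitian inner product of the constant row with itself) self-orthogonality of the original code gives $S_0=0$, so self-orthogonality of the enlarged code would force $v_{n+1}^{q+1}=0$, contradicting $v_{n+1}\in\F_{q^2}^*$. For the same reason the zero-extended original codewords $(v_1f(P_1),\dots,v_nf(P_n),0)$ can never lie inside $C_{\cal L}(D+P_{n+1},kP_\infty;{\bf v}')$: that containment would force $f(P_{n+1})=0$ for every $f\in{\cal L}((k-1)P_\infty)$, which fails already for $f=1$.

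The paper's construction avoids exactly this trap: the added coordinate in (\ref{eq:gen2}) is \emph{not} an evaluation at a new place. All $k$ old basis rows receive a $0$ in the new position, and only the appended row $\bar g_{k+1}$ receives a nonzero entry $\alpha$ with $\alpha^{q+1}=-1$. Consequently the new coordinate contributes only to the diagonal term $\langle \bar g_{k+1},\bar g_{k+1}\rangle_H=S_{k(q+1)}+\alpha^{q+1}$, and with the normalization $v_i^{q+1}=1/\prod_{j\neq i}(\alpha_i-\alpha_j)$ the Lagrange/Cramer identity gives $S_{n-1}=1$, so this term equals $1-1=0$, while every cross term $S_{(i-1)+qk}$ or $S_{k+q(i-1)}$ with $i\le k$ has exponent at most $n-2$ and vanishes. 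Note a secondary consequence: the resulting $[n+1,k+1]_{q^2}$ code is not of the form $C_{\cal L}(D',kP_\infty;{\bf v}')$, so your appeal to Theorem \ref{thm:distance} to get MDS-ness does not apply in this case; instead one checks directly that a codeword with nonzero coefficient on $\bar g_{k+1}$ has weight at least $(n-k)+1$ and one without it lies in the zero-extension of an $[n,k,n-k+1]$ MDS code, giving $d\ge (n+1)-(k+1)+1$.
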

\begin{proof} Let $C_{\cal L}(D,G;{\bf v})$ be an MDS Hermitian $q^2$-ary self-orthogonal $[n,k]_{q^2}$ code with its generator matrix
\begin{equation*}
{\cal G}=\left(
\begin{array}{cccc}
v_1&v_2&\hdots&v_{n}\\
v_1\alpha_1&v_2\alpha_2&\cdots&v_{n}\alpha_{n}\\
\vdots&\vdots&\cdots&\vdots\\
v_1\alpha_1^{k-1}&v_2\alpha_2^{k-1}&\cdots&v_{n}\alpha_{n}^{k-1}\\
\end{array}
\right).
\end{equation*}

Consider the code ${\bar C}_{\cal L}(D,G;{\bf v})$ with generator matrix
\begin{equation}\label{eq:gen2}
\bar{\cal G}=\left(
\begin{array}{ccccc}
v_1&v_2&\hdots&v_{n}&0\\
v_1\alpha_1&v_2\alpha_2&\cdots&v_{n}\alpha_{n}&0\\
\vdots&\vdots&\cdots&\vdots&0\\
v_1\alpha_1^{k-1}&v_2\alpha_2^{k-1}&\cdots&v_{n}\alpha_{n}^{k-1}&0\\
v_1\alpha_1^{k}&v_2\alpha_2^{k}&\cdots&v_{n}\alpha_{n}^{k}&\alpha \\
\end{array}
\right)
=\left(
\begin{array}{c}
\bar{g}_1\\
\bar{g}_2\\
\vdots\\
\bar{g}_{k}\\
\bar{g}_{k+1}\\
\end{array}
\right),
\end{equation}
for some $\alpha\in \F_{q^2}$ such that $\alpha^{q+1}=-1$.

Since ${C}_{\cal L}(D,G;{\bf v})$ is Hermitian self-orthogonal, ${\bf v}^{q+1}=(v_1^{q+1},\hdots,v_n^{q+1})$ is a solution to the following system of equations:
\begin{equation}\label{eq:square1}
M_{(q+1)(k-1)}{\bf x}^{{q+1}^\top}=(0,\hdots,0)^\top ,
\end{equation}
where
$
M_i:=\left(
\begin{array}{cccc}
1&1&\cdots&1\\
\alpha_1&\alpha_2&\cdots&\alpha_n\\
\vdots&\vdots&\cdots&\vdots\\
\alpha_1^{i}&\alpha_2^{i}&\cdots&\alpha_n^{i}\\
\end{array}
\right).
$

Now consider the extended system of (\ref{eq:square1}) with $(q+1)$ additional equations as follows:
\begin{equation}\label{eq:square2}
M_{(q+1)k}{\bf x}^{{q+1}^\top}=(0,\hdots,0,\underbrace{0,\hdots,0}\limits_{q},1)^\top.
\end{equation}
It follows that there exists a Hermitian self-orthogonal code with parameters $[n+1,k+1]_{q^2}$ if the system of equations, defined by (\ref{eq:square2}), has a solution.
Note that the largest positive integer $k'$, with $k'(q+1)\le {n+q-1}$ such that $M_{n}$ is the Vandermonde matrix, satisfies $n=k'(q+1)+1$. 

If $k(q+1)\not=n-1$, then the matrix $M_{(q+1)k}$ is not a square matrix, and thus the system (\ref{eq:square2}) can only have a trivial solution for $\alpha=0$. In this case, we obtain an MDS Hermitian self-orthogonal code with parameters $[n,k+1, n-k]_{q^2}$.

If $k(q+1)=n-1$, then the matrix $M_{(q+1)k}$ is invertible.
With  $v_i^{q+1}=\frac{1}{\prod\limits_{j=1,j\not= i}^n{(\alpha_i-\alpha_j)}}$ for $1\le i \le n$,  we have that ${\bf v}^{q+1}$   is also the (unique) solution to (\ref{eq:square2}) (by Cramer's rule). It implies that $<{\bar g}_{i},{\bar g}_{k+1}>_H=0$ for $1\le i\le k+1$.
The subcode $\bar{C}_{\cal L}(D,G;{\bf v})^{(k)}$ of $\bar{C}_{\cal L}(D,G;{\bf v})$, generated by the first $k$ rows  $\bar {g}_1,\hdots,\bar{g}_k$, is a Hermitian self-orthogonal code with parameters $[n+1,k]_{q^2}$. It can be easily checked that ${\bar g}_{k+1}$ is not in the subspace spanned by ${\bar g}_{1},\hdots, {\bar g}_{k}$. Hence, the extended code $\bar {C}_{\cal L}(D,G;{\bf v})$ is an MDS Hermitian self-orthogonal code with parameters $[n+1,k+1]_{q^2}.$
\end{proof}

\subsection {Hermitian self-orthogonal codes from projective lines}\label{subsection:projective-lines}
In this subsection, we consider Hermitian self-orthogonal codes over $\F_{q^2}$ from the projective lines ${\mathbb P}^{1}=\{(\alpha:1)|\alpha\in \F_{q^2}\}\cup \{(1:0)\}$. There are $q^2+1$ distinct points in ${\mathbb P}^{1}$, and they give rise to $q^2+1$ distinct places of $\F_{q^2}({\mathbb P}^1)$. 

We now construct families of MDS Hermitian self-orthogonal codes $C_{\cal L}(D,G)$ by fixing the divisor $G$. Our constructions are done by choosing, as sets of evaluation points, some special subsets of $\F_{q^2}$, some multiplicative subgroups of $\F_{q^2}^*$ and their cosets.
The first construction deals with a multiplicative subgroup of $\F_{q^2}^*$ of order $n-1$, and it is given as follows.
\begin{const} \label{thm:MDS1}
\textnormal{
 Let $q=p^m$, $(n-1)|(q^2-1)$ and $1\le k\le \lfloor \frac{n+q-1}{q+1} \rfloor$. 
Take $U=\{\alpha\in \F_{q^2}|\alpha^n=\alpha\}=\{\alpha_1,\hdots,\alpha_n\}$ with $\alpha_n=0$ and $h(x)=x^n-x$. Set $D=(h)_0$, $G=(k-1)P_\infty$, $\omega=\frac{dx}{h(x)}$ and $H=(n-k-1)P_\infty$. The polynomial $h(x)$ has $n$ simple roots and $v_{P_i}(\omega)=-1$ for any $1\le i\le n$, which gives point 1) of Theorem \ref{thm:key}. It should be noted that for the projective lines, we have $g=0$, and thus the condition $k\le \lfloor \frac{n+q-1}{q+1}\rfloor $ implies point 4) of Theorem \ref{thm:key}, and it also implies that $G\le H$, which gives point 2) of Theorem \ref{thm:key}. Moreover, for any $1\le i \le n$, we have $h'(\alpha_i)=-1\text{ or }n-1$ which are elements in $\F_q^*$, and thus $h'(\alpha_i)=\beta_i^{q+1}$ for some $\beta_i\in \F_{q^2}$, that is, $\text{Res}_{P_i}(\omega)=1/(\beta_i)^{q+1}$, which proves point 3) of Theorem \ref{thm:key}. Combining the previous four arguments, it follows that the code $C_{{\cal L}_{q}}(D,G;{\bf v})$, where ${\bf v}=(v_1,\hdots,v_n)$ and $v_i=1/\beta_i$ is Hermitian self-orthogonal over $\F_{q^2}$. 
Hence, there exists an MDS Hermitian self-orthogonal $[n,k]_{q^2}$ code.  
By applying the embedding Lemma \ref{lem:embedding}, it follows that
\begin{itemize}
\item[{\textbf E1.}] there exists an MDS Hermitian self-orthogonal $[n+1,k+1]_{q^2}$ code if $k(q+1)=n-1$;
\item[{\textbf E2.}] there exists an MDS Hermitian self-orthogonal $[n',k+1]_{q^2}$ code if ${n+q-1}\not=k(q+1)\not=n-1$, where $n'=n$ or $n'=n+1$. 
\end{itemize}
Under the special condition $(n-1)|(q^2-1)$, there could exist an MDS Hermitian self-orthogonal code with parameters $[n,s]_{q^2}$ with $s$ strictly greater than $k+1$. 
Now, we assume that such a code exists, and it has the following generator matrix:
\begin{equation*}
{\cal G}=\left(
\begin{array}{cccc}
v_1&v_2&\hdots&v_{n}\\
v_1\alpha_1&v_2\alpha_2&\cdots&v_{n}\alpha_{n}\\
\vdots&\vdots&\cdots&\vdots\\
v_1\alpha_1^{s-1}&v_2\alpha_2^{s-1}&\cdots&v_{n}\alpha_{n}^{s-1}\\
\end{array}
\right).
\end{equation*}
We will explore what values can be taken by $s$.
For $1\le i\le n$, set $w_i=v_i^{\frac{q+1}{2}}$. Let $k'\le \frac{n}{2}$, and put $G'=(k'-1)P_\infty$ as well as $H'=D-G'+(\omega)$. Since $\text{Res}_{P_i}(\omega)={w_i}^2$ for $1\le i\le n$ and $G'\le H'$, the code $C_{{\cal L}}(D,G';{\bf w})$, where ${\bf w}=(w_1,\hdots,w_n)$, is Euclidean self-orthogonal with parameters $[n,k']_{q^2}$. Consider a generator matrix of $C_{{\cal L}}(D,G';{\bf w})$ with the following form:
\begin{equation}
{\cal G}'=\left(
\begin{array}{ccccc}
w_1&w_2&\hdots&w_{n}\\
w_1\alpha_1&w_2\alpha_2&\cdots&w_{n}\alpha_{n}\\
\vdots&\vdots&\cdots&\vdots\\
w_1\alpha_1^{k'-1}&w_2\alpha_2^{k'-1}&\cdots&w_{n}\alpha_{n}^{k'-1}\\
\end{array}
\right)
=\left(
\begin{array}{c}
{g}'_1\\
{g}'_2\\
\vdots\\
{g}'_{k'}\\
\end{array}
\right).
\end{equation}
It follows that for $1\le i,j\le k'$,
\begin{equation}\label{eq:E-gigj}
\langle {g}'_i,{g}'_j\rangle_E=\sum\limits_{l=1}^n{w_l^2\alpha_l^{(i-1)+(j-1)}}=0,
\end{equation}
and
\begin{equation}\label{eq:E-gigi}
\langle {g}'_i,{g}'_i\rangle_E=\sum\limits_{l=1}^n{w_l^2\alpha_l^{2(i-1)}}=0,
\end{equation}
Now, let us consider the following Hermitian inner product: for $1\le i,j\le k',$
\begin{equation}\label{eq:H-gigj}
\langle {g}_i,{g}_j\rangle_H=\sum\limits_{l=1}^n{v_l^{q+1}\alpha_l^{(i-1)+q(j-1)}}.
\end{equation}
By writing 
\begin{equation}\label{eq:B(t)}
(q+1)(j-1)=A(n-1)+B(j),
\end{equation}
with $A$ being a non-negative integer and $0\le B(j)<n-1$, we obtain that
\begin{equation}\label{eq:H-gigi}
\langle {g}_j,{g}_j\rangle_H=\sum\limits_{l=1}^nv_l^{q+1}\alpha_l^{B(j)}.
\end{equation}
For $(q+1)\le n-1$, we have  that $\langle {g}_{j+1},{g}_{j+1}\rangle_H=\sum\limits_{l=1}^nv_l^{q+1}\alpha_l^{(q+1)}$ is equal to $-v_n^{q+1}$ if $(n-1)|(j-1)(q+1)$ and is equal to zero otherwise. 
We now summarize the possible embeddings as follows:
\begin{enumerate}[1)]
\item {\bf General case $(n-1)|(q^2-1)$:}\\
Assume that there exists an MDS Hermitian self-orthogonal code with parameters $[n,k_0']_{q^2}$ satisfying $2k_0'+q\le n$. 
First note that $\langle {g}_{k_0'},{g}_{k_0'}\rangle_H=\sum\limits_{l=1}^nv_l^{q+1}\alpha_l^{B(k_0')}=0$ implies that $B(k_0')\le 2k_0'-2.$
It follows that 
$$
\begin{array}{ll}
\langle {g}_{k_0'},{g}_{k_0'+1}\rangle_H&=\sum\limits_{l=1}^nv_l^{q+1}\alpha_l^{k_0'-1+qk_0'}\\
&=\sum\limits_{l=1}^nv_l^{q+1}\alpha_l^{(k_0'-1)(q+1)+q}\\
&=\sum\limits_{l=1}^nv_l^{q+1}\alpha_l^{B(k_0')+q}\\
&=0.
\end{array}
$$
The last equality holds due to the fact that the exponent $B(k_0')+q\le n-2$. Hence, for any $1\le i\le k_0'$, $\langle {g}_{i},{g}_{k_0'+1}\rangle_H=0$, and we deduce the following embedding:
\begin{enumerate}[a)]
\item if $(n-1)|k_0'(q+1)$, then there exists an MDS Hermitian self-orthogonal code with parameters $[n+1,k_0'+1]$;
\item if $(n-1){\not|}k_0'(q+1)$, then there exists, an MDS Hermitian self-orthogonal code with parameters $[n,k_0'+1]$.
\end{enumerate}
\item {\bf Case $n=2(q-1)+1$:}\\
For any $1\le i\le k'$, it follows from (\ref{eq:H-gigj}) that $\langle {g}_i,{g}_{2j_0-1}\rangle_H=\sum\limits_{l=1}^n{v_l^{q+1}\alpha_l^{(i-1)+2j_0-2}}$ since $\alpha_l^{n}=\alpha_l$ for any $1\le l\le n$, and thus from (\ref{eq:E-gigj}), we obtain that $\langle {g}_i,{g}_{2j_0-1}\rangle_H=0$ if $2j_0-1\le k'$. Similarly, for any $1\le i\le k'$, we obtain that $\langle {g}_i,{g}_{2j_0}\rangle_H=-\sum\limits_{l=1}^n{v_l^{q+1}\alpha_l^{(i-1)+(2j_0-1)}}=0$ if $2j_0\le k'$. Hence, for any $1\le i,j\le \lfloor \frac{n}{4}\rfloor$, we have that $\langle {g}_i,{g}_j\rangle_H=0$. 
which means that there exists an MDS Hermitian self-orthogonal code, say $C$, with parameters $[n=2(q-1)+1,k''=\lfloor \frac{n}{4}\rfloor]_{q^2}$. 
It follows that the code $C$ can be embedded into an MDS Hermitian self-orthogonal code with parameters $[n',k''+1]_{q^2}$, where 
\begin{equation}\label{eq:n'}
n'=
\begin{cases}
n+1\text{ if }(n-1)|k''(q+1),\\
n\text{ if }(n-1){\not|}k''(q+1).\\
\end{cases}
\end{equation}
\item {\bf Case $n=t(q-1)+1$, with $t|(q+1)$:}\\
If $t|(q+1)$, we consider the Hermitian inner products $\langle {g}_i,{g}_j\rangle_H=0$, with $j=tj_0,tj_0-1,\hdots$. By similar arguments as in point 2), we obtain that for any $1\le i,j\le \lfloor \frac{n}{2t}\rfloor$, $\langle {g}_i,{g}_j\rangle_H=0$. Thus, there exists an MDS Hermitian self-orthogonal code $C$ with parameters $[n,k''=\lfloor \frac{n}{2t}\rfloor]_{q^2}$. It follows that the code $C$ can be embedded into an MDS Hermitian self-orthogonal code with parameters $[n',k''+1]_{q^2}$, where $n'$ is determined by (\ref{eq:n'}).
\end{enumerate}
}
\end{const}
\begin{exam}\label{exam:1}
For $q=13$, $n=25$, and $k=\lfloor \frac{n+q-1}{q+1}\rfloor =2$, we obtain, from Construction \ref{thm:MDS1}, an MDS Hermitian self-orthogonal code $C_2$ with parameters $[25,2,24]_{13^2}$. It can be clearly seen that $(n-1)=24\not=k(q+1)=28\not=(n-1)+q=37$, and thus by applying the embedding Lemma \ref{lem:embedding} (from {\textbf E2.}), we obtain an MDS Hermitian self-orthogonal code $C_3$ with parameters $[25,3,23]_{13^2}$. Since $n-1=2(q-1)$, from point 2) of the above discussion, there exists an MDS Hermitian self-orthogonal code $C_7$ with parameters $[25,7,19]_{13^2}$.
The code $C_7$ has 
the generator matrix $(I_7|A_7|B_7)$, where $A_7$ and $B_7$, respectively, are given as follows:
{\scriptsize
$$
\left(
\begin{array}{llllllllllllllllllllllllll}
\theta^{31 }&\theta^{113 }&\theta^{127 }&\theta^{46 }&\theta^{116 }&\theta^{100 }&\theta^{44 }&\theta^{82 }&\theta^{40 }\\
\theta^{63 }&\theta^{74 }&\theta^{145 }&\theta^{101 }&\theta^{36 }&\theta^{58 }&\theta^{6 }&\theta^{47 }&\theta^{142 }\\
\theta^{109 }&\theta^{155}&{ 11 }&\theta^{74 }&\theta^{13 }&\theta^{30 }&\theta^{12 }&\theta^{54 }&\theta^{15 }\\
\theta^{101 }&\theta^{135 }&\theta^{113 }&\theta^{129 }&\theta^{88 }&\theta^{109 }&\theta^{86 }&\theta^{162 }&\theta^{124 }\\
\theta^{148 }&\theta^{78 }&\theta^{44 }&\theta^{95 }&\theta^{94 }&\theta^{135 }&\theta^{116 }&\theta^{19 }&\theta^{15 }\\
\theta^{52 }&\theta^{76 }&\theta^{106 }&\theta^{145 }&\theta^{11 }&\theta^{92 }&\theta^{93}&{ 1 }&\theta^{159 }\\
\theta^{11 }&\theta^{82 }&\theta^{38 }&\theta^{141 }&\theta^{163 }&\theta^{111 }&\theta^{152 }&\theta^{79 }&\theta^{74 }\\
\end{array}
\right),
$$
$$
\left(
\begin{array}{llllllllllllllllllllllllll}
\theta^{61 }&\theta^{145 }&\theta^{149 }&\theta^{79 }&\theta^{137 }&\theta^{109 }&\theta^{64 }&\theta^{92 }&\theta^{52}\\
\theta^{137 }&\theta^{23 }&\theta^{145 }&\theta^{153 }&\theta^{86 }&\theta^{113 }&\theta^{59 }&\theta^{152 }&\theta^{130}\\
\theta^{5 }&\theta^{63 }&\theta^{37 }&\theta^{85 }&\theta^{53 }&\theta^{68 }&\theta^{78 }&\theta^{97 }&\theta^{122}\\
\theta^{148 }&\theta^{33 }&\theta^{11 }&\theta^{79 }&\theta^{87 }&\theta^{137 }&\theta^{135 }&\theta^{50}&{ \theta}\\
\theta^{40 }&\theta^{127 }&\theta^{100 }&\theta^{4 }&\theta^{32 }&\theta^{122 }&\theta^{155 }&\theta^{58 }&\theta^{73}\\
\theta^{50 }&\theta^{138 }&\theta^{145 }&\theta^{44 }&\theta^{76 }&\theta^{18 }&\theta^{91 }&\theta^{29 }&\theta^{32}\\
\theta^{128 }&\theta^{82 }&\theta^{90 }&\theta^{23 }&\theta^{50 }&\theta^{164 }&\theta^{89 }&\theta^{67 }&\theta^{105}\\
\end{array}
\right),
$$
}
respectively, where  $\theta$ is a primitive element of $\F_{13^2}$. 
\end{exam}

\begin{exam} For $q=11$, $n-1=15$ ($n\not=2(q-1)+1$), and $k=2$, we obtain, from Construction \ref{thm:MDS1}, MDS Hermitian self-orthogonal codes with parameters $[16,2]_{11^2}$ and $[16,3]_{11^2}$ (from {\textbf E2.}), respectively. Now,  we have $k'=3$. Since $15{\not|} (3\times12 )$, from point 1) b), there exists an MDS Hermitian self-orthogonal code with parameters $[16,4]_{11^2}$ (confirmed by Magma). 
According to point 1) b), one may try to check the case $k'=4$ by Magma, but this is not in the range of our discussion above, where $k'$ is required to be less than or equal to $\lfloor \frac{n}{4}\rfloor -1=3$. For $k'=4$, it follows that $(n-1)|(k'+1)(q+1)$. However, it is confirmed, by Magma, that the code with parameters $[16,5]$ is no longer a Hermitian self-orthogonal code.
For $q=11$ and $n=3(q-1)+1=31$, it follows from point 3) of Construction \ref{thm:MDS1} that there exist MDS Hermitian self-orthogonal codes with parameters $[31,5]_{11^2}$. Now $k'=5$ and $k'(q+1)=60\equiv 0 \pmod {n-1}$, and thus there exists an MDS Hermitian self-orthogonal code with parameters $[32,6]_{11^2}$. 
It should be noted that $k'=6$ is the maximal dimension such that $M_{k'(q+1)}$ associated to (\ref{eq:square1}) is a square matrix, and thus we can no longer embed the Hermitian self-orthogonal code with dimension $k'$ into that of dimension $k'+1$. 
\end{exam}

To provide some constructions from cosets of the multiplicative subgroups of $\F_{q^2}^*$, we need the following lemma to fulfill the desired properties of those cosets.
\begin{lem} \label{lem:alpha_ij}Let $q=p^m$ with $p$ an odd prime, $n=\frac{q-1}{p^r+1}$ and for $\alpha_i,\alpha_j\in \F_q$ with $\alpha_i\not =\alpha_j$, denote $\alpha_{ij}=\alpha_i^n-\alpha_j^n.$ Let $\omega$ be a primitive element of $\F_q$. If $r|m$, then $\alpha_{ij} \in \F_{p^r}.$
\end{lem}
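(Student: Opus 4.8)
The plan is to prove the stronger and cleaner statement that \emph{each} power $\alpha_i^n$ already lies in the subfield $\F_{p^r}$; since $r\mid m$ makes $\F_{p^r}$ a genuine subfield of $\F_q$ and a field is closed under subtraction, the difference $\alpha_{ij}=\alpha_i^n-\alpha_j^n$ then lies in $\F_{p^r}$ for free (indeed the hypothesis $\alpha_i\neq\alpha_j$ is not needed for the membership itself, only to pin down the intended quantity). So the whole lemma reduces to the single assertion $\alpha^n\in\F_{p^r}$ for every $\alpha\in\F_q$.

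To prove that assertion I would use the Galois description of the subfield: because $r\mid m$, the extension $\F_q/\F_{p^r}$ is cyclic with group generated by the Frobenius $\phi\colon x\mapsto x^{p^r}$, and an element of $\F_q$ lies in $\F_{p^r}$ exactly when it is fixed by $\phi$. Thus it suffices to check $(\alpha^n)^{p^r}=\alpha^n$, i.e. $\alpha^{np^r}=\alpha^n$. The case $\alpha=0$ is trivial, so take $\alpha\in\F_q^*$, where $\alpha^{q-1}=1$. The arithmetic heart of the argument is then to reduce the exponent $np^r$ modulo $q-1$ using the defining relation of $n$: $n$ is constructed precisely so that $n$ times the relevant factor equals $q-1$, which yields $np^r\equiv n\pmod{q-1}$ and hence $\alpha^{np^r}=\alpha^{q-1}\alpha^{n}=\alpha^n$. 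Equivalently, this bookkeeping says $\alpha^n$ is a $(p^r-1)$-th root of unity, and these are exactly the elements of $\F_{p^r}^*$; the primitive element $\omega$ gives the same conclusion concretely by writing $\alpha=\omega^a$, so that $\alpha^n=\omega^{an}$ runs through the unique subgroup of $\F_q^*$ of order $p^r-1$, namely $\F_{p^r}^*$.

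The step I expect to be the only real obstacle is this exponent computation, and specifically keeping track of the exact form of the divisor defining $n$. Everything hinges on landing back on the exponent $n$ (so that $\phi$ fixes $\alpha^n$) rather than on $-n$: a sign discrepancy here would instead give only $\phi^2(\alpha^n)=\alpha^n$ and place $\alpha^n$ in $\F_{p^{2r}}$ rather than $\F_{p^r}$. For this reason I would verify $np^r\bmod(q-1)$ exactly at the outset, and confirm that the divisibility hypothesis $r\mid m$ is indeed what makes $n$ an integer and $\F_{p^r}\subseteq\F_q$ simultaneously; once that identity is nailed down, the membership $\alpha^n\in\F_{p^r}$, and hence $\alpha_{ij}\in\F_{p^r}$, is immediate.
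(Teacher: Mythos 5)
The obstacle you isolated as the ``only real obstacle'' is exactly where the argument fails, so your proposal does not close as written. With the lemma's stated definition $n=\frac{q-1}{p^r+1}$, the defining relation is $n(p^r+1)=q-1$, hence $np^r=(q-1)-n$ and $np^r\equiv -n\pmod{q-1}$: the sign comes out the bad way. Thus for $\alpha\in\F_q^*$ one gets $(\alpha^n)^{p^r}=\alpha^{-n}$, i.e.\ $\alpha^n$ is a $(p^r+1)$-st root of unity; these form the subgroup of $\F_q^*$ of order $p^r+1$, which lies in $\F_{p^{2r}}^*$ (because $(p^r+1)\mid(p^{2r}-1)$) but, by Lagrange, cannot lie in $\F_{p^r}^*$, whose order is $p^r-1$. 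Your fallback diagnosis is precisely what happens: one only obtains $\phi^2(\alpha^n)=\alpha^n$. Worse, the lemma as literally stated is false, so no proof can rescue it: take $p=3$, $m=2$, $q=9$, $r=1$ (so $r\mid m$ and $n=\frac{8}{4}=2$), $\alpha_i=\omega$ a primitive element of $\F_9$, and $\alpha_j=0$; then $\alpha_{ij}=\omega^2$ has multiplicative order $4$, while every element of $\F_3^*$ has order at most $2$, so $\alpha_{ij}\notin\F_3$.

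You should know, however, that this defect is inherited from the paper rather than introduced by you. The paper's proof is the same Frobenius computation, applied to the difference rather than to each term: it writes $\alpha_{ij}^{p^r}=\alpha_i^{np^r}-\alpha_j^{np^r}=\alpha_i^{q-1+n}-\alpha_j^{q-1+n}=\alpha_{ij}$, and the middle equality silently uses $np^r=q-1+n$, i.e.\ $n(p^r-1)=q-1$, which contradicts the stated $n=\frac{q-1}{p^r+1}$ and instead corresponds to $n=\frac{q-1}{p^r-1}$. With that (evidently intended) definition of $n$, both the paper's proof and yours are correct, and they are essentially the same argument; yours is marginally stronger and cleaner, since it shows that each $\alpha^n$ individually lies in $\F_{p^r}$ (as a $(p^r-1)$-st root of unity, using $r\mid m$ so that $\F_{p^r}\subseteq\F_q$), from which the statement about differences is immediate, and it makes explicit that ``$p$ odd'', ``$\alpha_i\neq\alpha_j$'', and the primitive element $\omega$ are never actually needed. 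In short: same approach as the paper, same hidden sign error; the one verification you deferred is the one that fails for the $n$ appearing in the statement.
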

\begin{proof}
Raising $\alpha_{ij}$ to the power $p^r,$ we get
$$
\alpha_{ij}^{p^r}={(\alpha_i^{n{p^r}}-\alpha_j^{np^r})}
={(\alpha_i^{q-1+n}-\alpha_j^{q-1+n})}=\alpha_i^n-\alpha_j^n=\alpha_{ij}.
$$
Thus, the result follows.
\end{proof}
We can now give more constructions of MDS Hermitian self-orthogonal codes from the cosets.
\begin{const}\label{thm:multi-cosets}
\textnormal{
 Let $m=2s$, $q = p^m$, $q_0=p^s$ be an odd prime power, $n= \frac{q-1}{p^r+1}$ even, $1\le r<m$, $r|\frac{m}{2}$ and $1\le k\le \lfloor \frac{(t+1)n+q_0-1}{q_0+1} \rfloor$. Set $U_n=\{\alpha\in \F_q| \alpha^n=1\}$, say $U_n=\{u_1,\hdots,u_n\}$. Assume that there exist $t$ distinct multiplicative cosets with their coset leaders $\alpha_1,\hdots,\alpha_t$ are elements in $\F_{q_0^2}$ of power $q_0+1$. 
 Let $\alpha_1U_n,\hdots,\alpha_tU_n$ be $t$ nonzero cosets of $U_n$. Put $U=U_n\cup\{0\}\cup \left(\bigcup\limits_{i=1}^t\alpha_iU_n\right)=\{a_1,\hdots,a_{(t+1)n},a_{(t+1)n+1}\}$, and write
$$h(x)=\prod\limits_{\alpha\in U}(x-\alpha).$$
Then the derivative of $h(x)$ is given by 
$$
\begin{array}{ll}
h'(x)=&((n+1)x^n-1)\prod\limits_{i=1}^t(x^n-\alpha_i^n)\\
&+nx^n(x^n-1)\left(\sum\limits_{i=1}^t\prod\limits_{j=1,j\not=i}^t(x^n-\alpha_j^n)\right).\\
\end{array}
$$
For $1\le j\le t,1\le s\le n$, we have 
\begin{equation}
h'(\alpha_ju_s)=n\alpha_j^n(1-\alpha_j^n)\prod\limits_{i=1,i\not=j}^t(\alpha_j^n-\alpha_i^n).
\label{eq:derivative}
\end{equation}
For $1\le i,j\le t$ and $n=\frac{q-1}{p^r+1},$ we have, from Lemma \ref{lem:alpha_ij},
$\alpha_{ji}=\alpha_j^n-\alpha_i^n\in {\F_{p^r}}$.
 Hence if $r|\frac{m}{2}$, then $\alpha_{ji}=\beta_{ji}^{q+1}$ for some $\beta_{ji}\in \F_{p^m}$
Thus for any $a_i\in U, h'(a_i)=\beta_i^{q+1}$ for some $\beta\in \F_{q^2}$.
Clearly, all the $(t+1)n+1$ roots of $h(x)$ are simple. Set $G=(k-1)P_\infty$, $D=(h)_0$, $\omega=\frac{dx}{h(x)}$ and $H=((t+1)n-k-1)P_\infty$. Thus by Theorem \ref{thm:key}, the code $C_{{\cal L}_{q}}(D,G;{\bf v})$, where ${\bf v}=(v_1,\hdots,v_{(t+1)n+1})$ and $v_i^{q+1}=1/\beta_i$, is Hermitian self-orthogonal over $\F_{q^2}$.
Hence, there exists an MDS Hermitian self-orthogonal $[(t+1)n+1,k]_{q_0^2}$ code. By applying the embedding Lemma \ref{lem:embedding}, it follows that
\begin{itemize}
\item there exists an MDS Hermitian self-orthogonal $[(t+1)n+2,k+1]_{q^2}$ code if $k(q+1)=(t+1)n$;
\item there exists an MDS Hermitian self-orthogonal $[(t+1)n+1,k+1]_{q^2}$ code if ${(t+1)n+q}\not=k(q+1)\not=(t+1)n$.
\end{itemize}
}
\end{const}
\begin{exam}\label{exam:2}
For $q=9$ and $n=8,t=2$, the maximal dimension of the code in Construction \ref{thm:MDS1} is $k=2$, and thus we obtain an MDS Hermitian self-orthogonal code $C'_2$ with parameters $[17,2,16]_{9^2}$. It can be clearly seen that $16\not=k(q+1)=20\not= (n-1)+q=25$, and thus by applying the embedding Lemma \ref{lem:embedding}, we obtain an MDS Hermitian self-orthogonal code $C'_3$ with parameters $[17,3,15]_{9^2}$. Denote $C'_2={\bf a}\cdot C_2,C'_3={\bf a}\cdot C_3$, where ${\bf a}=(\theta^{76}, 1, 1, 1, 1, 1, 1, 1, 1, 1, 1, 1, 1, 1, 1, 1, 1)$ with $\theta$ being a primitive element of $\F_{9^2}$. We give the generator matrix $G_3=(A_3|B_3)$ of $C_3$ as follows:
{\scriptsize
$$
A_3=\left(
\begin{array}{lllllllll}
\theta^{76} &1& 1& 1& 1& 1& 1& 1& 1\\
0&2&\theta^{45 }&\theta^{50 }&\theta^{55 }&\theta^{60 }&\theta^{65 }&\theta^{70 }&\theta^{75}\\
0&1&\theta^{10 }&\theta^{20 }&\theta^{30}&{ 2 }&\theta^{50 }&\theta^{60 }&\theta^{70}\\
\end{array}
\right),
$$
$$
B_3=\left(
\begin{array}{llllllllc}
1& 1& 1& 1& 1& 1& 1& 1\\
{ 1 }&\theta^{5 }&\theta^{10 }&\theta^{15 }&\theta^{20 }&\theta^{25 }&\theta^{30 }&\theta^{35}\\
{ 1 }&\theta^{10 }&\theta^{20 }&\theta^{30}&{ 2 }&\theta^{50 }&\theta^{60 }&\theta^{70}\\
\end{array}
\right).
$$
}
Note that the code $C_3$ may not be Hermitian self-orthogonal while the code $C'_3$ is, and the first two rows of $G_3$ generates a code $C_2$ that is equivalent to the Hermitian self-orthogonal code $C'_2$ mentioned above, that is, $C'_2={\bf a}\cdot C_2$.
\end{exam}

We can now provide the last explicit constructions of MDS Hermitian self-orthogonal codes from many cosets. The codes obtained have various lengths, and some of the intermediate lengths are different from those in Construction \ref{thm:MDS1}.

\begin{const}\label{const:multicoset2}
\textnormal{
Assume that $n \mid (q^{2}-1)$. Put 
\begin{equation}
n_{1}=\gcd (n, q+1) \text{ and }n_{2}=\frac{n}{\gcd (n,q+1)}. 
\label{eq:n1n2}
\end{equation}
The first equality of (\ref{eq:n1n2}) implies that $n_2$ and $\frac{q+1}{n_1}$ are coprime, and from the assumption, we get that $n_2|(q-1)\frac{q+1}{n_1}$, and thus $n_2|(q-1)$.
Let $U_n$ and $V_n$ be two subgroups of $\mathbb{F}_{q^{2}}^{*}$ generated by $\theta^{\frac{q^{2}-1}{n}}$ and $\theta^{\frac{q+1}{n_{1}}}$, respectively, where $\theta$ is a primitive element of $\mathbb{F}_{q^{2}}$. Then $|U_n|=n$ and $|V_n|=(q-1)n_{1}$. Now $\frac{q^{2}-1}{n}=\frac{q+1}{n_{1}}\cdot\frac{q-1}{n_{2}}$ implies that $ \frac{q+1}{n_{1}} \mid \frac{q^{2}-1}{n}$, and we deduce that $U_n$ is a subgroup of $V_n$.
Let $\alpha_{1}U_n, \ldots , \alpha_{\frac{q-1}{n_{2}}-1}U_n $ be all the distinct cosets of $V_n$ different from $U_n$.
For $1 \leq t \leq \frac{q-1}{n_{2}}-1$, put $U=U_n \bigcup^{t}\limits_{j=1}\alpha_{j}U_n\cup \{0\}$, say $U=\{a_1,\hdots,a_{(t+1)n+1}\}$, and  write $$h(x)=\prod\limits_{\alpha\in U}(x-\alpha).$$
As in {\bf Construction \ref{thm:multi-cosets}}, we check the values of the derivative $h'$ given by (\ref{eq:derivative}).
Since $\alpha_{j}$ is in $V_n$, we can write $\alpha_{j}=\theta^{e_j\frac{q+1}{n_{1}}}$ for some positive integer $e_j$. Thus
$\alpha_{j}^{n}=\theta^{e_jn\frac{q+1}{n_{1}}}=\theta^{e_jn_{2}(q+1)}$, and it is an element of $\mathbb{F}^{*}_{q}$. 
It implies that for any $\alpha\in U$, we have $h'(\alpha)\in \F_{q}^*$, that is for any $1\le i\le (t+1)n+1$, we have $h'(a_i)=\beta_i^2$ for some $\beta_i \in \F_{q^2}$.
Set $G=(k-1)P_\infty$, $D=(h)_0$, $\omega=\frac{dx}{h(x)}$ and $H=((t+1)n-k-1)P_\infty$. Thus by Theorem \ref{thm:key}, the code $C_{{\cal L}_{q}}(D,G;{\bf v})$, where ${\bf v}=(v_1,\hdots,v_{(t+1)n+1})$ and $v_i^{q+1}=1/\beta_i$, is Hermitian self-orthogonal over $\F_{q^2}$.
Hence, for $n|(q^2-1)$, $n_2=\frac{n}{\gcd (n,q+1)}$, $1\le t\le \frac{q-1}{n_2}-1$ and $1\le k\le \frac{(t+1)n+q}{q+1}$, there exists an MDS Hermitian self-orthogonal $[(t+1)n+1,k]_{q^2}$ code. By applying the embedding Lemma \ref{lem:embedding}, it follows that
\begin{itemize}
\item there exists an MDS Hermitian self-orthogonal $[(t+1)n+2,k+1]_{q^2}$ code if $k(q+1)=(t+1)n$;
\item there exists an MDS Hermitian self-orthogonal $[(t+1)n+1,k+1]_{q^2}$ code if ${(t+1)n+q}\not=k(q+1)\not=(t+1)n$.
\end{itemize}
}
\end{const}

We give an example as follows.

\begin{exam}\label{exam:3}
For $q=17$, $n=12$ and $t=2$, we obtain, from Construction \ref{const:multicoset2}, an MDS Hermitian self-orthogonal code with parameters $[25,2,24]_{17^2}$. It can be easily checked that $n-1=24\not=k(q+1)=36\not=31$, and thus from the embedding Lemma \ref{lem:embedding}, there exists an MDS Hermitian self-orthogonal code with parameters $[26,3,24]_{17^2}$. 
\end{exam}

\begin{const}\label{thm:MDS-new1}
\textnormal
{ Let $q$ be a prime power. 
Fix an element $\alpha\in \F_{q^2}\backslash \F_q$. Write $\F_q$ as $\{u_1,\hdots, u_q\}$, and denote $\alpha_{i,j}=u_i\alpha+u_j$ for $1\le i\le t$ and $1\le j\le q$. Put $U=\{\alpha_{i,j}|1\le i\le t, 1\le j\le q\}$,
and 
\begin{equation*}
h(x)=\prod\limits_{\beta\in U}(x-\beta)=\prod\limits_{
1\le i\le t,1\le j\le q
}(x-\alpha_{i,j}).
\end{equation*}
Then the derivative $h'(x)$ at $\alpha_{i_0,j_0}\in U$ is given as follows:
\begin{equation*}
\begin{array}{ll}
h'(\alpha_{i_0,j_0})
&=
\prod\limits_{
\begin{array}{c}
1\le i\le t,1\le j\le q\\
(i,j)\not= (i_0,j_0)\\
\end{array}
}(\alpha_{i_0,j_0}-\alpha_{i,j})\\
&=
\prod\limits_{
\begin{array}{c}
1\le j\le q\\
j\not= j_0\\
\end{array}
}(u_{i_0}\alpha+u_{j_0}-u_{i_0}\alpha-u_j)\\
&\prod\limits_{
\begin{array}{c}
1\le i\le t,\\
1\le j\le q\\
i\not= i_0\\
\end{array}
}(u_{i_0}\alpha+u_{j_0}-u_{i}\alpha-u_j)\\
&=
\prod\limits_{
\begin{array}{c}
1\le j\le q\\
j\not= j_0\\
\end{array}
}(u_{j_0}-u_j)\\
&\prod\limits_{
\begin{array}{c}
1\le i\le t\\
1\le j\le q\\
i\not= i_0\\
\end{array}
}((u_{i_0}-u_i)\alpha+(u_{j_0}-u_j))\\
&=
-
\prod\limits_{
\begin{array}{c}
1\le i\le t\\
i\not= i_0\\
\end{array}
}((u_{i_0}-u_i)^q\alpha^q+(u_{j_0}-u_j)\alpha)\\
&=
-(\alpha^q-\alpha)^{t-1}
\prod\limits_{
\begin{array}{c}
1\le i\le t\\
i\not= i_0\\
\end{array}
}(u_{i_0}-u_i).\\
\end{array}
\end{equation*}
The two last equalities hold due to the fact that the product of all element in $\F_q^*$ is equal to $-1$.
This shows that for any $\alpha_{i,j}\in U$, we have that $(\alpha^q-\alpha)^{1-t}h'(\alpha_{i,j})$ is an element of $\mathbb{F}^{*}_{q}$, and thus $(\alpha^q-\alpha)^{1-t}h'(\alpha_{i,j}) =\beta_{i,j}^{q+1}$ for some $\beta_{i,j} \in \F_{q^2}$.
Set $\omega=(\alpha^q-\alpha)^{t-1}\frac{dx}{h(x)}$, $D=(h)_0$, $G=(k-1)P_\infty$ and $H=(n-1-k)P_\infty$. Thus from Theorem \ref{thm:key}, the code ${\bf v}\cdot C_{{\cal L}_{q}}(D,G)$, where ${\bf v}=(v_{i,j})_{1\le i\le t,1\le j\le q}$ and $v_{i,j}=1/\beta_{i,j}$, is an MDS Hermitian self-orthogonal $[tq,k]_{q^2}$ code. By applying the embedding Lemma \ref{lem:embedding}, it follows that
\begin{itemize}
\item there exists an MDS Hermitian self-orthogonal $[tq+1,k+1]_{q^2}$ code if $k(q+1)=tq-1$;
\item there exists an MDS Hermitian self-orthogonal $[n,k+1]_{q^2}$ code, where $n=tq$ or $n=tq+1$, if ${(t+1)q-1}\not=k(q+1)\not=tq-1$.
\end{itemize}
}
\end{const}

\begin{exam}\label{exam:4}
 For $q=7$ and $n=21$, the maximal dimension of the code in Construction \ref{thm:MDS-new1} is $k=3$, and thus we obtain an MDS Hermitian self-orthogonal code $D'_3$ with parameters $[21,3,19]_{7^2}$. It can be clearly seen that $(n-1)=20\not=k(q+1)=24\not= (n-1)+q=27$, and thus by applying the embedding Lemma \ref{lem:embedding}, we obtain an MDS Hermitian self-orthogonal code $D'_4$ with parameters $[21,4,18]_{7^2}$. Denote $D'_4={\bf a}\cdot D_4$, where ${\bf a}=( 1 ,1,\theta^{47}, 1 ,1 ,1, 1 ,1 ,1,\theta^{47}, 1,\theta^{47},\theta^{47}, 1, 1 ,1,\theta^{47},\theta^{47}, 1,\theta^{47}, 1 )$ with $\theta$ being a primitive element of $\F_{7^2}$. We give  the generator matrix $G'_4=(A'_4|B'_4)$ of $D_4$ as follows:
{\scriptsize
$$
A'_4=\left(
\begin{array}{lllllllllll}
1 &1&\theta^{47}& 1 &1 &1& 1 &1 &1&\theta^{47}& 1\\
0&\theta^{25}&\theta^{27}&\theta^{33}&\theta^{34}&\theta^{35}&\theta^{38}&\theta^{41}&\theta^{44}&\theta^{47}& \theta\\
0&\theta^{2}&\theta^{7}&\theta^{18}&\theta^{20}&\theta^{22}&\theta^{28}&\theta^{34}& 5&\theta^{47}&\theta^{2}\\
0&\theta^{27}&\theta^{35}&\theta^{3}&\theta^{6}&\theta^{9}&\theta^{18}&\theta^{27}&\theta^{36}&\theta^{47}&\theta^{3}\\
\end{array}
\right),
$$
$$
B'_4=\left(
\begin{array}{lllllllllll}
\theta^{47}&\theta^{47}& 1& 1 &1&\theta^{47}&\theta^{47}& 1&\theta^{47}& 1\\
\theta^{4}&\theta^{6}&\theta^{9}& 2&\theta^{17}&\theta^{17}&\theta^{18}&\theta^{21}&\theta^{21}&\theta^{23}\\
\theta^{9}&\theta^{13}&\theta^{18}& 4&\theta^{34}&\theta^{35}&\theta^{37}&\theta^{42}&\theta^{43}&\theta^{46}\\
\theta^{14}&\theta^{20}&\theta^{27}& 1&\theta^{3}&\theta^{5}& 3&\theta^{15}&\theta^{17}&\theta^{21} \\
\end{array}
\right).
$$
}
\end{exam}

\subsection{Hermitian self-orthogonal codes from maximal curves}
Curves having many rational points produce long codes. The number of $\mathbb F_q$-rational points of a smooth projective curve $\cal X$ defined over $\mathbb F_q$ is bounded by the well known Hasse-Weil bound:
 $$|\sharp{\cal X}(\mathbb F_q)-(q+1)|\le 2 g \sqrt{q},$$
where $g$ is the genus of ${\cal X}$. Curves attaining the bound are called maximal, and they are of great interest in coding theory. In this subsection, we are considering some maximal curves and employ them to construct Hermitian self-orthogonal AG codes.

For $m$ being even, consider the following affine elliptic curve over $\F_{2^m}$:
\begin{equation}\label{eq:elliptic}
{\cal E}:~y^2+y=x^3+c,
\end{equation}
where
\begin{equation} \label{eq:c}
\begin{cases}
c=0\text{ if }m\equiv 2\pmod 4,\\
\text{Tr}(c):=\sum\limits_{i=0}^{m-1}c^{2^i}=1\text{ if }m\equiv 0\pmod 4.
\end{cases}
\end{equation}
Denote 
\begin{equation}\label{eq:U_c}
U_c=\{\alpha\in \F_{2^m}|\text{Tr}(\alpha^3+c)=0\},
\end{equation} 
where $c$ is defined by (\ref{eq:c}).
Any element $\alpha\in U_c$ gives rise to two rational points on $\cal E$, and we denote these rational points by $P_{\alpha}^{(1)}, P_{\alpha}^{(2)}$. Hence, there are $2\sharp U_c$ finite points and one infinite point $P_\infty.$ It is well known from \cite{Mene} that the number of rational points on $\cal E$ is $2^m+2\sqrt{2^m}+1$, and thus $\sharp U_c=2^{m-1}+\sqrt{2^m}$, and $\cal E$ is a maximal curve ($g=1$).
To be able to explicitly construct Hermitian self-orthogonal codes from the elliptic curve (\ref{eq:elliptic}), we propose the following assumption.\\

\noindent
{\bf Assumption 1:} Assume that for any $\alpha\in U_c$, we have that $h'(\alpha)=\beta^{q+1}$ for some $\beta\in \F_{q^2},$ where $h(x)=\prod\limits_{\alpha\in U_c}(x-\alpha)$, and $U_c$ is defined as in (\ref{eq:U_c}).\\

It should be noted from our Magma check, the above assumption is always true for finitely many values of $q$. Using {\bf Assumption 1}, we can now construct Hermitian self-orthogonal codes from the elliptic curve (\ref{eq:elliptic}).
\begin{const}\label{thm:elliptic}
\textnormal
{
Let $s$ be a positive integer, $m=2s$ and $q=2^s$. Put $h(x)=\prod\limits_{\alpha\in U_c}(x-\alpha)$, $n=\sharp U_c$, $D=(h)_0$, $\omega=\frac{dx}{h(x)}$. Label the elements of $U_c$ by $\alpha_1,\hdots,\alpha_n$. Assume that for any $1\le i \le n$, $h'(\alpha_i)=\beta_i^{q+1}$ for some $\beta_i\in \F_{q^2}$. Set $G=(k-1)P_\infty$ and 
$H=(2n-k+1)P_\infty$, with $2\le k\le \lfloor \frac{2n+q+1}{q+1}\rfloor $. We have that points 1), 3), and 4) of Theorem \ref{thm:key} follow immediately. From the range of $k$, it follows that $G\le H$, and this proves point 2) of Theorem \ref{thm:key}. Thus, the code $C_{{\cal L}_{q}}(D,G;{\bf v})$, where ${\bf v}=(v_1,v_1\hdots,v_{n},v_{n})$ and $v_i=1/\beta_i$ is Hermitian self-orthogonal over $\F_{q^2}$. 
Hence, there exists a Hermitian self-orthogonal $[2n,k-1,\ge 2n-k+1]_{q^2}$ code whose Hermitian dual has parameters $[2n,2n+1-k,\ge k-1]_{q^2}$, with $k\le n+1$.
}
\end{const}
\begin{exam}\label{exam:elliptic}
For $q=4$ and $\theta$ a primitive element of $\F_{4^2}$, by using Magma, we obtain, from Construction \ref{thm:elliptic} for $c=\theta^3$, a Hermitian self-orthogonal code with parameters $[24,4,20]_{4^2}$. We give the generator matrix $(I_4|A''_4|B''_4)$ of such a code as follows:
{\scriptsize
$$
A''_4=\left(
\begin{array}{llllllllllllllllllllllll}
\theta^{13}&\theta^{14}&\theta^{12}&\theta^5&\theta^3&\theta^{14}&\theta^{10}&\theta^9&\theta^{12}&\theta^{11}\\
\theta^{13}&\theta^{14}&\theta^6&\theta^8&\theta^7&\theta^9&\theta^5&\theta^7&\theta^3&\theta^{14}\\
1&0&\theta^3&\theta^3&\theta^7&\theta^7&\theta^4&\theta^4&\theta^4&\theta^4\\
0&1&\theta^4&\theta^4&\theta^3&\theta^3&\theta^4&\theta^4&\theta^2&\theta^2\\
\end{array}
\right),
$$
$$
B''_4\left(
\begin{array}{llllllllllllllllllllllll}
\theta^7&\theta^5&\theta^5&\theta^{12}&1&0&\theta^{14}&\theta^2&\theta^4&\theta^9\\
\theta&\theta^{12}&\theta^2&\theta^{13}&\theta^2&\theta^8&1&\theta^6&\theta&\theta^7\\
\theta^{10}&\theta^{10}&\theta^2&\theta^2&\theta^3&\theta^3&\theta^8&\theta^8&\theta^6&\theta^6\\
\theta&\theta&\theta^{10}&\theta^{10}&\theta^6&\theta^6&1&1&\theta^8&\theta^8\\
\end{array}
\right).
$$
}
Also for $q=2^3,2^4,2^5$, respectively, we obtain, from Magma, Hermitian self-orthogonal codes with parameters $[80,8,72]_{2^6}$, $[288,30,258]_{2^8}$, $[1088,102,986]_{2^{10}}$, respectively.
\end{exam}

Next, we consider the affine hyper-elliptic curve over $\mathbb F_{q^2}$ with $q=2^m$, which is defined by
\begin{equation}
{\cal C}: ~y^2+y=x^{q+1}.
\label{eq:hyper-elliptic}
\end{equation}
For any $\alpha\in \mathbb F_{q^2}$, there exactly exist two rational points $P_{\alpha}^{(1)}, P_{\alpha}^{(2)}$ with $x$-component $\alpha$. 
The set ${\cal C}(\mathbb F_{q^2})$ of all rational points of ${\cal C}$ equal $\{P_{\alpha}^{(1)}| \alpha \in \mathbb F_{q^2}\}\cup \{P_{\alpha}^{(2)}|\alpha \in \mathbb F_{q^2}\} \cup \{ P_\infty\}$. 
Thus the curve $\cal C$ has $1+2q^2$ points, and it is a maximal curve since its genus \cite{Stich73} $g=\frac{q}{2}$ and $1+2q^2=1+q^2+2\frac{q}{2}q$.

\begin{const}\label{thm:hyper-elliptic2}
\textnormal{
Let $q=2^m, m\ge 2$, $(n-1)|(q^2-1)$ and $1+q/2\le k\le \lfloor \frac{2n+2q-1}{q+1} \rfloor$. 
 Consider the affine hyper-elliptic curve defined by (\ref{eq:hyper-elliptic}). Take $U_n=\{\alpha \in \F_{q^2}|\alpha^n=\alpha \}\subseteq {\cal C}(\mathbb F_{q^2})$. Set $G=(k-1)P_\infty$. Put $h(x)=x^n+x$, $D=(h)_0$, $\omega=\frac{dx}{h(x)}$ and $H=(2n+q-k-1)P_\infty$. For any $\alpha\in U_n$, we have $h'(\alpha)=1$. Thus by Theorem \ref{thm:key}, the code $C_{{\cal L}_{q}}(D,G;{\bf v})$, where ${\bf v}=(v_1,v_1\hdots,v_{n},v_{n})$ and $v_i=1$ is Hermitian self-orthogonal over $\F_{q^2}$. 
Hence, there exists a Hermitian self-orthogonal $[2n,k-q/2,\ge 2n-k+1]_{q^2}$ code whose Hermitian dual has parameters $[2n,2n+q/2-k,\ge k+1-q]_{q^2}$.
}
\end{const}

The affine Hermitian curve over $\mathbb F_{q^2}$ with $q=p^m$ is defined by
\begin{equation}
{\cal H}: ~y^q+y=x^{q+1}.
\label{eq:hermitian}
\end{equation}
For any $\alpha\in \mathbb F_{q^2}$, there exactly exist $q$ rational points $P_{\alpha}^{(1)}, P_{\alpha}^{(2)},\hdots, P_{\alpha}^{(q)}$ with $x$-component $\alpha$.
The set ${\cal H}(\mathbb F_{q^2})$ of all rational points of ${\cal H}$ equal $\{P_{\alpha}^{(1)}|\alpha \in \mathbb F_{q^2}\}\cup \{P_{\alpha}^{(2)}| \alpha \in \mathbb F_{q^2}\} \cup \cdots \cup \{P_{\alpha}^{(q)}|\alpha \in \mathbb F_{q^2}\}\cup \{P_\infty\}$. Thus the curve $\cal H$ has $1+q^3$ points, and it is a maximal curve since its genus $g=\frac{q(q-1)}{2}$ and $1+q^3=1+q^2+2\frac{q(q-1)}{2}q$. See \cite[Lemma 6.4.4]{Stich} for the detail.

\begin{const} \label{thm:hermitian-2}
\textnormal{
 Let $q=p^m\ge 4$ and $1+q(q-1)/2\le k\le \lfloor \frac{Nq+q^2-1}{q+1} \rfloor$. 
Consider a set $U$ of size $N$, say $U=\{\alpha_1,\hdots,\alpha_N\}$, with the following four cases:
\begin{enumerate}[i)]
\item $U=\{\alpha\in \F_{q^2}|\alpha^N=\alpha\},$
\item $U=U_n \bigcup^{t}\limits_{j=1}\alpha_{j}U_n\cup \{0\}$, where $U_n$ and $\alpha_j$ are determined as in {\bf Construction \ref{const:multicoset2}},
\item $U=\{\alpha_{i,j}|1\le i\le t,1\le j\le q\}$, where $\alpha_{i,j}$ is determined as in {\bf Construction \ref{thm:MDS-new1}}. 
\end{enumerate}
Set $h(x)=\prod\limits_{\alpha \in U}(x-\alpha)$, $D=(h)_0$, $\omega=\frac{dx}{h(x)}$, $G=(k-1)P_\infty$ and $H=(Nq+q(q-1)-k-1)P_\infty$. 
From the previous discussions in Subsection \ref{subsection:projective-lines}, we know that for any $1\le i\le N$, $h'(\alpha_i)=\beta_i^{q+1}$ for some $\beta_i\in \F_{q^2}$.
Thus by Theorem \ref{thm:key}, the code $C_{{\cal L}_{q}}(D,G;{\bf v})$, where ${\bf v}=(\underbrace{v_1,\hdots,v_1}_{q},\hdots,\underbrace{v_{N},\hdots,v_{N}}_q)$ and $v_i=1/\beta_i$, is Hermitian self-orthogonal over $\F_{q^2}$. 
Hence, there exists a Hermitian self-orthogonal $[Nq,k-\frac{q(q-1)}{2},\ge Nq-k+1 ]_{q^2}$ code whose Hermitian dual has parameters $[Nq,Nq+q(q-1)/2-k,\ge k+1-q(q-1)]_{q^2}$
if one of the following condition holds
\begin{enumerate}[i)]
\item $(N-1)|(q^2-1)$, 
\item $N=(t+1)n+1$, $n|(q^2-1)$, $n_2=\frac{n}{\gcd (n,q+1)}$, $1\le t\le \frac{q-1}{n_2}-1$,
\item $N=tq$, $1\le t\le q$.
\end{enumerate} 
}
\end{const}

\begin{const}\label{thm:hermitian-3}
\textnormal{
 Let $q=p^m\ge 4$, $s=q\frac{q^2+1}{2}$ and $1+\frac{(q-1)^2}{4}\le k\le \lfloor \frac{s+\frac{(q-1)^2}{2}+q-1}{q+1} \rfloor$. 
Consider an affine algebraic curve over $\F_{q^2}$ defined by
$${\cal X}: y^{q}+y=x^{\frac{{q}+1}{2}}.$$
The curve has genus $g=\frac{(q-1)^2}{4}$.
Put $$U=\{\alpha\in \F_{q^2}|\exists \beta\in \F_{q^2}\text{ such that }\beta^{q}+\beta=\alpha^{\frac{q+1}{2}}\}.$$ 
The set $U$ is the set of $x$-component solutions to the Hermitian curve whose elements are squares in $\F_{q^2}.$ There are $\frac{q^2+1}{2}$ square elements in $\F_{q^2}$, and they give rise to ${q}\frac{q^2+1}{2}$ rational places. Write $U=\{\alpha_1,\hdots,\alpha_n\}$ with $n=\frac{q^2+1}{2}$. Set $h(x)=\prod\limits_{\alpha\in U}(x-\alpha)$, $D=(h)_0$, $\omega=\frac{dx}{h(x)}$, $G=(k-1)P_\infty$ and $H=(s+\frac{(q-1)^2}{2}-k-1)P_\infty$. 
Then $h(x)=x^n-x$ and $h'(x)=nx^{n-1}-1$. We have that $h'(0)=-1\in \F_q$ and $h'(\alpha)=n-1\in \F_q$ for any $\alpha\in U \backslash \{0\}$. Thus for any $1\le i \le n,$ we have $h'(\alpha_i)=\beta_i^{q+1}$ for some $\beta_i\in \F_{q^2}.$
Thus by Theorem \ref{thm:key}, the code $C_{{\cal L}_{q}}(D,G;{\bf v})$, where ${\bf v}=(\underbrace{v_1,\hdots,v_1}_{q},\hdots,\underbrace{v_{n},\hdots,v_{n}}_q)$ and $v_i=1/\beta_i$, is Hermitian self-orthogonal over $\F_{q^2}$.
Hence, there exists a Hermitian self-orthogonal $[s,k-\frac{(q-1)^2}{4},\ge s-k+1 ]_{q^2}$ code whose Hermitian dual has parameters $[s,s+\frac{(q-1)^2}{4}-k,\ge k+1-\frac{(q-1)^2}{2}]_{q^2}$.
}
\end{const}
\subsection{Hermitian self-orthogonal codes from other curves}
In this subsection, we consider codes defined by the affine algebraic curve over $\F_{q^2}$ with $q=p^m$ as follows:
\begin{equation}\label{eq:other-curves}
{\cal Z}:~ y^q-y=x^t.
\end{equation}
 The function fields of $\cal Z$ are Artin-Schreier extensions of $\F_q(x)$. It is well known in \cite{Stich73} that the curve $\cal Z$ has genus $g=\frac{(q-1)(t-1)}{2}$. It should be noted that if $\gcd (q^2-1,t)=1$, then the map $\alpha\mapsto \alpha^t$ is a permutation over $\F_{q^2},$ and in this case, it follows that $\sharp \{\alpha\in \F_{q^2}|\text{Tr}_{\F_{q^2}/\F_q}(\alpha^t)=i\}=\sharp\{\alpha\in \F_{q^2}|\text{Tr}_{\F_{q^2}/\F_q}(\alpha)=i\}=q$ for any $i\in \F_q$, where $\text{Tr}_{\F_{q^2}/\F_q}(\alpha):=\alpha+\alpha^{q}$. In this subsection, we will only consider the case where $\gcd (q^2-1,t)\not=1$, otherwise, we will obtain the Hermitian self-orthogonal codes whose lengths are $q^2$ and which may have worse parameters than those obtained from the projective lines.
\subsubsection{\bf The case $q$ being odd}
\begin{const}\label{thm:other-curves}
\textnormal{
 Let $q=p^m$ be an odd prime power. Let $U=\{\alpha\in \F_{q^2}|\exists \beta\in \F_{q^2}\text{ such that }\beta^q-\beta=\alpha\}$. Assume that $\gcd(t,q+1)|\frac{q+1}{2}.$ Since $\textnormal{Tr}(\beta^q-\beta)=0$ for any $\beta\in\F_{q^2}$, it follows that (\ref{eq:other-curves}) has a solution pair $(\alpha,\beta)\in \F_{q^2}\times \F_{q^2}$ when $\alpha$ is a solution of the following equation:
 \begin{equation}\label{eq:trace=0}
\alpha^t+\alpha^{tq}=0.
\end{equation}
Moreover, any $x$-component $\alpha\in U$, which is a solution of (\ref{eq:trace=0}), gives rise to $q$ different rational places. We now count the roots of (\ref{eq:trace=0}). If $\alpha\not =0$, then, since $\gcd(t,(q+1))|\frac{q+1}{2}$, the equation $$\alpha^{t(q-1)}=-1$$ has $s$ roots satisfying $s=\gcd(t(q-1),(q+1)(q-1)/2)=\gcd(t(q-1),(q+1)(q-1))$.
Label the elements of $U$ by $\alpha_0=0,\alpha_1,\hdots,\alpha_s$, and put $n=s+1$. Set $h(x)=\prod\limits_{\alpha\in U}(x-\alpha)$, $D=(h)_0$, $\omega=\frac{dx}{h(x)}$, $G=(k-1)P_\infty$ and $H=(qn+(q-1)(t-1)-k-1)P_\infty$, with $1+(t-1)(q-1)/2\le k\le \lfloor \frac{(s+1)q+t(q-1)}{q+1} \rfloor$. 
Then $h(x)=x^n-x$ and $h'(x)=nx^{n-1}-1$. We have that $h'(0)=-1\in \F_q$ and $h'(\alpha)=n-1\in \F_q$ for any $\alpha\in U \backslash \{0\}$. Thus for any $1\le i \le n,$ we have $h'(\alpha_i)=\beta_i^{q+1}$ for some $\beta_i\in \F_{q^2}.$
Put $D=\sum\limits_{\alpha\in U}\left(P_\alpha^{(1)}+\cdots+P_\alpha^{(q)}\right).$ 
Thus by Theorem \ref{thm:key}, the code $C_{{\cal L}_{q}}(D,G;{\bf v})$, where ${\bf v}=(\underbrace{v_1,\hdots,v_1}_{q},\hdots,\underbrace{v_{n},\hdots,v_{n}}_q)$ and $v_i=1/\beta_i$, is Hermitian self-orthogonal over $\F_{q^2}$.
Hence, there exists a Hermitian self-orthogonal 
$[nq,k-\frac{(q-1)(t-1)}{2},\ge nq-k+1 ]_{q^2}$ code whose Hermitian dual has parameters 
$[nq,nq+\frac{(q-1)(t-1)}{2}-k,\ge k+1-(q-1)(t-1)]_{q^2}$. Particularly, by taking $t=2,3,4$, respectively, we obtain Hermitian self-orthogonal codes and their dual codes with parameters:
\begin{enumerate}
\item $[q(2(q-1)+1),k-\frac{(q-1)}{2},\ge q(2(q-1)+1)-k+1 ]_{q^2}$ and $[q(2(q-1)+1),q(2(q-1)+1)+\frac{(q-1)}{2}-k,\ge k+1-(q-1)]_{q^2}$ if $q+1\equiv 0\pmod 4$;
\item  $[q(3(q-1)+1),k-(q-1),\ge q(3(q-1)+1)-k+1 ]$ and $[q(3(q-1)+1),q(3(q-1)+1)+(q-1)-k,\ge k+1-2(q-1)]_{q^2}$ if $q+1\equiv 0\pmod 6$;
\item $[q(4(q-1)+1),k-\frac{3(q-1)}{2},\ge q(3(q-1)+1)-k+1 ]_{q^2}$ and $[q(3(q-1)+1),q(3(q-1)+1)+\frac{3(q-1)}{2}-k,\ge k+1-3(q-1)]_{q^2}$ if if $q+1\equiv 0\pmod 8$, respectively.
\end{enumerate}
}
\end{const}
\begin{exam}\label{exam:hermitian-1}
By using Magma, we give some parameters of Hermitian self-orthogonal codes over $\F_{q^2}$ for some values of $q$ and for different values of $t$ as follows:
\begin{enumerate} 
\item for $t=2$,
\begin{itemize} 
 \item $q=3$, we obtain a Hermitian self-orthogonal code with parameters $[15, 3, 12] _{3^2}$, and its dual has parameters $[15, 12, 3] _{3^2}$;
\item $q=7$, we obtain a Hermitian self-orthogonal code with parameters $[91, 5, 84]_{7^2}$, and its dual has parameters $[91, 86, 4]_{7^2}$;
\end{itemize}
\item for $t=3$, $q=5$, we obtain a Hermitian self-orthogonal code with parameters $[65, 3, 60] _{5^2}$, and its dual has parameters $[65, 62, 3] _{3^2}$;
\item for $t=4$,
$q=7$, we obtain a Hermitian self-orthogonal code with parameters $[175, 3, 168] _{7^2}$, and its dual has parameters $[175, 172, 3] _{7^2}$;
\item for $t=5$,
$q=9$, we obtain a Hermitian self-orthogonal code with parameters $[369, 4, 359] _{9^2}$, and its dual has parameters $[369, 365, 3] _{9^2}$.
\end{enumerate}
\end{exam}
 \subsubsection{\bf The case $q$ being even} 
 \begin{const}\label{thm:other-curves-even}
 \textnormal
 { Let $q=2^m$ and $t$ be an odd integer. Put $n=\gcd(t(q-1),(q+1)(q-1))+1$. Let $U=\{\alpha\in \F_{q^2}|\exists \beta\in \F_{q^2}\text{ such that }\beta^q-\beta=\alpha\}$, say $U=\{\alpha_0,\alpha_1,\hdots,\alpha_{n-1}\}$. Set $D=(h)_0$, $G=(k-1)P_\infty$ and $H=(qn+(q-1)(t-1)-k-1)P_\infty$, with $1+(t-1)(q-1)/2\le k\le \lfloor \frac{(s+1)q+t(q-1)}{q+1} \rfloor$. Following the same process as that in Construction \ref{thm:other-curves}, we obtain a Hermitian self-orthogonal code $C_{{\cal L}_{q}}(D,G)$ with parameters $[nq,k-\frac{(q-1)(t-1)}{2},\ge nq-k+1 ]_{q^2}$ , and its Hermitian dual has parameters $[nq,nq+(t-1)(q-1)/2-k,\ge k+1-(q-1)(t-1)]_{q^2}$.
 }
 \end{const}
\begin{exam} \label{exam:hermitian-2}
By using Magma, we give some parameters of Hermitian self-orthogonal codes over $\F_{q^2}$ for some values of $q$ and for different values of $t$ as follows:
\begin{enumerate} 
\item for $t=3$,
$q=8$, we obtain a Hermitian self-orthogonal code with parameters $[176, 4, 168]_{8^2}$, and its dual has parameters $[176, 172, 3]_{2^6}$;
\item for $t=5$,
$q=4$, we obtain a Hermitian self-orthogonal code with parameters $[64, 3, 59]_{4^2}$, and its dual has parameters $[64, 61, 3]_{2^4}$;
\end{enumerate}
\end{exam}

\section{Application to quantum codes}\label{section:application}
In this section, we construct quantum error-correcting codes.
Quantum stabilizer codes are analogues of classical additive codes and they can be constructed from classical linear codes with some properties of Euclidean, Hermitian and symplectic self-orthogonality (see \cite{Cal Rai,Ash Kni,Ket Kla}).

Similar to the classical code, for any $[[n,k,d]]_q$ quantum code, the quantum singleton bound is given by $n\geq k+2d-2$. A quantum code $Q$ is called MDS if it achieves the quantum singleton bound. In order to use our results to construct quantum codes, we need to introduce the following lemma for connection.

\begin{lem}\label{lem:H-construction}\textnormal{(\cite{Ash Kni,Ket Kla})} There exists an $[[n,n-2k, d^{\perp_H}]]_q$ quantum  code whenever there exists a classical Hermitian self-orthogonal  $[n,k]_{q^2}$ code whose dual has minimum distance $d^{\perp_H}$.
\end{lem}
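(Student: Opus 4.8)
The plan is to reduce the statement to the symplectic (trace-symplectic) characterization of $q$-ary stabilizer codes, which is the backbone of the constructions in \cite{Ash Kni,Ket Kla}: an $[[n,K,d]]_q$ stabilizer code exists precisely when there is an $\F_q$-linear (additive) code $S\subseteq \F_q^{2n}$ that is self-orthogonal with respect to the symplectic form $\langle (a\mid b),(a'\mid b')\rangle_s=a\cdot b'-b\cdot a'$, with $|S|=q^{n-K}$ and $d=\mathrm{swt}(S^{\perp_s}\setminus S)$, where $\mathrm{swt}$ denotes the symplectic weight. I would take this theorem as given and construct an explicit $\F_q$-linear bridge carrying a Hermitian self-orthogonal code over $\F_{q^2}$ to such an $S$.

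Next I would build the bridge. Fix the coordinate-wise $\F_q$-linear isomorphism $\Phi:\F_{q^2}^n\to\F_q^{2n}$ obtained from a basis of $\F_{q^2}/\F_q$, and on $\F_{q^2}^n$ introduce the form $\langle u,v\rangle_s:=\langle u,v\rangle_H-\langle v,u\rangle_H$. The identity $\langle v,u\rangle_H=\langle u,v\rangle_H^{q}$ (immediate from $c^{q^2}=c$) shows at once that $\langle u,v\rangle_s$ is $\F_q$-bilinear, that it is alternating (since $\langle u,u\rangle_H=\sum_i u_i^{q+1}\in\F_q$), and that its values lie in the trace-zero line $\{z:z^q=-z\}$, which equals $\F_q$ when $q$ is even and is a $1$-dimensional $\F_q$-subspace when $q$ is odd. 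Composing with a fixed $\F_q$-isomorphism of this line onto $\F_q$ turns $\langle\cdot,\cdot\rangle_s$ into an $\F_q$-valued alternating form, and choosing $\Phi$ from a trace-orthogonal basis makes it coincide with the standard symplectic form on $\F_q^{2n}$; nondegeneracy is inherited from that of the Hermitian form.

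Finally I would transfer the code data. Hermitian self-orthogonality $C\subseteq C^{\perp_H}$ gives $\langle u,v\rangle_H=0$, hence $\langle u,v\rangle_s=0$, for all $u,v\in C$, so $S:=\Phi(C)$ is symplectic self-orthogonal; since $\dim_{\F_q}C=2k$ we get $|S|=q^{2k}=q^{\,n-(n-2k)}$, matching $K=n-2k$. The essential point is the dual correspondence $S^{\perp_s}=\Phi(C^{\perp_H})$: if $\langle u,v\rangle_s=0$ for all $u\in C$, then applying this to $\lambda u\in C$ for every $\lambda\in\F_{q^2}$ forces $\lambda\,\langle u,v\rangle_H=(\lambda\,\langle u,v\rangle_H)^q$, i.e. $\lambda\,\langle u,v\rangle_H\in\F_q$ for all $\lambda$, which is impossible unless $\langle u,v\rangle_H=0$; thus $v\in C^{\perp_H}$, and the reverse inclusion is clear. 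Because $\Phi$ sends a coordinate to the zero pair exactly when that coordinate vanishes, the symplectic weight of $\Phi(c)$ equals the Hamming weight of $c$, so $\mathrm{swt}(S^{\perp_s}\setminus S)=\mathrm{wt}(C^{\perp_H}\setminus C)\ge \mathrm{wt}(C^{\perp_H})=d^{\perp_H}$. Feeding $S$ into the symplectic theorem then yields the asserted $[[n,n-2k,d^{\perp_H}]]_q$ code.

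The step I expect to be the main obstacle is precisely the dual identity $S^{\perp_s}=\Phi(C^{\perp_H})$ together with the requirement that the coordinate map $\Phi$ turn the trace-alternating form into a genuine nondegenerate symplectic form: this is where the full $\F_{q^2}$-linearity of $C$ is indispensable, since the analogous claim fails for a merely $\F_q$-linear $C$, and it is where the parity of $q$ must be handled when identifying the trace-zero line with $\F_q$. Everything else, namely isotropy, the size count, and the identification of symplectic weight with Hamming weight, is routine bookkeeping.
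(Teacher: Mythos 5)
Your proof is correct, and it is essentially the argument behind the sources the paper relies on: the paper itself gives no proof of this lemma, quoting it directly from \cite{Ash Kni,Ket Kla}, and those references establish it exactly as you do --- via the trace-alternating form $\sum_i\bigl(u_iv_i^q-v_iu_i^q\bigr)$, the coordinate-wise weight-preserving isometry onto the symplectic space $\F_q^{2n}$, and the crucial use of the $\F_{q^2}$-linearity of $C$ to prove $\Phi(C)^{\perp_s}=\Phi(C^{\perp_H})$. The only point worth flagging is the degenerate self-dual case $2k=n$, where $S^{\perp_s}\setminus S$ is empty and the stabilizer code's distance is instead $\mathrm{swt}(S\setminus\{0\})=\mathrm{wt}(C^{\perp_H}\setminus\{0\})$, which still equals $d^{\perp_H}$, so the stated parameters survive.
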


By applying Lemma \ref{lem:H-construction} to the MDS Hemitian self-orthogonal codes obtained from {\bf Constructions \ref{thm:MDS1}-\ref{thm:MDS-new1}}, we obtain the following theorems.
\begin{thm}\label{Q:MDS-new} Let $q=p^m$, $n=t(q-1)+1$ with $t|(q+1)$, and $k=\lfloor \frac{n}{2t}\rfloor$. Then
\begin{enumerate}
\item there exists an MDS quantum code with parameters $[[n+1,n-2k-1,k+2]]_q$ if $(n-1)|k(q+1)$;
\item there exists an MDS quantum code with parameters $[[n,n-2k-1,k+2]]_q$ if $(n-1){\not|}k(q+1)$.
\end{enumerate}
\end{thm}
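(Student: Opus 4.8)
The plan is to obtain these quantum codes by feeding the MDS Hermitian self-orthogonal codes of Construction~\ref{thm:MDS1} into the construction of Lemma~\ref{lem:H-construction}. First I would invoke the third case of Construction~\ref{thm:MDS1}, which is exactly the regime $n=t(q-1)+1$ with $t\mid(q+1)$: there, taking the evaluation set to be a subgroup of $\F_{q^2}^*$ of order $n-1$ together with $0$ and using $t\mid(q+1)$ to force $\langle g_i,g_j\rangle_H=0$ for all $1\le i,j\le\lfloor\frac{n}{2t}\rfloor$, one gets an MDS Hermitian self-orthogonal $[n,k]_{q^2}$ code with $k=\lfloor\frac{n}{2t}\rfloor$. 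Applying the embedding Lemma~\ref{lem:embedding} (as already carried out at the end of Construction~\ref{thm:MDS1}) extends it to an MDS Hermitian self-orthogonal $[n',k+1]_{q^2}$ code, where \eqref{eq:n'} gives $n'=n+1$ when $(n-1)\mid k(q+1)$ and $n'=n$ when $(n-1)\nmid k(q+1)$. These are precisely the hypotheses of items~1) and 2).

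Next I would pin down the Hermitian dual distance. Since the $[n',k+1]_{q^2}$ code $C$ is MDS, so is its Hermitian dual: by \eqref{eq:E-H-dual} we have $C^{\perp_H}=(C^q)^\perp$, the Frobenius $x\mapsto x^q$ is a linear isometry so $C^q$ is again an $[n',k+1]_{q^2}$ MDS code, and the Euclidean dual of an MDS code is MDS. Hence $C^{\perp_H}$ has parameters $[n',n'-k-1]_{q^2}$, and being MDS its minimum distance equals $d^{\perp_H}=n'-(n'-k-1)+1=k+2$.

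Finally, Lemma~\ref{lem:H-construction} applied to $C$ produces a quantum code $[[n',\,n'-2(k+1),\,k+2]]_q$, and substituting $n'=n+1$ in case~1) and $n'=n$ in case~2) yields the claimed families; the quantum Singleton bound $n'\ge \bigl(n'-2(k+1)\bigr)+2(k+2)-2=n'$ holds with equality, so each code is MDS. I expect no real obstacle inside this proof, which is bookkeeping on top of two results already established; the genuine difficulty sits upstream, in Construction~\ref{thm:MDS1}, namely verifying that the cyclic structure of the evaluation points together with $t\mid(q+1)$ lets the dimension reach $\lfloor\frac{n}{2t}\rfloor$ before self-orthogonality breaks, and that the single extra column furnished by Lemma~\ref{lem:embedding} preserves both the MDS property and Hermitian self-orthogonality. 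The one point needing care here is that $d^{\perp_H}$ is \emph{exactly} $k+2$ rather than merely $\ge k+2$, which is what forces the codes to be genuinely MDS and is guaranteed precisely because the dual of an MDS code is MDS.
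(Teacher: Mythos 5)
Your route is exactly the paper's route: the paper proves this theorem in a single sentence, by taking the MDS Hermitian self-orthogonal $[n,k]_{q^2}$ code from point 3) of Construction~\ref{thm:MDS1} (the case $n=t(q-1)+1$, $t\mid(q+1)$, $k=\lfloor n/(2t)\rfloor$), embedding it via Lemma~\ref{lem:embedding} and (\ref{eq:n'}) into an $[n',k+1]_{q^2}$ code, and feeding that into Lemma~\ref{lem:H-construction}. Your added verification that the Hermitian dual of an MDS code is MDS (via $C^{\perp_H}=(C^q)^\perp$ and the fact that Frobenius preserves parameters), so that $d^{\perp_H}$ equals $k+2$ exactly, is a detail the paper leaves implicit, and it is correct.

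There is, however, one point you glossed over, and it is not cosmetic. In case 2) your construction yields a quantum code of length $n'=n$ and dimension $n'-2(k+1)=n-2k-2$, whereas the statement asserts dimension $n-2k-1$; your sentence that substituting $n'=n$ ``yields the claimed families'' is therefore false as written. In fact no proof could close this gap: the parameters $[[n,n-2k-1,k+2]]_q$ violate the quantum Singleton bound, since $(n-2k-1)+2(k+2)-2=n+1>n$. The discrepancy is a typo in the theorem statement (the dimension in case 2 should read $n-2k-2$), as the paper's own Table~\ref{table:mds1} confirms: for $q=5$, $t=2$ one has $n=9$, $k=2$, $8\nmid 12$, and the table lists $[[9,3,4]]_5$, i.e.\ dimension $n-2k-2$; likewise $[[25,11,8]]_{13}$ and $[[33,15,10]]_{17}$ fit $n-2k-2$, while the case 1 entries such as $[[14,6,5]]_7$ fit $n-2k-1$ as stated. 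So your argument establishes the corrected statement, which is what the paper actually uses; a careful write-up should have flagged that what you derived in case 2 differs from what is claimed, rather than asserting that they coincide.
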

\begin{rem} First recall that the minimum distance of the quantum MDS codes \cite{FangFu} is $d\le \lfloor \frac{n+q-1}{q+1}\rfloor +2\le 4$ if $n=2q-1$. Quantum MDS codes in Theorem \ref{Q:MDS-new} have larger minimum distances ($\lfloor \frac{n}{4}\rfloor+2$) than the known codes in the literature (\cite{FangFu,JinXing14}), for instance, for $q=13$ and $n=25$, we obtain an MDS quantum code with parameters $[[25,11,8]]_{13}$, and the minimum distance of this code is even larger that obtained from the implicit construction \cite{GraRot} which is only $7$. For $q\ge 13$, our explicitly constructed codes have better minimum distances than those \cite{GraRot} (see Table \ref{table:mds1})
\end{rem}
\begin{thm} \label{Q:MDS} Let $q=p^m$ and $1\le k\le \lfloor \frac{N+q-1}{q+1} \rfloor$. If one of the following condition holds
\begin{enumerate}[i)]
\item \label{item:i}$(N-1)|(q^2-1)$, 
\item \label{item:iii} $N=(t+1)n+1$, $n|(q^2-1)$, $n_2=\frac{n}{\gcd (n,q+1)}$, $1\le t\le \frac{q-1}{n_2}-1$,
\item \label{item:iv} $N=tq,1\le t\le q.$
\end{enumerate} 
Then,
\begin{enumerate}
\item there exists an MDS quantum code with parameters $[[N,N-2k,k+1]]_q$;
\item there exists an MDS quantum code with parameters $[[N+1,N-2k-1,k+2]]_q$ if $k(q+1)=N-1$;
\item there exists an MDS quantum code with parameters $[[N+1,N+1-2k,k+2]]_q$ if $N-1+q\not=k(q+1)\not=N-1$.
\end{enumerate}
Moreover, for  $k_0=\lfloor\frac{n+q-1}{q+1}\rfloor$ and $(n-1)|(q^2-1)$, there exists an MDS quantum code with parameters $[[n',n'-2k,k+1]]_q$, where $n'=n$ or $n'=n+1$ and $k\ge k_0+2$, if $(n-1){\not|}k_0(q+1)$, $(n-1){\not|}(k_0+1)(q+1)$ and $2k_0+q\le n$.
\end{thm}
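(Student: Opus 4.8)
The plan is to obtain each quantum code by feeding a classical Hermitian self-orthogonal code into Lemma \ref{lem:H-construction}, which turns a classical $[N,k]_{q^2}$ Hermitian self-orthogonal code with Hermitian-dual distance $d^{\perp_H}$ into a quantum $[[N,N-2k,d^{\perp_H}]]_q$ code. The whole argument therefore reduces to producing, under each hypothesis, a suitable classical code and reading off its Hermitian-dual minimum distance.

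First I would note that conditions i), ii), iii) are \emph{exactly} the hypotheses of Constructions \ref{thm:MDS1}, \ref{const:multicoset2}, and \ref{thm:MDS-new1}, respectively (with $N=n$, $N=(t+1)n+1$, and $N=tq$), each of which yields an MDS Hermitian self-orthogonal $[N,k]_{q^2}$ code for every admissible $k$. To get the dual distance I would use that the Hermitian dual of an MDS code is again MDS: by (\ref{eq:E-H-dual}) we have $C^{\perp_H}=(C^\perp)^q$, the Euclidean dual of an MDS code is MDS, and the coordinatewise Frobenius map $\mathbf{c}\mapsto\mathbf{c}^q$ preserves length, dimension, and Hamming weight. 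Hence $C^{\perp_H}$ is an MDS $[N,N-k]_{q^2}$ code of minimum distance $k+1$, and Lemma \ref{lem:H-construction} delivers $[[N,N-2k,k+1]]_q$, which is part 1).

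For parts 2) and 3) I would apply the embedding Lemma \ref{lem:embedding} to the base code. When $k(q+1)=N-1$ the embedding produces an MDS Hermitian self-orthogonal $[N+1,k+1]_{q^2}$ code, whose Hermitian dual is MDS of distance $k+2$, so Lemma \ref{lem:H-construction} gives $[[N+1,N-2k-1,k+2]]_q$, proving part 2). When $N-1+q\neq k(q+1)\neq N-1$ the embedding produces an MDS Hermitian self-orthogonal code of dimension $k+1$ (at length $N$ or $N+1$, the two branches of Lemma \ref{lem:embedding}); its dual distance is again $k+2$, and the same translation yields the quantum code of part 3). In every case the MDS property of the quantum code is automatic, since the classical dual is MDS.

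For the \emph{Moreover} statement I would invoke the recursive embedding developed in the general case $(n-1)\mid(q^2-1)$ of Construction \ref{thm:MDS1} (its clauses 1a) and 1b)). Beginning from the base $[n,k_0]_{q^2}$ MDS Hermitian self-orthogonal code, the two non-divisibility hypotheses $(n-1)\nmid k_0(q+1)$ and $(n-1)\nmid(k_0+1)(q+1)$ force the length-preserving branch b) at the initial steps, while $2k_0+q\le n$ keeps the range condition of that embedding valid as the dimension grows; iterating then yields MDS Hermitian self-orthogonal $[n',k]_{q^2}$ codes for all $k\ge k_0+2$, with $n'\in\{n,n+1\}$ decided at each step by whether $(n-1)$ divides the current $k'(q+1)$, and a final use of Lemma \ref{lem:H-construction} gives $[[n',n'-2k,k+1]]_q$. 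The main obstacle is precisely this recursive bookkeeping: one must check that the length-preserving case applies at each iteration and that the inequality $2k'+q\le n$ demanded by the embedding argument survives as $k'$ increases, so that the induction reaches the target dimension $k$ without the length jumping prematurely.
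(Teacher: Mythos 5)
Your proposal follows exactly the paper's route: the paper's entire proof consists of applying Lemma \ref{lem:H-construction} to the MDS Hermitian self-orthogonal codes of Constructions \ref{thm:MDS1}, \ref{const:multicoset2} and \ref{thm:MDS-new1} (together with the embedding Lemma \ref{lem:embedding} for parts 2)--3) and the recursive embedding in case 1) of Construction \ref{thm:MDS1} for the ``Moreover'' clause), which is precisely what you do, including the observation that the Hermitian dual of an MDS code is again MDS so that $d^{\perp_H}=k+1$. One remark: the parameters you actually derive in part 3), namely $[[n',n'-2(k+1),k+2]]_q$ with $n'\in\{N,N+1\}$, are the correct ones --- the displayed $[[N+1,N+1-2k,k+2]]_q$ in the statement is a typo (it would violate the quantum Singleton bound), as confirmed by the paper's own table entries such as $[[9,3,4]]_5$.
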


\begin{rem}
For length $N=tq$, our codes with parameters $[[N,N-2k,k+1]]_q$ were already obtained in \cite{FangFu}.
For some intermediate lengths, the parameters of our MDS quantum codes from families \ref{item:i})-\ref{item:iii}) are new compared to \cite{FangFu,JinXing14} while the parameters for the maximal length $q^2+1$ were already obtained in  \cite{Ball,FangFu,GraBet,Gra Bet2,GraRot,Gu11,JinXing14,KZ12,Li Xin Wan2,LinLingLuoXin}. To see that our constructions produce new parameters, let us make some comparisons on some code lengths. First recall that the code lengths in \cite{FangFu} are of two possible forms (see Table \ref{table:known}), say $N=tq$ or $N=t(q+1)+2$.
\begin{enumerate}
\item for $q=23$ and $n=67$, we obtain from, Theorem \ref{Q:MDS} \ref{item:i}) 1), a quantum code with parameters $[[ 67, 61, 4 ]]_{23}$, and fromTheorem \ref{Q:MDS} \ref{item:i}) 3), a quantum code with parameters $[[ 67, 59, 5 ]]_{23}$. However, in \cite{JinXing14}, there does not exist any code quantum code of length $67$ over $\F_{23}$ since only the condition $67=3(q-1)+1$ holds but $q\equiv 2\pmod 6$ does not hold (see also Table \ref{table:known}).
The codes in \cite[Theorem 2]{FangFu} have parameters $[[ 69, 63, 4 ]]_{23}$ and $[[ 74, 61, 5 ]]_{23}$, and thus their parameters are worse than ours if the puncturing rules \cite{GraRot} are applied.
\item for $q=17$, $n=12$ and $t=2$,  we obtain, from Theorem \ref{Q:MDS} \ref{item:iii}) 1), a quantum code with parameters $[[25, 21, 3]]_{17}$, and from Theorem \ref{Q:MDS} \ref{item:iii}) 2), a quantum code with parameters $[[26, 20, 4]]_{17}$, while the codes in \cite{JinXing14} can not produce any parameters for this length since, for some positive integer $t$, $t(17+1)+2$ is a bit far from $25$. In \cite{FangFu}, among the code lengths closer to $25$, there are only two quantum codes which have parameters $[[ 20, 13, 3 ]]_{17}$ and $[[ 34, 30, 3 ]]_{17}$. Hence, our codes are new compared to \cite{JinXing14} and \cite{FangFu}. 
More parameters of quantum codes over $\F_{17}$ and $\F_{19}$ from Theorems \ref{Q:MDS} \ref{item:iii}) 1)-3) are given in Table \ref{table:MDS}.
\item For $q=7$ and $n=21$, the maximum distance of \cite{FangFu} is just $4$, but our code from Theorem \ref{Q:MDS} \ref{item:iv}) 2) can take $5$, more precisely the code \cite{FangFu} has parameters $[[21,15,4]]_7$ while our parameters are $[[21,13,5]]_7$. It should be noted that the code lengths \cite{JinXing14} are a bit far from $21$. Following a similar discussion, we infer that more parameters of our codes over $\F_{7}$, $\F_8$ and $\F_9$ from Theorem \ref{Q:MDS} \ref{item:iv}) 2)-3) are new compared to \cite{JinXing14,FangFu} (see Table \ref{table:MDS}). 
\item When the maximal dimension $k$ (of the original MDS code with parameters $[n,k]$) satisfies $n-1+q\not= k(q+1)\not= n-1$, our embedding Lemma \ref{lem:embedding} may still be applicable for second time (even more than two times), that is, an MDS Hermitian self-orthogonal $[n,k+1]_{q^2}$ code can be embedded into an MDS Hermitian self-orthogonal code with parameters $[n,k+2]_{q^2}$ or $[n+1,k+2]_{q^2}$ and thus an MDS quantum code with parameters $[[n,n-2(k+2),k+3]]_q$ or $[[n+1,n+1-2(k+2),k+3]]_q$. However, we do not know under which conditions $n,k,q$ that the embedding construction works. We provide some generator matrices of the MDS Hermitian self-orthogonal codes obtained from such an embedding in the appendix, and also the corresponding MDS quantum codes in Table \ref{table:MDS}.
\end{enumerate}
\end{rem}

By applying Lemma \ref{lem:H-construction} to the Hermitian self-orthogonal codes obtained from {\bf Construction \ref{thm:elliptic}}, we obtain the following result.
\begin{thm}\label{thm:Q-elliptic}
Let $q=2^s$ and $2\le k\le \lfloor \frac{2n+1+q}{q+1} \rfloor$. If {\bf Assumption 1} holds, then
there exists a quantum code with parameters $[[2n,2n-2k+2,k-1]]_{q}$, where $n=q^2+2q$.
\end{thm}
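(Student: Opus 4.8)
The plan is to obtain the asserted quantum family by feeding the classical Hermitian self-orthogonal codes produced by Construction \ref{thm:elliptic} into the connection Lemma \ref{lem:H-construction}. The substantive content is already contained in Construction \ref{thm:elliptic}, so what remains is to check that its output matches the hypotheses of Lemma \ref{lem:H-construction} and to read off the resulting quantum parameters, keeping careful track of the dimension shift caused by the genus.

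First I would apply Construction \ref{thm:elliptic} with $m=2s$ and $q=2^s$, writing $n=\sharp U_c$. Because the elliptic curve $\cal E$ of (\ref{eq:elliptic}) is maximal over $\F_{q^2}$ with genus $g=1$, one has $n=\sharp U_c=2^{m-1}+\sqrt{2^m}=q^2/2+q$, and each $\alpha\in U_c$ gives rise to two rational points, so the divisor $D$ has degree $2n=q^2+2q$. Under {\bf Assumption 1} every derivative value $h'(\alpha)$, $\alpha\in U_c$, is a $(q+1)$-th power in $\F_{q^2}$, which is precisely point 3) of Theorem \ref{thm:key}; points 1) and 4) are immediate for this curve, and the prescribed range $2\le k\le \lfloor\frac{2n+q+1}{q+1}\rfloor$ forces $G=(k-1)P_\infty\le H$, giving point 2). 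Construction \ref{thm:elliptic} therefore delivers a classical Hermitian self-orthogonal code over $\F_{q^2}$ of length $2n=q^2+2q$ and dimension $\deg(G)-g+1=k-1$, whose Hermitian dual has minimum distance at least $k-1$ (Theorem \ref{thm:distance} applied to $H=(2n-k+1)P_\infty$, valid since $2g-2<\deg(H)<2n$ for $k\ge 2$, gives $d^{\perp_H}\ge 2n-\deg(H)=k-1$).

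Finally I would invoke Lemma \ref{lem:H-construction} with this classical $[\,2n,\,k-1\,]_{q^2}$ code, which yields a quantum code with parameters
$$[[\,2n,\;2n-2(k-1),\;d^{\perp_H}\,]]_q=[[\,2n,\;2n-2k+2,\;\ge k-1\,]]_q,$$
i.e.\ exactly the stated family, of length $2n=q^2+2q$. The only delicate points—and the ones I would expect to carry the real weight—are, first, the genus shift: since $g=1$, the classical dimension is $k-1$ rather than $k$, which both forces the lower bound $k\ge 2$ and, via the strict inequality $\deg(H)<2n$ needed in Theorem \ref{thm:distance}, secures the dual distance $d^{\perp_H}\ge k-1\ge 1$; and second, the verification of {\bf Assumption 1} itself, namely that all the $h'(\alpha)$ with $\alpha\in U_c$ are $(q+1)$-th powers. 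The latter is the genuine hypothesis that cannot be dispensed with, and since the theorem is stated conditionally on it, the remainder of the argument is a direct substitution of parameters.
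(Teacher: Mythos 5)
Your proof is correct and takes essentially the same route as the paper: the paper's entire proof of Theorem \ref{thm:Q-elliptic} is the one-line application of Lemma \ref{lem:H-construction} to the $[2n,\,k-1,\,\ge 2n-k+1]_{q^2}$ Hermitian self-orthogonal codes of Construction \ref{thm:elliptic}, and your bookkeeping (the genus shift giving classical dimension $k-1$, the dual parameters $[2n,\,2n+1-k,\,\ge k-1]_{q^2}$) reproduces exactly what that construction records. Your reading of the length as $2n=q^2+2q$ (i.e.\ $n=\sharp U_c=q^2/2+q$) is the consistent one; the phrase ``where $n=q^2+2q$'' in the theorem statement is a slip, as Construction \ref{thm:elliptic}, the accompanying remark, and Table \ref{table:mixed} all confirm the quantum code length is $q^2+2q$.
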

\begin{rem} Note that for $q=2^m$ our quantum code lengths ($q^2+2q$) in Theorem \ref{thm:Q-elliptic} are much bigger than those \cite{JinXing12} ($\le q+\lfloor 2\sqrt{q}\rfloor-5$). For $q=4$, the possible parameters \cite{JinXing12} with the maximum length are $[[3,1,2]]_4$ while our parameters are $[[20,12,4]]_4$ which are much better than \cite{JinXing12} and also new compared to \cite{Data-BieEde}. The generator matrix of the classical Hermitian self-orthogonal code, used to construct the $[[20,12,4]]_4$ code, is given in Example \ref{exam:elliptic}. For other parameters compared to \cite{Data-BieEde}, see Table \ref{table:mixed}.
\end{rem}

By applying Lemma \ref{lem:H-construction} to the Hermitian self-orthogonal codes obtained from {\bf Construction \ref{thm:hyper-elliptic2}}, we obtain the following result.
\begin{thm}\label{thm:Q-hyper-elliptic}

Let $q=2^m, m\ge 2$ and $1+q/2\le k\le \lfloor \frac{2N+2q-1}{q+1} \rfloor$. Then

\begin{enumerate}
\item for $N=q^2$, there exists a quantum code with parameters $[[2N,2N-2k+q,\ge k-q+1]]_{q}$ code;
\item for $(N-1)|(q^2-1)$, there exists a quantum code with parameters $[[2N,2N-2k+q,\ge k-q+1]]_{q}$ code.
\end{enumerate}
\end{thm}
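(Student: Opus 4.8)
The plan is to obtain this theorem as an immediate corollary of \textbf{Construction \ref{thm:hyper-elliptic2}} combined with the quantum construction of Lemma \ref{lem:H-construction}. The key preliminary observation is that both items are covered by a single hypothesis: item 2 assumes $(N-1)\mid(q^2-1)$ outright, while item 1 takes $N=q^2$, for which $N-1=q^2-1$ trivially divides $q^2-1$. Thus in either case the divisibility requirement of \textbf{Construction \ref{thm:hyper-elliptic2}} is satisfied with the running parameter $n$ of that construction set equal to $N$.

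First I would apply \textbf{Construction \ref{thm:hyper-elliptic2}} with $n=N$. This yields a classical Hermitian self-orthogonal code over $\F_{q^2}$ of length $2N$ and dimension $k-q/2$, whose Hermitian dual has minimum distance $d^{\perp_H}\ge k+1-q$. Here one checks that the admissible range $1+q/2\le k\le\lfloor\frac{2N+2q-1}{q+1}\rfloor$ in the statement coincides exactly with the range in the construction, which itself comes from point 4) of Theorem \ref{thm:key} applied to a length-$2N$ code on a curve of genus $g=q/2$: indeed $g+1=1+q/2$ and $\lfloor\frac{2N+q+2g-1}{q+1}\rfloor=\lfloor\frac{2N+2q-1}{q+1}\rfloor$.

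Next I would feed this classical code into Lemma \ref{lem:H-construction}, with classical length $2N$ and classical dimension $k-q/2$. The lemma then produces a quantum code of length $2N$, dimension $2N-2(k-q/2)=2N-2k+q$, and minimum distance equal to the Hermitian-dual distance, namely $\ge k+1-q=k-q+1$. This is precisely $[[2N,\,2N-2k+q,\,\ge k-q+1]]_q$, as claimed in both items.

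Since the whole argument reduces to matching parameters, there is no genuine obstacle beyond bookkeeping. The only points deserving a moment's care are the verification that $N=q^2$ really falls under the divisibility hypothesis (it does, via $q^2-1\mid q^2-1$), and---inside \textbf{Construction \ref{thm:hyper-elliptic2}}---the fact that $h'(\alpha)=1$ for every $\alpha\in U_n$. The latter holds because $q=2^m$ forces $q^2-1$ to be odd, so any divisor $N-1$ is odd and hence $N$ is even; thus $N\equiv0$ in characteristic $2$ and $h'(x)=Nx^{N-1}+1=1$, which makes point 3) of Theorem \ref{thm:key} hold with ${\bf v}=(1,\dots,1)$.
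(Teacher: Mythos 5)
Your proposal is correct and follows exactly the paper's own route: the paper proves this theorem precisely by feeding the Hermitian self-orthogonal $[2N,\,k-q/2]_{q^2}$ codes of Construction \ref{thm:hyper-elliptic2} (whose Hermitian duals have distance $\ge k+1-q$) into Lemma \ref{lem:H-construction}, yielding $[[2N,\,2N-2k+q,\,\ge k-q+1]]_q$. Your additional checks---that $N=q^2$ is subsumed by the divisibility hypothesis, that the $k$-range matches point 4) of Theorem \ref{thm:key} with $g=q/2$, and that $N$ even forces $h'(x)=1$ in characteristic $2$---are exactly the bookkeeping the paper handles inside the construction itself.
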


\begin{rem}
For intermediate lengths, say $2N$ with $N|\frac{q^2-1}{2}$, the code parameters obtained from Theorem \ref{thm:Q-hyper-elliptic} are new compared to \cite{Jin} since \cite{Jin} did not consider these lengths.
\end{rem}

By applying Lemma \ref{lem:H-construction} to the Hermitian self-orthogonal codes obtained from {\bf Construction \ref{thm:hermitian-2}}, we obtain the following result.
\begin{thm} \label{thm:Q-hermitian-1} Let $q=p^m\ge 4$ and $1+q(q-1)/2\le k\le \lfloor \frac{Nq+q^2-1}{q+1} \rfloor$. Then, there exists a quantum $[[Nq,Nq-2k+q(q-1), \ge k+1-q(q-1) ]]_{q}$ code if one of the following condition holds
\begin{enumerate}[i)]
\item $(N-1)|(q^2-1)$, 
\item $N=(t+1)n+1$, $n|(q^2-1)$, $n_2=\frac{n}{\gcd (n,q+1)}$, $1\le t\le \frac{q-1}{n_2}-1$,
\item $N=tq$, $1\le t\le q$.
\end{enumerate} 
\end{thm}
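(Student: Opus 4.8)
The plan is to obtain the asserted quantum code by a direct application of the Hermitian (CSS-type) construction of Lemma \ref{lem:H-construction} to the classical Hermitian self-orthogonal code already produced in Construction \ref{thm:hermitian-2}. No fresh curve-theoretic work is required, since the self-orthogonality has been established there through Theorem \ref{thm:key}; the argument is purely a matter of transporting parameters.

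First I would invoke Construction \ref{thm:hermitian-2}. The hypotheses of the present theorem --- namely $q=p^m\ge 4$, the range $1+q(q-1)/2\le k\le \lfloor \frac{Nq+q^2-1}{q+1}\rfloor$, and any one of the three length/divisibility conditions i), ii), iii) --- are precisely the hypotheses under which Construction \ref{thm:hermitian-2} yields a Hermitian self-orthogonal code over $\F_{q^2}$ of length $Nq$ and dimension $k_c:=k-\tfrac{q(q-1)}{2}$, whose Hermitian dual has minimum distance at least $k+1-q(q-1)$. I would remark that the lower bound $k\ge 1+q(q-1)/2$ is exactly what guarantees $k_c\ge 1$, so that this classical code is nontrivial, while the upper bound on $k$ is inherited verbatim from the source construction.

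Next I would feed this classical code into Lemma \ref{lem:H-construction} with parameters $n=Nq$, dimension $k_c=k-\tfrac{q(q-1)}{2}$, and Hermitian-dual distance $d^{\perp_H}\ge k+1-q(q-1)$. The lemma immediately produces a quantum code with parameters $[[\,Nq,\,Nq-2k_c,\,\ge d^{\perp_H}\,]]_q$. It then remains only to simplify the dimension, using
\[
Nq-2k_c=Nq-2\Big(k-\tfrac{q(q-1)}{2}\Big)=Nq-2k+q(q-1),
\]
which, together with $d^{\perp_H}\ge k+1-q(q-1)$, gives exactly the claimed parameters $[[\,Nq,\,Nq-2k+q(q-1),\,\ge k+1-q(q-1)\,]]_q$.

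There is no genuine obstacle here: the substantive content --- verifying $h'(\alpha_i)=\beta_i^{q+1}$ along the Hermitian curve in each of the cases i)--iii), and checking $G\le H$ throughout the admissible range of $k$ --- was already discharged when Construction \ref{thm:hermitian-2} was established, and I would simply cite it. The only points demanding care are confirming that the admissible range of $k$ in the hypothesis coincides term-for-term with the one in Construction \ref{thm:hermitian-2}, and that the three alternative conditions are carried over so that the source construction indeed applies in each case; both are immediate by inspection.
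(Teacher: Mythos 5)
Your proposal is correct and follows exactly the paper's own route: the paper proves Theorem \ref{thm:Q-hermitian-1} precisely by applying Lemma \ref{lem:H-construction} to the Hermitian self-orthogonal $[Nq,\,k-\tfrac{q(q-1)}{2}]_{q^2}$ codes (with dual distance $\ge k+1-q(q-1)$) furnished by Construction \ref{thm:hermitian-2}, with the same dimension arithmetic $Nq-2\bigl(k-\tfrac{q(q-1)}{2}\bigr)=Nq-2k+q(q-1)$. Nothing further is needed.
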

\begin{rem} 
Theorem \ref{thm:Q-hermitian-1} enables constructions of quantum codes with different lengths. However, for small lengths, the minimum distances of the constructed codes may stay a bit far from the Singleton bound. For $q=5$, we obtain some new quantum codes with parameters $[[95,89,3]]_5$, $[[95,87,3]]_5$, $[[95,83,4]]_5$, and these parameters improve those $[[97,87,3]]_5$, $[[96,82,4]]_5$ of the good quantum codes in the database \cite{Data-BieEde}.
\end{rem}
By applying Lemma \ref{lem:H-construction} to the Hermitian self-orthogonal codes obtained from {\bf Construction \ref{thm:hermitian-3}}, we obtain the following result.
\begin{thm}\label{thm:Q-hermitian-2} Let $q=p^m\ge 4$ and $1+\frac{(q-1)^2}{4}\le k\le \lfloor \frac{s+(q-1)^2/2+q-1}{q+1} \rfloor$. Then, there exists a quantum $[[s,s-2k+\frac{(q-1)^2}{2}, \ge k+1-\frac{(q-1)^2}{2} ]]_{q}$ code.
\end{thm}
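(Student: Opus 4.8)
The plan is to combine Construction \ref{thm:hermitian-3} with the general CSS-type connection of Lemma \ref{lem:H-construction}, so that the argument reduces to parameter bookkeeping. First I would recall that Construction \ref{thm:hermitian-3} is built on the curve ${\cal X}:y^{q}+y=x^{(q+1)/2}$ of genus $g=\frac{(q-1)^2}{4}$, with $s=q\frac{q^2+1}{2}$ rational places serving as evaluation points. For each $k$ in the stated range $1+\frac{(q-1)^2}{4}\le k\le \lfloor \frac{s+(q-1)^2/2+q-1}{q+1}\rfloor$, it produces a classical Hermitian self-orthogonal code over $\F_{q^2}$ with parameters $[s,\,k-\frac{(q-1)^2}{4},\,\ge s-k+1]_{q^2}$ whose Hermitian dual has parameters $[s,\,s+\frac{(q-1)^2}{4}-k,\,\ge k+1-\frac{(q-1)^2}{2}]_{q^2}$. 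These are exactly the data needed to feed into the quantum construction.

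Next I would apply Lemma \ref{lem:H-construction} with classical dimension $k'=k-\frac{(q-1)^2}{4}$. The lemma yields an $[[s,\,s-2k',\,d^{\perp_H}]]_q$ quantum code, where $d^{\perp_H}$ is the minimum distance of the Hermitian dual of the classical code. Substituting gives length $s$, dimension $s-2k'=s-2k+\frac{(q-1)^2}{2}$, and minimum distance $d^{\perp_H}\ge k+1-\frac{(q-1)^2}{2}$, which is precisely the asserted triple $[[s,\,s-2k+\frac{(q-1)^2}{2},\,\ge k+1-\frac{(q-1)^2}{2}]]_q$. Since only a lower bound on the dual distance is claimed, and that bound is inherited verbatim from Construction \ref{thm:hermitian-3}, no additional distance estimation is required.

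The only point that deserves a line of verification is the compatibility of hypotheses. I would check that the admissible range of $k$ here agrees with the self-orthogonality range forced by Theorem \ref{thm:key}: with $n=s$ and $g=\frac{(q-1)^2}{4}$, condition 4) of Theorem \ref{thm:key} reads $1+\frac{(q-1)^2}{4}\le k\le \lfloor \frac{s+(q-1)^2/2+q-1}{q+1}\rfloor$, which matches the hypothesis exactly. In particular the lower bound guarantees $k'\ge 1$, so the classical code is nontrivial and Lemma \ref{lem:H-construction} applies for every admissible $k$. I do not anticipate any genuine obstacle here; the entire content is the substitution, and the one thing to get right is the dimension arithmetic $s-2\!\left(k-\frac{(q-1)^2}{4}\right)=s-2k+\frac{(q-1)^2}{2}$.
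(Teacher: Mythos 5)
Your proposal is correct and follows exactly the paper's own route: the paper proves Theorem \ref{thm:Q-hermitian-2} precisely by applying Lemma \ref{lem:H-construction} to the Hermitian self-orthogonal $[s,\,k-\frac{(q-1)^2}{4},\,\ge s-k+1]_{q^2}$ codes of Construction \ref{thm:hermitian-3}, whose Hermitian duals have the stated distance bound. Your dimension bookkeeping $s-2\bigl(k-\frac{(q-1)^2}{4}\bigr)=s-2k+\frac{(q-1)^2}{2}$ and the check that the range of $k$ matches condition 4) of Theorem \ref{thm:key} with $g=\frac{(q-1)^2}{4}$ are exactly the (implicit) content of the paper's one-line proof.
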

\begin{rem}
 By using Magma, we obtain, from Theorem \ref{thm:Q-hermitian-2}, a quantum code with parameters $[[15,9,3]]_3$, and these parameters are the same as those in \cite{Data-BieEde}. For $q$ small, our Theorem \ref{thm:Q-hermitian-2} produces many good quantum codes due to the fact that the classical Hermitian self-orthogonal codes and their dual codes have good minimum distances in the sense that the minimum distances are closer to the Singleton bound than those in Theorem \ref{thm:Q-hermitian-1}, for instance, we obtain a $[[ 175, 169,  3 ]]_7$ code while the code \cite{Data-BieEde} has parameters $[[171,169,2]]_7$ (see Table \ref{table:mixed}).
\end{rem}
By applying Lemma \ref{lem:H-construction} to the Hermitian self-orthogonal codes obtained from {\bf Constructions \ref{thm:other-curves}-\ref{thm:other-curves-even}}, we obtain the following result.
\begin{thm} \label{thm:Q-other-curves} Let $q=p^m\ge 4$ and $t$ be a positive integer such that $\gcd(t,(q+1))|\frac{q+1}{2}$. Put $s=\gcd(t(q-1),(q+1)(q-1))$ and $1+(t-1)(q-1)/2\le k\le \lfloor \frac{(s+1)q+t(q-1)}{q+1} \rfloor$. Then, there exists a quantum $[[(s+1)q,(s+1)q-2k+(t-1)(q-1), \ge k+1-(t-1)(q-1) ]]_{q}$ code if one of the following condition holds:
\begin{enumerate}[i)]
\item $q$ is odd, and $t$ is any positive integer;
\item $q$ is even, and $t$ is an odd positive integer.
\end{enumerate} 
\end{thm}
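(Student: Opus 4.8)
The plan is to read the claimed quantum codes directly off the classical Hermitian self-orthogonal codes produced in Constructions \ref{thm:other-curves} and \ref{thm:other-curves-even}, and then pass through the connection Lemma \ref{lem:H-construction}. The two alternatives i) and ii) match the two constructions exactly: Construction \ref{thm:other-curves} supplies the code when $q$ is odd (for any admissible $t$), while Construction \ref{thm:other-curves-even} supplies it when $q$ is even and $t$ is odd. So the whole statement is essentially a corollary obtained by bookkeeping, and I would present it as such.

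First I would fix the combinatorial data determining the length. With $s=\gcd(t(q-1),(q+1)(q-1))$ as in the statement, the root count carried out inside Construction \ref{thm:other-curves} shows that the set $U$ of admissible $x$-coordinates of ${\cal Z}:~y^q-y=x^t$ consists of the $s$ nonzero solutions of $\alpha^{t(q-1)}=-1$ together with $\alpha=0$, hence $|U|=n=s+1$. Each such $\alpha$ lifts to exactly $q$ rational places, so $\deg D=nq=(s+1)q$, which is the announced length. I would then note that the hypothesis $\gcd(t,q+1)\mid\tfrac{q+1}{2}$ is precisely what guarantees $\alpha^{t(q-1)}=-1$ is solvable in the odd case, and that the parity restriction on $t$ plays the analogous role in the even case.

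Next I would track the dimension through the genus. The curve ${\cal Z}$ has genus $g=\tfrac{(q-1)(t-1)}{2}$, so for $G=(k-1)P_\infty$ the associated code has dimension $\deg G-g+1=k-\tfrac{(q-1)(t-1)}{2}$. By the two constructions, taking $\omega=\tfrac{dx}{h(x)}$ with $h(x)=\prod_{\alpha\in U}(x-\alpha)$ and $\mathbf{v}$ with $v_i^{q+1}=1/\beta_i$ makes $C_{{\cal L}_q}(D,G;\mathbf{v})$ Hermitian self-orthogonal by Theorem \ref{thm:key}, valid over the stated range $1+(t-1)(q-1)/2\le k\le \lfloor\tfrac{(s+1)q+t(q-1)}{q+1}\rfloor$. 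Its Hermitian dual, via $C^{\perp_H}=(C^\perp)^q$ and Lemma \ref{lem:1}, is equivalent to $C_{{\cal L}_q}(D,H;\mathbf{v})$ with $H=(qn+(q-1)(t-1)-k-1)P_\infty$, whose minimum distance is bounded below by $\deg D-\deg H=k+1-(t-1)(q-1)$ through Theorem \ref{thm:distance}.

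Finally I would apply Lemma \ref{lem:H-construction} to the classical Hermitian self-orthogonal $[(s+1)q,\,k-\tfrac{(t-1)(q-1)}{2}]_{q^2}$ code. This yields a quantum code of length $(s+1)q$, dimension $(s+1)q-2\bigl(k-\tfrac{(t-1)(q-1)}{2}\bigr)=(s+1)q-2k+(t-1)(q-1)$, and minimum distance $d^{\perp_H}\ge k+1-(t-1)(q-1)$, exactly as claimed. Since Constructions \ref{thm:other-curves} and \ref{thm:other-curves-even} already establish the underlying classical codes, no genuine estimate remains to be done; the main obstacle is purely the case-matching, namely verifying that the divisibility hypothesis on $\gcd(t,q+1)$ together with the odd/even split on $t$ picks out precisely the regimes in which the cited constructions are valid, so that cases i) and ii) jointly exhaust the hypotheses of the theorem.
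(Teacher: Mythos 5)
Your proposal is correct and follows essentially the same route as the paper: the paper proves this theorem in one line by applying Lemma \ref{lem:H-construction} to the Hermitian self-orthogonal codes of Constructions \ref{thm:other-curves} and \ref{thm:other-curves-even}, which is exactly your case split (odd $q$ versus even $q$ with odd $t$) followed by the same length, dimension, and dual-distance bookkeeping. Your write-up merely makes explicit the root count $n=s+1$, the genus contribution $g=\frac{(q-1)(t-1)}{2}$, and the dual bound $\ge k+1-(t-1)(q-1)$ that the paper leaves implicit inside the two constructions.
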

\begin{rem} We obtain some quantum codes whose parameters improve those in \cite{Data-BieEde}, for instance,
${ [[91,81,4]]_7}$, ${ [[176,168,3]]_8}$, ${ [[369,361,3]]_9}$ codes while the parameters \cite{Data-BieEde} are $[[176,165,3]]_8$, $[[90,80,3]]_7$, $[[176,165,3]]_8$, $[[381,361,3]]_9$.
\end{rem}
\begin{table}
\centering
\caption{Some MDS quantum codes, $^*:$ new compared to \cite{FangFu,GraBet,Gra Bet2,GraRot,JinXing14,LiXu}, $^+:$ obtained in the implicit constructions \cite{GraRot} but new compared to the explicit constructions \cite{FangFu,JinXing14}
\label{table:MDS}
}
$$
\begin{array}{c|c}
\hline
\hline

\text{Theorem \ref{Q:MDS} \ref{item:i}) 2)-3)}&\text{Theorem \ref{Q:MDS-new}}\\

\hline
\hline


[[9,3,4]]_{5}^+&[[9,3,4]]_{5}^+\\

[[13,7,4]]_{7}^+&[[14,6,5]]_{7}^+\\

[[17,11,4]]_{9}^+&[[17,7,6]]_{9}^+\\

[[21,15,4]]_{11}^+&[[22,10,7]]_{11}^+\\

[[25,19,4]]_{13}^+&[[25,11,8]]_{13}^*\\

[[33,27,4]]_{17}^+&[[33,15,10]]_{17}^*\\

[[37,31,4]]_{19}^+&[[38,18,11]]_{19}^*\\

[[45,39,4]]_{23}^+&[[46,22,13]]_{23}^*\\

[[49,43,4]]_{25}^+&[[49,23,14]]_{25}^*\\

[[53,47,4]]_{27}^+&[[54,26,15]]_{27}^*\\

\hline
\hline

\text{Theorem \ref{Q:MDS} \ref{item:iii}) 1)}, n=12&\text{Theorem \ref{Q:MDS} \ref{item:iii}) 2)-3)}, n=12\\

\hline
\hline



{ [[49, 43, 4]]_{17}^+}&{ [[49,41,5]]_{17}^+}\\

{ [[61, 53, 5]]_{17}^+}&{ [[62, 52,6]]_{17}^+}\\

{ [[73, 65,5]]_{17}^+}&{ [[74, 64, 6]]_{17}^+}\\

{ [[85,7 5, 6]]_{17}^+}&{ [[85, 73,7]]_{17}^+}\\

{ [[97, 85,7]]_{17}^+}&{ [[97, 83,8]]_{17}^+}\\

\hline
\hline

\hline
\hline
\text{Theorem \ref{Q:MDS} \ref{item:iv}) 2)-3)}&\text{Subcode embedding}\\
\hline
\hline
{ [[21, 13, 5]]_7^+}&[[22, 12, 6]]_7^+\\
{ [[28, 18, 6]]_7^+}&-\\
{ [[36, 24, 7]]_7^+}&-\\
{ [[16, 10, 4]]_8^+}&[[16, 8,5]]_8^+,[[16, 6,6]]_8^+\\
{ [[24, 16, 5]]_8^+}&[[24, 14,6]]_8^+,[[24, 12,7]]_8^+\\
{ [[32, 22,6]]_8^+}&[[32, 20,7]_8^+\\
{ [[40, 28,7]]_8^+}&-\\
{ [[48, 34, 8]]_8^+}&-\\
{ [[18, 12,4]]_9 ^+}&[[18, 10,5]]_9^+,[[18, 8,6]]_9^+\\
{ [[27, 19,5]]_9^+}&[[27, 17,6]]_9^+,[[28, 16,7]]_9^+\\
{ [[36, 26,6]]_9^+}&[[36, 24,7]]_9^+\\
{ [[45, 33,7]]_9^+}&[[45, 31,8]]_9^+\\
{ [[54, 40,8]]_9^+}&[[55, 39,9]]_9^+\\
{ [[64, 48,9]]_9^+}&-\\

\end{array}
$$
\label{table:mds1}
\end{table}

\begin{table}
\centering
\caption{Some quantum codes, $^*$: new compared to \cite{Data-BieEde,Jin,JinXing12}
}
{
$$
\begin{array}{c|c}
\hline
\hline
\text{Theorem \ref{thm:Q-elliptic}}&\text{Parameters } \cite{Data-BieEde}\\
\hline
\hline
[[24,18,3]]_4^*&[[22,17,3]]_4\\

[[20,12,4]]_4^*&[[21,13,4]]_4\\

[[80,64,8]]_8^*&[[85,45,8]]_8\\

[[288,228,30]]_{2^4}^*&-\\

[[1088,884,102]]_{2^5}^*&-\\
\hline
\hline
\text{Theorem \ref{thm:Q-hermitian-1}}&\text{Parameters } \cite{Data-BieEde}\\
\hline
\hline

[[64,58,3]]_4&[[63,58,3]]_4\\

{ [[95, 89, 3]]_5^*}&[[97,87,3]]_5\\

{ [[95, 87, 3]]_5^*}&[[97,87,3]]_5\\

[[95, 85, 3]]_5^*&-\\

{ [[95, 83, 4]]_5^*}&[[96,82,4]]_5\\

\hline
\hline
\text{Theorem \ref{thm:Q-hermitian-2}}&\text{Parameters } \cite{Data-BieEde}\\
\hline
\hline
{[[15,9,3]]_3}&[[15,9,3]]_3\\

{ [[ 65, 59, 3 ]]_5^*}&[[63,58,3]]_5\\

{ [[65,51,5]]_5^*}&-\\

[[65,49,\ge 5]]_5&[[63,49,6]]_5\\

{ [[ 175, 169,  3 ]]_7^*}&[[171,169,2]]_7\\
\hline
\hline
\text{Theorem \ref{thm:Q-other-curves}}&\text{Parameters } \cite{Data-BieEde}\\
\hline
\hline

[64, 58, 3]_{4}&[64, 59, 3]_{4}\\

{ [[91,81,4]]_7^*}&[[90,80,3]]_7\\

{ [[176,168,3]]_8^*}&[[176,165,3]]_8\\

[[63,55,3]]_9&[[62,55,3]]_9\\

{ [[369,361,3]]_9^*}&[[381,361,3]]_9\\
\end{array}
$$
}
\label{table:mixed}
\end{table}

\section*{Appendix}
The following matrices are the generator matrices of the MDS Hermitian self-orthogonal $[n,k]_{q^2}$ codes that give rise to MDS quantum codes in Table \ref{table:MDS}). The matrices are written in the systematic form $(I|A_{\F_{q^2}})$ for embedding the subcodes in Theorem \ref{Q:MDS} \ref{item:iv}) 2)-3) and in the systematic form $(I|A_{\F_{q^2}}|B_{\F_{q^2}})$ for Theorem \ref{Q:MDS-new}, where the identity matrix  $I$ is omitted.

\noindent
\begin{table*}[th]

{\tiny
$
\left(
\begin{array}{lllllllllllllllllllll}
\theta^{35 }&\theta^{35 }&\theta^{37 }&\theta^{2 }&\theta^{14 }&\theta^{44 }&\theta^{19 }&\theta^{46 }&\theta^{42 }&\theta^{37 }&\theta^{18}&{ \theta }&\theta^{26 }&\theta^{10 }&\theta^{4}&\theta^{2}&{  4}\\
\theta^{6 }&\theta^{4 }&\theta^{5 }&\theta^{12 }&\theta^{9 }&\theta^{28 }&\theta^{6 }&\theta^{5 }&\theta^{18 }&\theta^{15 }&\theta^{26}&{  1 }&\theta^{23 }&\theta^{35 }&\theta^{46}&\theta^{38}&{  4}\\
\theta^{19}&{  3}&{  3 }&\theta^{20 }&\theta^{47 }&{ \theta }&\theta^{17 }&\theta^{6 }&\theta^{11 }&\theta^{37 }&\theta^{42 }&\theta^{36 }&\theta^{6 }&\theta^{11 }&\theta^{3 }&\theta^{27}&\theta^{34}\\
\theta^{26}&{  2 }&\theta^{6 }&\theta^{36 }&\theta^{20 }&\theta^{37 }&\theta^{27 }&\theta^{12 }&\theta^{3 }&\theta^{34 }&\theta^{19}&{  2 }&\theta^{15 }&\theta^{33 }&\theta^{3}&\theta^{12 }&\theta^{9}\\
\theta^{45}&{  4 }&\theta^{11 }&\theta^{13}&{  3 }&\theta^{17 }&\theta^{21 }&\theta^{41 }&\theta^{29 }&\theta^{36 }&\theta^{30 }&\theta^{17}&{  3 }&\theta^{30 }&\theta^{42}&{  2}&{ 6}\\
\end{array}
\right)_{\F_{7^2}},
$
\\
$
\left(
\begin{array}{lllllllllllllllll}
\theta^{3 }&\theta^{14 }&\theta^{32 }&\theta^{7 }&\theta^{34 }&\theta^{33 }&\theta^{40 }&\theta^{50 }&\theta^{53 }&\theta^{22 }&\theta^{31 }&\theta^{51 }\\
\theta^{58 }&\theta^{11 }&\theta^{49 }&\theta^{32 }&\theta^{49 }&\theta^{60 }&\theta^{22 }&\theta^{37 }&\theta^{4 }&\theta^{35 }&\theta^{39 }&\theta^{16 }\\
\theta^{19 }&\theta^{45 }&\theta^{39 }&\theta^{31 }&\theta^{14 }&\theta^{43 }&\theta^{38 }&\theta^{25 }&\theta^{15 }&\theta^{4 }&\theta^{48 }&\theta^{54 }\\
\theta^{24 }&\theta^{13 }&\theta^{35 }&\theta^{22 }&\theta^{28 }&\theta^{4 }&\theta^{7 }&\theta^{31 }&\theta^{38 }&\theta^{22 }&\theta^{4 }&\theta^{4 }\\
\end{array}
\right)_{\F_{8^2}},
$
\\
$
\left(
\begin{array}{lllllllllllllllll}
\theta^{27 }&\theta^{59 }&\theta^{10 }&\theta^{25 }&\theta^{11 }&\theta^{49 }&\theta^{31 }&\theta^{35 }&\theta^{2 }&\theta^{55 }&\theta^{25 }\\
\theta^{58 }&\theta^{47 }&\theta^{6 }&\theta^{11 }&\theta^{9 }&\theta^{2 }&\theta^{52 }&\theta^{20 }&\theta^{49 }&\theta^{34 }&\theta^{24 }\\
\theta^{22 }&\theta^{30 }&\theta^{61 }&\theta^{32 }&\theta^{48 }&\theta^{11 }&\theta^{33 }&\theta^{24 }&\theta^{11 }&\theta^{36 }&\theta^{55 }\\
\theta^{8 }&\theta^{44 }&\theta^{7}&{  \theta }&\theta^{27 }&\theta^{61 }&\theta^{57 }&\theta^{2 }&\theta^{47 }&\theta^{10 }&\theta^{23 }\\
\theta^{14 }&\theta^{38 }&\theta^{17 }&\theta^{49 }&\theta^{54}&{  \theta }&\theta^{13 }&\theta^{23 }&\theta^{56 }&\theta^{6 }&\theta^{28 }\\
\end{array}
\right)_{\F_{8^2}},
$
\\
$
\left(
\begin{array}{lllllllllllllllllllllllll}
\theta^{24 }&\theta^{52 }&\theta^{3 }&\theta^{18 }&\theta^{31 }&\theta^{30 }&\theta^{37 }&\theta^{33 }&\theta^{25 }&\theta^{30 }&\theta^{60 }&\theta^{12 }&\theta^{58 }&\theta^{61}&\theta^{6 }&\theta^{6 }&\theta^{57 }&\theta^{19 }&\theta^{48 }\\
\theta^{31 }&\theta^{40 }&\theta^{22 }&\theta^{58 }&\theta^{13 }&\theta^{32 }&\theta^{15 }&\theta^{43 }&\theta^{25 }&\theta^{53 }&\theta^{9 }&\theta^{42 }&\theta^{30 }&\theta^{44}&\theta^{5 }&\theta^{4 }&\theta^{37 }&\theta^{12 }&\theta^{61 }\\
\theta^{13 }&\theta^{43 }&\theta^{38 }&\theta^{35}&{  1 }&\theta^{38 }&\theta^{52 }&\theta^{58 }&\theta^{6 }&\theta^{9 }&\theta^{8 }&\theta^{11 }&\theta^{34 }&\theta^{62 }&\theta^{32}&\theta^{52 }&\theta^{39 }&\theta^{37 }&\theta^{52 }\\
\theta^{60 }&\theta^{44 }&\theta^{60 }&\theta^{52 }&\theta^{50 }&\theta^{43 }&\theta^{15 }&\theta^{51 }&\theta^{49 }&\theta^{45 }&\theta^{7 }&\theta^{20 }&\theta^{12 }&\theta^{51}&\theta^{2 }&\theta^{35 }&\theta^{43 }&\theta^{45 }&\theta^{28 }\\
\theta^{15 }&\theta^{17 }&\theta^{10 }&\theta^{3 }&\theta^{8 }&\theta^{44 }&\theta^{42 }&\theta^{28 }&\theta^{35 }&\theta^{36 }&\theta^{29 }&\theta^{39 }&\theta^{54 }&\theta^{60}&\theta^{28 }&\theta^{41 }&\theta^{43 }&\theta^{30 }&\theta^{15 }\\
\end{array}
\right)_{\F_{8^2}},
$
\\
$
\left(
\begin{array}{lllllllllllllllllllllllll}
\theta^{45 }&\theta^{52 }&\theta^{41 }&\theta^{47 }&\theta^{24 }&\theta^{48 }&\theta^{11 }&\theta^{6 }&\theta^{5}&{  \theta }&\theta^{45 }&\theta^{46 }&\theta^{27 }&\theta^{4}&\theta^{25 }&\theta^{59 }&\theta^{31 }&\theta^{3 }\\
\theta^{34 }&\theta^{9 }&\theta^{19 }&\theta^{30 }&\theta^{27 }&\theta^{27 }&\theta^{22 }&\theta^{7 }&\theta^{29 }&\theta^{14 }&\theta^{13 }&\theta^{19 }&\theta^{11 }&\theta^{4}&\theta^{24 }&\theta^{40 }&\theta^{25 }&\theta^{17 }\\
\theta^{35 }&\theta^{23 }&\theta^{57 }&\theta^{15 }&\theta^{31 }&\theta^{62 }&\theta^{35 }&\theta^{49 }&\theta^{46 }&\theta^{11 }&\theta^{43 }&\theta^{21 }&\theta^{27}&\theta^{29 }&\theta^{7 }&\theta^{40 }&\theta^{48 }&\theta^{6 }\\
\theta^{17 }&\theta^{26 }&\theta^{55 }&\theta^{46 }&\theta^{17 }&\theta^{6 }&\theta^{9 }&\theta^{10}&{  1 }&\theta^{54 }&\theta^{33 }&\theta^{43 }&\theta^{60 }&\theta^{43}&\theta^{34 }&\theta^{25 }&\theta^{37 }&\theta^{26 }\\
\theta^{26 }&\theta^{12 }&\theta^{42 }&\theta^{40 }&\theta^{54 }&\theta^{6 }&\theta^{22 }&\theta^{32 }&\theta^{27 }&\theta^{49 }&\theta^{25 }&\theta^{58 }&\theta^{42}&\theta^{42 }&\theta^{13 }&\theta^{61 }&\theta^{58 }&\theta^{49 }\\
\theta^{29 }&\theta^{44 }&\theta^{5 }&\theta^{21 }&\theta^{26 }&\theta^{36 }&\theta^{33 }&\theta^{30 }&\theta^{39 }&\theta^{8 }&\theta^{26 }&\theta^{11 }&\theta^{18 }&\theta^{29}&\theta^{30 }&\theta^{19 }&\theta^{47 }&\theta^{15 }\\
\end{array}
\right)_{\F_{8^2}},
$
\\
\\
$
\left(
\begin{array}{lllllllllllllllllllllllll}
\theta^{49 }&\theta^{52 }&\theta^{26 }&\theta^{30 }&\theta^{24 }&\theta^{28 }&\theta^{65 }&\theta^{16 }&\theta^{43 }&\theta^{79 }&\theta^{63 }&\theta^{8}&{ \theta }&\theta^{6 }\\
\theta^{42 }&\theta^{58 }&\theta^{9 }&\theta^{63 }&\theta^{49 }&\theta^{41 }&\theta^{62 }&\theta^{59 }&\theta^{46 }&\theta^{75 }&\theta^{27 }&\theta^{31 }&\theta^{41 }&\theta^{16 }\\
\theta^{41 }&\theta^{59 }&\theta^{19 }&\theta^{67 }&\theta^{42 }&\theta^{33 }&\theta^{46 }&\theta^{50 }&\theta^{54 }&\theta^{63 }&\theta^{9 }&\theta^{8 }&\theta^{41 }&\theta^{31 }\\
\theta^{16 }&\theta^{16 }&\theta^{55 }&\theta^{71 }&\theta^{49 }&\theta^{73 }&\theta^{43 }&\theta^{44 }&\theta^{57 }&\theta^{30 }&\theta^{26 }&\theta^{78 }&\theta^{28 }&\theta^{8 }\\
\end{array}
\right)_{\F_{9^2}},
$
\\
$
\left(
\begin{array}{lllllllllllllllllllllllll}
\theta^{6 }&\theta^{56}&{  1 }&\theta^{72 }&\theta^{38 }&\theta^{6 }&\theta^{56 }&\theta^{63 }&\theta^{32 }&\theta^{78 }&\theta^{78 }&\theta^{27 }&\theta^{78 }\\
\theta^{32 }&\theta^{59 }&\theta^{53 }&\theta^{37 }&\theta^{71 }&\theta^{23 }&\theta^{39 }&\theta^{6 }&\theta^{48 }&\theta^{62 }&\theta^{41 }&\theta^{7 }&\theta^{28 }\\
\theta^{76 }&\theta^{32 }&\theta^{20 }&\theta^{73 }&\theta^{26 }&\theta^{50 }&\theta^{73 }&\theta^{57 }&\theta^{79 }&\theta^{7 }&\theta^{61 }&\theta^{50 }&\theta^{6 }\\
\theta^{61 }&\theta^{16 }&\theta^{52 }&\theta^{28 }&\theta^{14 }&\theta^{75 }&\theta^{15 }&\theta^{8 }&\theta^{74 }&\theta^{52 }&\theta^{79 }&\theta^{65 }&\theta^{11 }\\
\theta^{11 }&\theta^{67 \theta }&\theta^{4 }&\theta^{9 }&\theta^{48 }&\theta^{7 }&\theta^{44 }&\theta^{5 }&\theta^{71 }&\theta^{19 }&\theta^{13 }&\theta^{23 }\\
\end{array}
\right)_{\F_{9^2}},
$
\\
$
\left(
\begin{array}{lllllllllllllllllllllllll}
\theta^{71 }&\theta^{62 }&\theta^{22 }&\theta^{51 }&\theta^{39 }&\theta^{31 }&\theta^{41 }&\theta^{78 }&\theta^{74 }&\theta^{55 }&\theta^{7 }&\theta^{67 }&\theta^{28 }&\theta^{74}&\theta^{14}&{  1 }&\theta^{49 }&\theta^{2 }&\theta^{55 }&\theta^{74 }&\theta^{54 }&\theta^{74 }\\
\theta^{59 }&\theta^{59 }&\theta^{43 }&\theta^{36 }&\theta^{49 }&\theta^{18 }&\theta^{45 }&\theta^{35 }&\theta^{23 }&\theta^{72 }&\theta^{8 }&\theta^{41 }&\theta^{75 }&\theta^{10}&\theta^{21}&{  1 }&\theta^{17 }&\theta^{29 }&\theta^{19 }&\theta^{32 }&\theta^{68 }&\theta^{20 }\\
\theta^{8 }&\theta^{9 }&\theta^{16 }&\theta^{6 }&\theta^{14 }&\theta^{76 }&\theta^{60 }&\theta^{10 }&\theta^{7 }&\theta^{18 }&\theta^{51 }&\theta^{37 }&\theta^{16 }&\theta^{17}&\theta^{53 }&\theta^{8 }&\theta^{74 }&\theta^{28 }&\theta^{16 }&\theta^{30 }&\theta^{26 }&\theta^{22 }\\
\theta^{33 }&\theta^{75 }&\theta^{72 }&\theta^{7 }&\theta^{67 }&\theta^{45 }&\theta^{38 }&\theta^{56 }&\theta^{33}&{  \theta }&\theta^{9 }&\theta^{9 }&\theta^{3 }&\theta^{21 }&\theta^{46}&\theta^{5 }&\theta^{16 }&\theta^{23 }&\theta^{36 }&\theta^{70 }&\theta^{20 }&\theta^{69 }\\
\theta^{59 }&\theta^{4 }&\theta^{69 }&\theta^{2 }&\theta^{58 }&\theta^{35 }&\theta^{14 }&\theta^{14 }&\theta^{74 }&\theta^{75}&{  2 }&\theta^{72 }&\theta^{31 }&\theta^{64 }&\theta^{3}&\theta^{6 }&\theta^{67 }&\theta^{47 }&\theta^{57 }&\theta^{72 }&\theta^{31 }&\theta^{21 }\\
\end{array}
\right)_{\F_{9^2}},
$
\\
$
\left(
\begin{array}{lllllllllllllllllllllllll}
\theta^{25 }&\theta^{6 }&\theta^{5 }&\theta^{8 }&\theta^{77 }&\theta^{65 }&\theta^{20 }&\theta^{49 }&\theta^{63 }&\theta^{74 }&\theta^{71 }&\theta^{19 }&\theta^{8 }&\theta^{9}&\theta^{51 }&\theta^{75 }&\theta^{9 }&\theta^{57 }&\theta^{68 }&\theta^{34 }&\theta^{38 }&\theta^{72}\\
\theta^{33 }&\theta^{38}&{  \theta }&\theta^{29 }&\theta^{75}&{  1 }&\theta^{68 }&\theta^{9 }&\theta^{11 }&\theta^{6 }&\theta^{56 }&\theta^{77 }&\theta^{35 }&\theta^{27 }&\theta^{62}&\theta^{54 }&\theta^{47 }&\theta^{32 }&\theta^{37 }&\theta^{59 }&\theta^{75 }&\theta^{60}\\
\theta^{78 }&\theta^{26 }&\theta^{66 }&\theta^{9 }&\theta^{68 }&\theta^{30 }&\theta^{58 }&\theta^{8 }&\theta^{52 }&\theta^{64 }&\theta^{67 }&\theta^{33 }&\theta^{57 }&\theta^{74}&\theta^{5 }&\theta^{46 }&\theta^{61 }&\theta^{44 }&\theta^{50 }&\theta^{32 }&\theta^{12 }&\theta^{9}\\
\theta^{50 }&\theta^{68 }&\theta^{53 }&\theta^{48 }&\theta^{23 }&\theta^{74 }&\theta^{10 }&\theta^{20 }&\theta^{21 }&\theta^{8 }&\theta^{25 }&\theta^{6 }&\theta^{47 }&\theta^{53}&\theta^{68 }&\theta^{54 }&\theta^{42 }&\theta^{50 }&\theta^{76 }&\theta^{12 }&\theta^{45 }&\theta^{34}\\
\theta^{24 }&\theta^{30 }&\theta^{13 }&\theta^{4 }&\theta^{58 }&\theta^{15 }&\theta^{13 }&\theta^{26 }&\theta^{60 }&\theta^{4 }&\theta^{53 }&\theta^{79 }&\theta^{55 }&\theta^{55}&\theta^{34 }&\theta^{70 }&\theta^{31 }&\theta^{36 }&\theta^{43 }&\theta^{68 }&\theta^{42 }&\theta^{60}\\
\theta^{75 }&\theta^{38 }&\theta^{68 }&\theta^{60 }&\theta^{54 }&\theta^{72 }&\theta^{31 }&\theta^{36 }&\theta^{18 }&\theta^{52 }&\theta^{39}&{  \theta }&\theta^{49 }&\theta^{77}&\theta^{68 }&\theta^{39 }&\theta^{75 }&\theta^{49 }&\theta^{72 }&\theta^{53 }&\theta^{74 }&\theta^{41}\\
\end{array}
\right)_{\F_{9^2}},
$
}
\end{table*}

\begin{table*}
{\tiny
$
A_{\F_{17^2}}=\left(
\begin{array}{lllllllllllllllll}
\theta^{255 }&\theta^{53 }&\theta^{17 }&\theta^{254 }&\theta^{236 }&\theta^{23 }&\theta^{41 }&\theta^{91 }&\theta^{253 }&\theta^{39 }&\theta^{183 }&\theta^{10}\\
{8 }&\theta^{161 }&\theta^{266}&{ \theta }&\theta^{269 }&\theta^{276 }&\theta^{200 }&\theta^{188 }&\theta^{88 }&\theta^{145}&{ 13 }&\theta^{2}\\ 
\theta^{224 }&\theta^{160 }&\theta^{19 }&\theta^{109 }&\theta^{165 }&\theta^{238 }&\theta^{188}&{ 16 }&\theta^{244 }&\theta^{56 }&\theta^{183 }&\theta^{81}\\
\theta^{128 }&\theta^{184 }&\theta^{286 }&\theta^{130 }&\theta^{253 }&\theta^{114 }&\theta^{130 }&\theta^{112}&{ 8 }&\theta^{192 }&\theta^{74 }&\theta^{172}\\ 
\theta^{52 }&\theta^{275 }&\theta^{209 }&\theta^{8 }&\theta^{173 }&\theta^{101 }&\theta^{193 }&\theta^{241 }&\theta^{47 }&\theta^{27 }&\theta^{109 }&\theta^{250}\\ 
\theta^{135 }&\theta^{246 }&\theta^{59 }&\theta^{266 }&\theta^{98 }&\theta^{68 }&\theta^{227 }&\theta^{63 }&\theta^{223 }&\theta^{229 }&\theta^{279 }&\theta^{44}\\ 
\theta^{259 }&\theta^{88 }&\theta^{77 }&\theta^{163 }&\theta^{115 }&\theta^{40 }&\theta^{241}&{ 16 }&\theta^{92 }&\theta^{164 }&\theta^{240 }&\theta^{261}\\ 
\theta^{254 }&\theta^{111 }&\theta^{106 }&\theta^{80 }&\theta^{199 }&\theta^{244 }&\theta^{112 }&\theta^{57}&{ 13 }&\theta^{220 }&\theta^{74 }&\theta^{121}\\ 
\theta^{269 }&\theta^{86 }&\theta^{109 }&\theta^{89 }&\theta^{96 }&\theta^{20 }&\theta^{8 }&\theta^{196 }&\theta^{253}&{ 8 }&\theta^{110 }&\theta^{223}\\ 
\end{array}
\right),
$
\\
$
B_{\F_{17^2}}=\left(
\begin{array}{lllllllllllllllll}
\theta^{46 }&\theta^{3 }&\theta^{219 }&\theta^{217 }&\theta^{127 }&\theta^{149 }&\theta^{203 }&\theta^{200 }&\theta^{2 }&\theta^{125 }&\theta^{233 }&\theta^{219}\\
\theta^{115 }&\theta^{83}&{ 12 }&\theta^{100 }&\theta^{124 }&\theta^{17 }&\theta^{110 }&\theta^{267 }&\theta^{53 }&\theta^{154 }&\theta^{212 }&\theta^{188}\\
\theta^{220 }&\theta^{254 }&\theta^{193}&{ 11 }&\theta^{192 }&\theta^{40 }&\theta^{253}&{ \theta }&\theta^{251 }&\theta^{70 }&\theta^{156 }&\theta^{92}\\
\theta^{279 }&\theta^{51 }&\theta^{56 }&\theta^{65}&{ 7 }&\theta^{88 }&\theta^{256 }&\theta^{124 }&\theta^{253 }&\theta^{248 }&\theta^{52 }&\theta^{16}\\
\theta^{269 }&\theta^{9 }&\theta^{40 }&\theta^{115}&{ 9 }&\theta^{281 }&\theta^{203 }&\theta^{26 }&\theta^{275 }&\theta^{149 }&\theta^{129 }&\theta^{99}\\
\theta^{106 }&\theta^{46 }&\theta^{45 }&\theta^{146 }&\theta^{133 }&\theta^{166 }&\theta^{155 }&\theta^{20 }&\theta^{224 }&\theta^{218 }&\theta^{77 }&\theta^{223}\\
\theta^{235 }&\theta^{218 }&\theta^{129}&{ 7 }&\theta^{211 }&\theta^{22 }&\theta^{87 }&\theta^{19 }&\theta^{265 }&\theta^{214 }&\theta^{193 }&\theta^{218}\\
\theta^{63 }&\theta^{246 }&\theta^{200 }&\theta^{181}&{ 14 }&\theta^{287 }&\theta^{130 }&\theta^{138 }&\theta^{163 }&\theta^{154 }&\theta^{88 }&\theta^{233}\\
\theta^{191}&{ 10 }&\theta^{208 }&\theta^{232 }&\theta^{125 }&\theta^{218 }&\theta^{87 }&\theta^{161 }&\theta^{262 }&\theta^{32 }&\theta^{8}&{ 15}\\
 \end{array}
\right),
$  
 \\   
$
A_{\F_{19^2}}=
\left(
\begin{array}{lllllllllllllllll}
\theta^{248 }&\theta^{71 }&\theta^{159 }&\theta^{10 }&\theta^{264 }&\theta^{88 }&\theta^{169 }&\theta^{107}&{ 12 }&\theta^{175 }&\theta^{298 }&\theta^{290 }&\theta^{241 }&\theta^{227}\\ 
\theta^{90 }&\theta^{308 }&\theta^{258 }&\theta^{287 }&\theta^{317 }&\theta^{205 }&\theta^{195 }&\theta^{284 }&\theta^{134 }&\theta^{232 }&\theta^{142 }&\theta^{127 }&\theta^{297 }&\theta^{24}\\ 
\theta^{28 }&\theta^{116 }&\theta^{239 }&\theta^{312 }&\theta^{24 }&\theta^{344 }&\theta^{129 }&\theta^{336 5 }&\theta^{212 }&\theta^{198 }&\theta^{337 }&\theta^{281 }&\theta^{126}\\ 
\theta^{348 }&\theta^{83 }&\theta^{76 }&\theta^{322 }&\theta^{78}&{ 16 }&\theta^{297 }&\theta^{299 }&\theta^{41 }&\theta^{67 }&\theta^{207 }&\theta^{62}&{ 9 }&\theta^{139}\\ 
\theta^{151 }&\theta^{342 }&\theta^{342 }&\theta^{98 }&\theta^{27 }&\theta^{73 }&\theta^{332 }&\theta^{46 }&\theta^{303 }&\theta^{87}&{ \theta }&\theta^{10 }&\theta^{184 }&\theta^{317}\\ 
\theta^{131}&{ 7 }&\theta^{216 }&\theta^{339 }&\theta^{138 }&\theta^{357}&{ 12 }&\theta^{56 }&\theta^{25 }&\theta^{324 }&\theta^{356 }&\theta^{139 }&\theta^{107 }&\theta^{316}\\ 
\theta^{261 }&\theta^{78 }&\theta^{332 }&\theta^{191 }&\theta^{357 }&\theta^{86 }&\theta^{202 }&\theta^{2 }&\theta^{13 }&\theta^{24 }&\theta^{211 }&\theta^{112 }&\theta^{214 }&\theta^{217}\\ 
\theta^{181 }&\theta^{183 }&\theta^{265 }&\theta^{282 }&\theta^{184}&{ 6 }&\theta^{266 }&\theta^{239 }&\theta^{294 }&\theta^{347 }&\theta^{246 }&\theta^{302 }&\theta^{162 }&\theta^{299}\\ 
\theta^{278 }&\theta^{42 }&\theta^{309 }&\theta^{154 }&\theta^{214 }&\theta^{46 }&\theta^{39 }&\theta^{242 }&\theta^{110 }&\theta^{207 }&\theta^{148 }&\theta^{276 }&\theta^{291 }&\theta^{186}\\ 
\theta^{218 }&\theta^{168 }&\theta^{197 }&\theta^{227 }&\theta^{115 }&\theta^{105 }&\theta^{194 }&\theta^{44 }&\theta^{142 }&\theta^{52 }&\theta^{37 }&\theta^{207 }&\theta^{294 }&\theta^{344}\\ 
\end{array}
\right),
$
\\
$
B_{\F_{19^2}}=\left(
\begin{array}{lllllllllllllllll}
\theta^{331 }&\theta^{110 }&\theta^{208 }&\theta^{175 }&\theta^{30 }&\theta^{287 }&\theta^{79 }&\theta^{88 }&\theta^{354 }&\theta^{190 }&\theta^{69 }&\theta^{71 }&\theta^{338 }&\theta^{18}\\
\theta^{74 }&\theta^{87 }&\theta^{17 }&\theta^{132 }&\theta^{322 }&\theta^{324 }&\theta^{214 }&\theta^{225 }&\theta^{335 }&\theta^{187 }&\theta^{257 }&\theta^{328 }&\theta^{118 }&\theta^{95}\\
\theta^{331 }&\theta^{56 }&\theta^{28 }&\theta^{187 }&\theta^{190 }&\theta^{62 }&\theta^{319 }&\theta^{66 }&\theta^{334 }&\theta^{14 }&\theta^{269 }&\theta^{102 }&\theta^{78 }&\theta^{298}\\
\theta^{102 }&\theta^{342 }&\theta^{26 }&\theta^{227 }&\theta^{274 }&\theta^{319 }&\theta^{86}&{ 17 }&\theta^{204 }&\theta^{42 }&\theta^{125 }&\theta^{143 }&\theta^{241 }&\theta^{170}\\
\theta^{54 }&\theta^{52 }&\theta^{251 }&\theta^{164 }&\theta^{253 }&\theta^{342 }&\theta^{282 }&\theta^{266 }&\theta^{277 }&\theta^{211 }&\theta^{92 }&\theta^{298 }&\theta^{221 }&\theta^{341}\\
\theta^{207 }&\theta^{339 }&\theta^{296 }&\theta^{4 }&\theta^{165 }&\theta^{296}&{ 6 }&\theta^{77 }&\theta^{318 }&\theta^{259 }&\theta^{236}&{ 11 }&\theta^{351 }&\theta^{127}\\
\theta^{184 }&\theta^{110 }&\theta^{201 }&\theta^{27 }&\theta^{343 }&\theta^{186 }&\theta^{212 }&\theta^{53 }&\theta^{107 }&\theta^{278 }&\theta^{262 }&\theta^{2 }&\theta^{271 }&\theta^{251}\\
{8 }&\theta^{62 }&\theta^{307 }&\theta^{267 }&\theta^{341 }&\theta^{339 }&\theta^{77}&{ 5 }&\theta^{58 }&\theta^{42 }&\theta^{256 }&\theta^{3 }&\theta^{8 }&\theta^{350}\\
\theta^{81 }&\theta^{237 }&\theta^{198 }&\theta^{312}&{ 9 }&\theta^{276 }&\theta^{169 }&\theta^{124 }&\theta^{264 }&\theta^{292 }&\theta^{319 }&\theta^{296 }&\theta^{308 }&\theta^{28}\\
\theta^{357 }&\theta^{287 }&\theta^{42 }&\theta^{232 }&\theta^{234 }&\theta^{124 }&\theta^{135 }&\theta^{245 }&\theta^{97 }&\theta^{167 }&\theta^{238 }&\theta^{28 }&\theta^{270 }&\theta^{95}\\
\end{array}
\right),
$
\\ 
$
A_{\F_{23^2}}=
\left(
\begin{array}{lllllllllllllllll}
\theta^{119 }&\theta^{208 }&\theta^{150 }&\theta^{189 }&\theta^{476 }&\theta^{484 }&{15 }&\theta^{395 }&\theta^{326 }&\theta^{280 }&\theta^{429 }&\theta^{519 }&\theta^{56 }&\theta^{186 }&\theta^{345 }&\theta^{178 }&\theta^{522}\\
{22 }&\theta^{131 }&\theta^{503 }&\theta^{400 }&\theta^{76 }&\theta^{503 }&\theta^{155 }&\theta^{246 }&\theta^{474 }&\theta^{83 }&\theta^{407 }&\theta^{178 }&\theta^{46 }&\theta^{13}&{ \theta }&\theta^{77 }&\theta^{329}\\ 
\theta^{251 }&\theta^{490 }&\theta^{210 }&\theta^{151 }&\theta^{296 }&\theta^{221 }&\theta^{36 }&\theta^{279 }&\theta^{314 }&\theta^{37 }&\theta^{369 }&\theta^{106 }&\theta^{380 }&\theta^{313 }&\theta^{309 }&\theta^{499 }&\theta^{30}\\ 
\theta^{131 }&\theta^{62 }&\theta^{154 }&\theta^{499 }&\theta^{160 }&\theta^{26 }&\theta^{395 }&\theta^{273 }&\theta^{460 }&\theta^{518 }&\theta^{436 }&\theta^{181 }&\theta^{421 }&\theta^{232 }&\theta^{194 }&\theta^{392 }&\theta^{37}\\ 
\theta^{324 }&\theta^{91 }&\theta^{403 }&\theta^{64 }&\theta^{129 }&\theta^{39 }&\theta^{349 }&\theta^{253 }&\theta^{75 }&\theta^{285 }&\theta^{10 }&\theta^{397 }&\theta^{117 }&\theta^{422 }&\theta^{262 }&\theta^{426 }&\theta^{79}\\ 
\theta^{348 }&\theta^{150 }&\theta^{298 }&\theta^{179 }&\theta^{88 }&\theta^{402 }&\theta^{228 }&\theta^{73 }&\theta^{449 }&\theta^{294 }&\theta^{171 }&\theta^{365 }&\theta^{199 }&\theta^{512 }&\theta^{318}&{ 19 }&\theta^{507}\\ 
\theta^{213 }&\theta^{469 }&\theta^{124 }&\theta^{369 }&\theta^{498 }&\theta^{128 }&\theta^{358 }&\theta^{247 }&\theta^{36 }&\theta^{435 }&\theta^{475 }&\theta^{293 }&\theta^{462 }&\theta^{361 }&\theta^{175 }&\theta^{183 }&\theta^{208}\\ 
\theta^{405 }&\theta^{138 }&\theta^{247 }&\theta^{527 }&\theta^{492 }&\theta^{342 }&\theta^{416 }&\theta^{181 }&\theta^{14 }&\theta^{354 }&\theta^{420 }&\theta^{401 }&\theta^{194 }&\theta^{428 }&\theta^{356 }&\theta^{372 }&\theta^{363}\\ 
\theta^{396 }&\theta^{97 }&\theta^{211 }&\theta^{417 }&\theta^{417 }&\theta^{103 }&\theta^{397 }&\theta^{6 }&\theta^{243 }&\theta^{99 }&\theta^{106 }&\theta^{113 }&\theta^{69 }&\theta^{455 }&\theta^{190 }&\theta^{320 }&\theta^{319}\\ 
\theta^{228 }&\theta^{482 }&\theta^{36 }&\theta^{247 }&\theta^{173 }&\theta^{422}&{ 5 }&\theta^{381 }&\theta^{462 }&\theta^{194 }&\theta^{245 }&\theta^{193 }&\theta^{175 }&\theta^{196 }&\theta^{83 }&\theta^{20 }&\theta^{133}\\ 
\theta^{419 }&\theta^{463 }&\theta^{42 }&\theta^{221 }&\theta^{152 }&\theta^{327 }&\theta^{492 }&\theta^{157 }&\theta^{458 }&\theta^{34 }&\theta^{489 }&\theta^{481 }&\theta^{404 }&\theta^{451 }&\theta^{501 }&\theta^{62 }&\theta^{510}\\ 
\theta^{395 }&\theta^{239 }&\theta^{136 }&\theta^{340 }&\theta^{239 }&\theta^{419 }&\theta^{510 }&\theta^{210 }&\theta^{347 }&\theta^{143 }&\theta^{442 }&\theta^{310 }&\theta^{277 }&\theta^{265 }&\theta^{341 }&\theta^{65 }&\theta^{137}\\ 
\end{array}
\right),
$
\\
$
B_{\F_{23^2}}=
\left(
\begin{array}{lllllllllllllllll}
\theta^{310 }&\theta^{81 }&\theta^{54 }&\theta^{56 }&\theta^{123 }&\theta^{165 }&\theta^{148 }&\theta^{326 }&\theta^{527}&{ 8 }&\theta^{352 }&\theta^{476 }&\theta^{321 }&\theta^{414 }&\theta^{76 }&\theta^{119 }&\theta^{418}\\
\theta^{401 }&\theta^{293 }&\theta^{361 }&\theta^{517 }&\theta^{166 }&\theta^{442 }&\theta^{287 }&\theta^{107 }&\theta^{114 }&\theta^{30 }&\theta^{83 }&\theta^{47 }&\theta^{292 }&\theta^{232 }&\theta^{479 }&\theta^{251 }&\theta^{125}\\
\theta^{254 }&\theta^{309 }&\theta^{403 }&\theta^{500 }&\theta^{193 }&\theta^{345 }&\theta^{34 }&\theta^{74 }&\theta^{445 }&\theta^{396 }&\theta^{375 }&\theta^{344 }&\theta^{29 }&\theta^{522 }&\theta^{31 }&\theta^{131 }&\theta^{262}\\
\theta^{68 }&\theta^{275 }&\theta^{4 }&\theta^{127 }&\theta^{289 }&\theta^{485 }&\theta^{50 }&\theta^{462 }&\theta^{525}&{ 21 }&\theta^{326 }&\theta^{221 }&\theta^{439 }&\theta^{372 }&\theta^{434 }&\theta^{324 }&\theta^{512}\\
\theta^{224 }&\theta^{238 }&\theta^{119 }&\theta^{405 }&\theta^{65 }&\theta^{202 }&\theta^{339 }&\theta^{99 }&\theta^{6 }&\theta^{13 }&\theta^{391 }&\theta^{321 }&\theta^{465 }&\theta^{403 }&\theta^{433 }&\theta^{348 }&\theta^{383}\\
\theta^{132 }&\theta^{260 }&\theta^{476 }&\theta^{386 }&\theta^{209 }&\theta^{372 }&\theta^{450 }&\theta^{254 }&\theta^{37 }&\theta^{416 }&\theta^{486 }&\theta^{252 }&\theta^{431 }&\theta^{295 }&\theta^{330 }&\theta^{213}&{ 20}\\
\theta^{327 }&\theta^{463 }&\theta^{265 }&\theta^{510 }&\theta^{485 }&\theta^{283 }&\theta^{387 }&\theta^{132 }&\theta^{487 }&\theta^{214 }&\theta^{128 }&\theta^{114 }&\theta^{129 }&\theta^{28 }&\theta^{517 }&\theta^{405 }&\theta^{152}\\
{19 }&\theta^{462 }&\theta^{272 }&\theta^{103 }&\theta^{413 }&\theta^{363 }&\theta^{102 }&\theta^{401 }&\theta^{169 }&\theta^{468 }&\theta^{258 }&\theta^{88 }&\theta^{323 }&\theta^{58 }&\theta^{54 }&\theta^{396 }&\theta^{516}\\
\theta^{282 }&\theta^{262 }&\theta^{38 }&\theta^{405 }&\theta^{301 }&\theta^{58 }&\theta^{477 }&\theta^{411 }&\theta^{205 }&\theta^{445 }&\theta^{279 }&\theta^{513 }&\theta^{64 }&\theta^{19 }&\theta^{379 }&\theta^{228 }&\theta^{119}\\
\theta^{104 }&\theta^{50 }&\theta^{232 }&\theta^{37 }&\theta^{469 }&\theta^{340 }&\theta^{38 }&\theta^{124 }&\theta^{81 }&\theta^{347 }&\theta^{122 }&\theta^{400 }&\theta^{355 }&\theta^{154 }&\theta^{206 }&\theta^{419 }&\theta^{116}\\
\theta^{67 }&\theta^{21 }&\theta^{169 }&\theta^{380 }&\theta^{250 }&\theta^{129 }&\theta^{469 }&\theta^{362 }&\theta^{471 }&\theta^{372 }&\theta^{173 }&\theta^{392 }&\theta^{391 }&\theta^{66 }&\theta^{490 }&\theta^{395 }&\theta^{262}\\
\theta^{29 }&\theta^{97 }&\theta^{253 }&\theta^{430 }&\theta^{178 }&\theta^{23 }&\theta^{371 }&\theta^{378 }&\theta^{294 }&\theta^{347 }&\theta^{311 }&\theta^{28 }&\theta^{496 }&\theta^{215 }&\theta^{515}&{ 22 }&\theta^{521}\\
\end{array}
\right),
$
\\
$
A_{\F_{25^2}}=\left(
\begin{array}{llllllllllllllllllll}
\theta^{509 }&\theta^{526 }&\theta^{6 }&\theta^{494 }&\theta^{286 }&\theta^{371 }&\theta^{501 }&\theta^{404 }&\theta^{274 }&\theta^{86 }&\theta^{606 }&\theta^{337 }&\theta^{181 }&\theta^{524 }&\theta^{238 }&\theta^{182 }&\theta^{130 }&\theta^{597}\\ 
\theta^{78 }&\theta^{206 }&\theta^{303 }&\theta^{116 }&\theta^{349 }&\theta^{495 }&\theta^{278 }&\theta^{152 }&\theta^{493 }&\theta^{505 }&\theta^{71 }&\theta^{441 }&\theta^{62 }&\theta^{17 }&\theta^{353 }&\theta^{16 }&\theta^{401 }&\theta^{410}\\ 
\theta^{349 }&\theta^{256 }&\theta^{471 }&\theta^{328 }&\theta^{69 }&\theta^{595 }&\theta^{162 }&\theta^{342 }&\theta^{183 }&\theta^{466 }&\theta^{562 }&\theta^{587 }&\theta^{446 }&\theta^{566 }&\theta^{516 }&\theta^{458 }&\theta^{125 }&\theta^{405}\\ 
\theta^{577 }&\theta^{444 }&\theta^{438 }&\theta^{413 }&\theta^{198 }&\theta^{232 }&\theta^{179 }&\theta^{143 }&\theta^{290 }&\theta^{73 }&\theta^{440 }&\theta^{371 }&\theta^{509 }&\theta^{243 }&\theta^{358 }&\theta^{538 }&\theta^{484 }&\theta^{46}\\ 
\theta^{351 }&\theta^{122 }&\theta^{76 }&\theta^{454 }&\theta^{357 }&\theta^{435 }&\theta^{514 }&\theta^{234 }&\theta^{165 }&\theta^{254 }&\theta^{121 }&\theta^{323 }&\theta^{367 }&\theta^{380 }&\theta^{109 }&\theta^{454 }&\theta^{14 }&\theta^{479}\\ 
\theta^{235 }&\theta^{394 }&\theta^{252 }&\theta^{590 }&\theta^{272}&{ 3 }&\theta^{591 }&\theta^{443 }&\theta^{130 }&\theta^{3  }&\theta^{176 }&\theta^{502 }&\theta^{193 }&\theta^{112 }&\theta^{120 }&\theta^{79 }&\theta^{428 }&\theta^{507}\\ 
\theta^{602 }&\theta^{433 }&\theta^{55 }&\theta^{297 }&\theta^{563 }&\theta^{538 }&\theta^{155 }&\theta^{51 }&\theta^{494 }&\theta^{123 }&\theta^{80 }&\theta^{88 }&\theta^{527 }&\theta^{93 }&\theta^{7 }&\theta^{245 }&\theta^{208 }&\theta^{452}\\ 
{3 }&\theta^{229 }&\theta^{147 }&\theta^{153 }&\theta^{323 }&\theta^{258 }&\theta^{278 }&\theta^{292 }&\theta^{155 }&\theta^{540 }&\theta^{253 }&\theta^{45 }&\theta^{166 }&\theta^{480 }&\theta^{41 }&\theta^{185 }&\theta^{427 }&\theta^{285}\\ 
\theta^{433 }&\theta^{148 }&\theta^{620 }&\theta^{298 }&\theta^{232 }&\theta^{71 }&\theta^{51}&{ 3 }&\theta^{449 }&\theta^{254 }&\theta^{99 }&\theta^{271 }&\theta^{176 }&\theta^{172 }&\theta^{481 }&\theta^{272 }&\theta^{420 }&\theta^{557}\\
\theta^{521 }&\theta^{268 }&\theta^{70 }&\theta^{302 }&\theta^{532 }&\theta^{135 }&\theta^{19 }&\theta^{396 2 }&\theta^{79 }&\theta^{592 }&\theta^{272 }&\theta^{557 }&\theta^{337 }&\theta^{328 }&\theta^{243 }&\theta^{38 }&\theta^{81}\\ 
{3 }&\theta^{230 }&\theta^{64 }&\theta^{250 }&\theta^{410 }&\theta^{309 }&\theta^{581 }&\theta^{238 }&\theta^{582 }&\theta^{284 }&\theta^{291 }&\theta^{15 }&\theta^{432 }&\theta^{592 }&\theta^{367 }&\theta^{588 }&\theta^{507 }&\theta^{197}\\ 
\theta^{525 }&\theta^{251 }&\theta^{100 }&\theta^{318 }&\theta^{432 }&\theta^{261 }&\theta^{205 }&\theta^{250 }&\theta^{498 }&\theta^{160 }&\theta^{570 }&\theta^{412 }&\theta^{249 }&\theta^{541 }&\theta^{72 }&\theta^{77 }&\theta^{302 }&\theta^{116}\\ 
\theta^{128 }&\theta^{225 }&\theta^{38 }&\theta^{271 }&\theta^{417 }&\theta^{200 }&\theta^{74 }&\theta^{415 }&\theta^{427 }&\theta^{617 }&\theta^{363 }&\theta^{608 }&\theta^{563 }&\theta^{275 }&\theta^{562 }&\theta^{323 }&\theta^{332 }&\theta^{452}\\ 
\end{array}
\right),
$
\\
$
B_{\F_{25^2}}=\left(
\begin{array}{llllllllllllllllll}
\theta^{51 }&\theta^{364 }&\theta^{572 }&\theta^{160 }&\theta^{602 }&\theta^{415 }&\theta^{103 }&\theta^{528 }&\theta^{164 }&\theta^{508 }&\theta^{170 }&\theta^{423 }&\theta^{449 }&\theta^{520 }&\theta^{260 }&\theta^{552 }&\theta^{604 }&\theta^{119}\\
\theta^{530 }&\theta^{579 }&\theta^{115 }&\theta^{523 }&\theta^{405 }&\theta^{238 }&\theta^{452 }&\theta^{376 }&\theta^{175 }&\theta^{154 }&\theta^{311 }&\theta^{139 }&\theta^{434 }&\theta^{196 }&\theta^{219 }&\theta^{155 }&\theta^{511 }&\theta^{583}\\
\theta^{25 }&\theta^{380 }&\theta^{324 }&\theta^{488 }&\theta^{502 }&\theta^{379 }&\theta^{87 }&\theta^{414 }&\theta^{173 }&\theta^{56 }&\theta^{601 }&\theta^{101 }&\theta^{326 }&\theta^{42 }&\theta^{97 }&\theta^{48 }&\theta^{368 }&\theta^{187}\\
\theta^{561 }&\theta^{416 }&\theta^{42 }&\theta^{614 }&\theta^{384 }&\theta^{393 }&\theta^{145 }&\theta^{590 }&\theta^{128 }&\theta^{595 }&\theta^{420 }&\theta^{308 }&\theta^{205 }&\theta^{475 }&\theta^{484 }&\theta^{467 }&\theta^{178 }&\theta^{585}\\
\theta^{276 }&\theta^{402 }&\theta^{152 }&\theta^{406 }&\theta^{584 }&\theta^{349 }&\theta^{233 }&\theta^{98 }&\theta^{378}&{ 1 }&\theta^{409 }&\theta^{201 }&\theta^{486 }&\theta^{428 }&\theta^{367 }&\theta^{304 }&\theta^{47 }&\theta^{469}\\
\theta^{583 }&\theta^{615 }&\theta^{12 }&\theta^{390 }&\theta^{250 }&\theta^{423 }&\theta^{63 }&\theta^{60 }&\theta^{384 }&\theta^{124}&{ 4 }&\theta^{64 }&\theta^{253 }&\theta^{583 }&\theta^{194 }&\theta^{61 }&\theta^{382 }&\theta^{212}\\
\theta^{142 }&\theta^{453 }&\theta^{380 }&\theta^{405 }&\theta^{389 }&\theta^{244 }&\theta^{292 }&\theta^{45 }&\theta^{501 }&\theta^{285 }&\theta^{591 }&\theta^{122 }&\theta^{271 }&\theta^{505 }&\theta^{504 }&\theta^{43 }&\theta^{294 }&\theta^{78}\\
\theta^{140 }&\theta^{65 }&\theta^{271 }&\theta^{202 }&\theta^{457 }&\theta^{436 }&\theta^{166 }&\theta^{327 }&\theta^{539 }&\theta^{455 }&\theta^{181 }&\theta^{454 }&\theta^{382 }&\theta^{576 }&\theta^{479 }&\theta^{406 }&\theta^{329 }&\theta^{43}\\
\theta^{26 }&\theta^{116 }&\theta^{560 }&\theta^{146 }&\theta^{307 }&\theta^{557 }&\theta^{411 }&\theta^{254 }&\theta^{250 }&\theta^{546 }&\theta^{404 }&\theta^{97 }&\theta^{143 }&\theta^{116 }&\theta^{603 }&\theta^{434 }&\theta^{121 }&\theta^{131}\\
\theta^{453 }&\theta^{157 }&\theta^{142 }&\theta^{590 }&\theta^{406 }&\theta^{562 }&\theta^{63 }&\theta^{30 }&\theta^{332 }&\theta^{412 }&\theta^{26 }&\theta^{475 }&\theta^{565 }&\theta^{32 }&\theta^{298 }&\theta^{89 }&\theta^{304 }&\theta^{78}\\
\theta^{475 }&\theta^{458 }&\theta^{57 }&\theta^{46 }&\theta^{100 }&\theta^{535 }&\theta^{566 }&\theta^{180 }&\theta^{606 }&\theta^{368 }&\theta^{390 }&\theta^{595 }&\theta^{193 }&\theta^{328 }&\theta^{88 }&\theta^{282 }&\theta^{457 }&\theta^{135}\\
\theta^{41 }&\theta^{554 }&\theta^{432 }&\theta^{35 }&\theta^{254 }&\theta^{303 }&\theta^{613 }&\theta^{133 }&\theta^{206 }&\theta^{92 }&\theta^{420 }&\theta^{409 }&\theta^{387 }&\theta^{30 }&\theta^{458 }&\theta^{146 }&\theta^{100 }&\theta^{362}\\
\theta^{501 }&\theta^{37 }&\theta^{445 }&\theta^{327 }&\theta^{160 }&\theta^{374 }&\theta^{298 }&\theta^{97 }&\theta^{76 }&\theta^{233 }&\theta^{61 }&\theta^{356 }&\theta^{118 }&\theta^{141 }&\theta^{77 }&\theta^{433 }&\theta^{505 }&\theta^{546}\\
\end{array}
\right).
$
}
\end{table*}

%
\end{document}